\def\calA{\mathcal{A}}
\newtheorem{observation}{Observation}
\renewcommand\subsubsection{\@startsection{subsubsection}{3}{\z@}%
                       {-18\p@ \@plus -4\p@ \@minus -4\p@}%
                       {0.5em \@plus 0.22em \@minus 0.1em}%
                       {\normalfont\normalsize\bfseries\boldmath}}
\begin{document}

\title{The Connected $k$-Vertex One-Center Problem on Graphs\thanks{A preliminary version of this paper will appear in \textit{Proceedings of the 19th International Conference and Workshops on Algorithms and Computation (WALCOM 2025)}. This research was supported in part by U.S. National Science Foundation under Grant CCF-2339371.}}
 % This research was supported in part by U.S. Natural Science Foundation under Grant CCF-2339371.
\author{Jingru Zhang
% \thanks{Corresponding author}
}
\authorrunning{J. Zhang}

\institute{Cleveland State University, Cleveland, Ohio 44115, USA\\
\email{j.zhang40@csuohio.edu}}

\maketitle             
\begin{abstract}
We consider a generalized version of the (weighted) one-center problem on graphs. Given an undirected graph $G$ of $n$ vertices and $m$ edges and a positive integer $k\leq n$, the problem aims to find a point in $G$ so that the maximum (weighted) distance from it to $k$ connected vertices in its shortest path tree(s) is minimized. No previous work has been proposed for this problem except for the case $k=n$, that is, the classical graph one-center problem. In this paper, an $O(mn\log n\log mn + m^2\log n\log mn)$-time algorithm is proposed for the weighted case, and an $O(mn\log n)$-time algorithm is presented for the unweighted case, provided that the distance matrix for $G$ is given. When $G$ is a tree graph, we propose an algorithm that solves the weighted case in $O(n\log^2 n\log k)$ time with no given distance matrix, and improve it to $O(n\log^2 n)$ for the unweighted case.
% an $O(n\log^2 n\log k)$-time algorithm is proposed for the weighted case and improve it to $O(n\log^2 n)$ for the unweighted case. 
\keywords{Algorithms\and Data Structures\and Facility Locations\and $k$-Vertex One-Center\and Graphs\and Connectivity}
\end{abstract}

\section{Introduction}\label{sec:intro}
The one-center problem is a classical problem in facility locations which aims to compute the best location of a single facility on a graph network to serve customers such that the maximum (weighted) distance between the facility and all customers is minimized~\cite{ref:GoldmanMi72,ref:HandlerMi73,ref:KarivAnC79,ref:LanAl99,ref:FoulA106,ref:BenMosheEf07,ref:WangCo17,ref:HuCo23}. Due to the resource limits, it is quite natural to consider the partial version where the facility serves only $k$ neighboring customers with the minimized maximum 
transportation cost, that is, this connected $k$-vertex one-center problem. 

Let $G=(V,E)$ be an undirected graph of $n$ vertices and $m$ edges where each vertex $v\in V$ has a weight $w_v>0$ and each edge $e\in E$ is of length $l(e)> 0$. For any two vertices $u,v\in V$, let $e(u,v)$ be the edge between them. By considering $e(u,v)$ as a line segment of length $l(e(u,v))$, we can talk about “points” on it. Formally, a point $p=(u,v,t(p))$ on edge $e(u,v)$ is characterized by being located at a distance of $t(p)\leq l(e(u,v))$ from the vertex $u$. We say that $p$ is interior of $e(u,v)$ if $0<t(p)<l(e(u,v))$. For any two points $x,y$ of $G$, the distance $d(x,y)$ between them is defined as the length of their shortest path(s) $\pi(x,y)$ in $G$.

A point $x$ of $G$ may have multiple shortest path trees, and denote their set by $G(x)$. Let $T$ represent a tree graph and $T^k$ denote a tree of size $k$. A subgraph of $G$ is called a \textit{$k$-subtree} if and only if it is a tree of size $k$. Let $G^k(x)$ be the set of all distinct $k$-subtrees of trees in $G(x)$. Denote by $V(G')$ the subset of all vertices in a subgraph $G'$ of $G$.

Define $\phi(x,G)$ as $\min_{T^k\in G^k(x)}\max_{v\in V(T^k)} w_vd(v,x)$. The problem aims to compute a point $x^*$ on $G$, called the \textit{partial center}, so as to minimize $\phi(x,G)$. Note that $x^*$ might be interior of an edge in $G$. 

% the distance of $x^*$ along the edge to its either incident vertex is larger than zero. 

If $G$ is a tree graph, every point of $G$ has a unique shortest path tree, i.e., $G$ itself. Clearly, in this situation, the problem is equivalent to the problem of finding a $k$-subtree of minimum (weighted) radius where $x^*$ is the (weighted) center of an optimal $k$-subtree. 

When $k=n$, $x^*$ is exactly the (weighted) center of $G$ with respect to $V$. Provided that the distance matrix is given, the center can be found in $O(mn\log n)$ time and the unweighted case can be addressed in $O(mn+n^2\log n)$ time~\cite{ref:KarivAnC79}. Additionally, for $G$ being a tree, the (weighted) one-center problem can be solved in $O(n)$ time~\cite{ref:MegiddoLi83}. 

As far as we are aware, however, this connected $k$-vertex one-center problem has not received any attention even when $G$ is a tree. In this paper, we solve this problem in $O(mn\log n\log mn+ m^2\log n\log mn)$ time and address the unweighted case more efficiently in $O(mn\log n)$ time with the given distance matrix. For $G$ being a tree, an $O(n\log^2n\log k)$-time algorithm is proposed for the weighted case and an $O(n\log^2n)$-time approach is presented for the unweighted case. 

\subsection{Related Work}
As introduced above, when $k = n$, Kariv and Hamiki~\cite{ref:KarivAnC79} proposed an $O(mn\log n)$-time algorithm for the weighted case and an $O(mn+ n^2\log n)$-time algorithm for the unweighted case, provided that the distance matrix is given. Megiddo~\cite{ref:MegiddoLi83} solved the (weighted) one-center problem on trees in linear time by the prune-and-search techniques. 

Although this partial version of the one-center problem has not been studied before, some other partial variants of the general $p$-center problem have been explored in literature. Megiddo et al.~\cite{ref:MegiddoTh83} gave an $O(n^2p)$-time algorithm to solve the maximum coverage problem which aims to place $p$ facilities on tree networks to cover maximum customers within their covering range. Berman et al.~\cite{ref:BermanTh94} considered another variant that places $p$ facilities to minimize the maximum distance between them and customers within their covering range. See other partial versions of the $p$-center problem and  variances~\cite{ref:BermanTh02,ref:BermanTh09,ref:DaskinTw10}.
 
Another most related problem is the graph maximum $t$-club problem where the goal is to find the maximum-cardinality subgraph of diameter no more than value $t$. Bourjolly et al.~\cite{ref:BourjollyAn02} revealed the NP-Hardness of this problem. Asahiro et al.~\cite{ref:AsahiroOp18} proposed an approximation algorithm of $O(n^\frac{1}{2})$ ratio, which was proved to be optimal for any $t>2$. Additionally, a constant ratio was achieved for this problem on unit disk graphs~\cite{ref:Abu-AffashAp21}. 

\subsection{Our Approach}
Denote by $\lambda^*$ the minimized objective value $\phi(x^*,G)$. We show that $\lambda^*$ belongs to a finite set that includes the following values w.r.t. every edge $e$: the one-center objective values of every two distinct vertices with the \textit{local} constraint where their centers must be on $e$, and the (weighted) distance of every vertex to its \textit{semicircular} point on $e$ which is the point on $e$ such that the vertex has two shortest paths to it without any common intermediate vertex. For the unweighted version or the version where $G$ is a tree, however, we observe that $\lambda^*$ is only relevant to those constrained one-center objective values w.r.t. each edge. 
 
For the weighted version, by forming this finite set as a set of $y$-coordinates of intersections between $O(mn)$ lines, we can adapt the line arrangement search technique~\cite{ref:ChenAn13} to find $\lambda^*$ with the assistance of our feasibility test that determines for any given value $\lambda$, whether $\lambda\geq\lambda^*$. Obviously, $\lambda\geq\lambda^*$ if there exists a point on $G$ such that it covers a $k$-subtree in its shortest path tree(s) under $\lambda$ (that is, the weighted distance from it to each vertex of a $k$-subtree in its shortest path tree(s) is no more than $\lambda$). Otherwise, $\lambda<\lambda^*$, so $\lambda$ is not feasible. 

Our feasibility test is motivated by a critical observation: $\lambda$ is feasible if and only if there exists a point in $G$ such that the largest \textit{self-inclusive} subtree covered by it in its shortest path tree(s) is of size at least $k$. Hence, our algorithm examines every edge $e$ of $G$ to decide the existence of such a point by algorithmically constructing a function that computes for every point on $e$ the size of the largest self-inclusive subtree covered by the point, which is one of our main contributions. By determining the breakpoints of these functions, the feasibility of $\lambda$ can be known in $O(mn\log n + m^2\log n)$ time. 

For the unweighted version, our key observation, that $\lambda^*$ is decided by the smallest one among the $k$-th shortest path lengths of all vertices, implies that finding the \textit{local} partial center on every edge is equivalent to a geometry problem that computes the lowest vertex on the \textit{$k$-th level} of $O(n)$ $x$-monotone polygonal chains of complexity $O(1)$. We develop an $O(n\log n)$-time algorithm for this geometry problem, which is our another main contribution, and the unweighted version is thus addressed in $O(mn\log n)$ time. 

When $G$ is a tree, we develop several data structures so that for any given point of $G$, the counting query on the largest self-inclusive subtree covered by the point can be answered in $O(\log n\log k)$ time. In addition, instead of examining every edge, we observe that only $O(n)$ points on the tree need to be examined in order to decide the feasibility of $\lambda$. These lead an $O(n\log n\log k)$-time feasibility test for the tree version. Then, by implicitly forming the set of the one-center objective values for all pairs of vertices, $\lambda^*$ can be computed in $O(n\log^2n\log k)$ time. For the unweighted case, $O(n\log n)$ sorted sets of intervertex distances are implicitly formed by employing the tree decomposition technique~\cite{ref:MegiddoAn81}, so $\lambda^*$ can be found in $O(n\log^2n)$ time.

% The rest of the paper is organized as follows. We introduce some notation in Section 2. In
% Section 3, we describe our algorithmic scheme for the center-coverage problem but leave the implementation
% details in the subsequent two sections. Specifically, the algorithm for computing all
% medians p∗
% i is given in Section 4, and in the same section we also propose a connector-bounded
% centroid decomposition of T, which is repeatedly used in the paper and may be interesting in its
% own right. The data structures used in our algorithmic scheme are given in Section 5. We finally
% solve the k-center problem in Section 6.

\section{Preliminary}\label{sec:preliminary} 
When all vertex weights are same, without loss of generality, we assume their weights are one. Let $T^*$ represent an optimal $k$-subtree in shortest path tree(s) of $x^*$ so that $\phi(x^*,G) = \max_{v\in V(T^*)}w_vd(x^*,v)$. Although the optimal solution may not be unique, the following observation helps figure out one of them.

\begin{observation}\label{obs:x^*interiorofT^*}
    There exists an optimal solution where $x^*$ is a point of $T^*$. For the unweighted case, $T^*$ is induced by the $k$ closest vertices of $x^*$ on $G$. 
\end{observation}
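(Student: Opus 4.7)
The plan addresses the two assertions separately. For the first, I would start with any optimal pair $(x^*,T^*)$ and show that if $x^*\notin T^*$, one can slide the center onto $T^*$ without increasing the objective. Pick any shortest path tree $T_{x^*}$ of $x^*$ that contains $T^*$, and let $v_0$ be the vertex of $V(T^*)$ whose path from $x^*$ in $T_{x^*}$ is shortest. Since the vertices of $T^*$ form a connected subtree of $T_{x^*}$, the unique path in $T_{x^*}$ from $x^*$ to any $v\in V(T^*)$ first reaches $T^*$ at $v_0$, so $d(x^*,v)=d(x^*,v_0)+d(v_0,v)$. Subpaths of shortest paths are shortest paths, so the $T_{x^*}$-paths from $v_0$ to the rest of $V(T^*)$ are shortest paths in $G$, which can be extended to a full shortest path tree of $v_0$ containing $T^*$; that is, $T^*\in G^k(v_0)$. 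Because $w_v d(v_0,v)\le w_v d(x^*,v)$ for every $v\in V(T^*)$, one obtains $\phi(v_0,G)\le \phi(x^*,G)$, and optimality of $x^*$ forces equality, yielding an optimal solution at $v_0\in V(T^*)\subseteq T^*$.

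For the unweighted assertion, I would prove that $\phi(x,G)$ equals the $k$-th smallest distance $d_k(x)$ from $x$ to a vertex of $G$. The lower bound $\phi(x,G)\ge d_k(x)$ is immediate from pigeonhole: any $k$-subtree contains $k$ vertices, at least one of which must be at distance $\ge d_k(x)$ from $x$. For the matching upper bound, fix any shortest path tree $T_x$ rooted at $x$ and let $S$ consist of all vertices at distance strictly less than $d_k(x)$ together with enough vertices at distance exactly $d_k(x)$ to reach size $k$. Because edge lengths are positive, every $T_x$-ancestor of a vertex in $S$ lies strictly closer to $x$ than $d_k(x)$ and is therefore already in $S$. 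Ancestor-closure in the rooted tree $T_x$ implies $S$ induces a connected subtree, which is a valid element of $G^k(x)$ whose maximum vertex distance to $x$ is exactly $d_k(x)$. Applied at $x^*$, this identifies $T^*$ as the subtree induced by the $k$ closest vertices of $x^*$ (for any admissible tie-breaking).

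The main obstacle is the containment step $T^*\in G^k(v_0)$ in the first part: it must be argued carefully that rerooting from $x^*$ to $v_0$ keeps $T^*$ inside some shortest path tree. The subpath property gives each individual path, but one also has to note that a compatible family of shortest paths issuing from a common root can always be assembled into a single shortest path tree, so that $T^*$ survives as a subtree after the move. Once this is in hand, both parts are essentially bookkeeping with the distance identity $d(x^*,v)=d(x^*,v_0)+d(v_0,v)$ and the ancestor-closure property of shortest path trees.
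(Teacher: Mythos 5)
Your proposal is correct and takes essentially the same approach as the paper: relocate the center to the closest vertex of $T^*$ (your $v_0$, the paper's $v'$) via the decomposition $d(x^*,v)=d(x^*,v_0)+d(v_0,v)$ and the fact that $T^*$ remains a $k$-subtree of a shortest path tree of that vertex, and derive the unweighted claim from ancestor-closure of the $k$ closest vertices in a shortest path tree. The only difference is cosmetic: the paper phrases the first part as a contradiction using $d(v',x^*)>0$, while you argue directly with a weak inequality and let optimality force equality.
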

\begin{proof}
    Suppose no such optimal solutions exist. Let $x^*$ and $T^*$ be any optimal solution. Consider the closest vertex $v'$ in $V(T^*)$ to $x^*$. In fact, $T^*$ is a $k$-subtree in a shortest path tree of $v'$. Let $T'$ be  the shortest path tree of $x^*$ containing $T^*$. It is because for each vertex $v\in V(T^*)$, the (shortest) path between $v$ and $x^*$ on $T'$ is the concatenation of the subpath between $v$ and $v'$ and the subpath between $v'$ and $x^*$; both subpaths are their shortest paths in $G$. Due to $d(v',x^*)>0$, we have $\max_{v\in V(T^*)}w_vd(x^*,v) > \max_{v\in V(T^*)}w_vd(v',v)$. Further, due to $\phi(v',G)\leq\max_{v\in V(T^*)}w_vd(v',v)$, a contradiction where $\phi(x^*, G) > \phi(v',G)$ occurs. Hence, the first statement is proved. 

    Moreover, for any point $x$ in $G$, $x$ and its $k$ closest vertices in $V$ must induce a connected $k$-subtree in $x$'s shortest path tree(s). This implies the second statement. \qed
\end{proof}

\begin{figure}
    \centering
    \includegraphics[scale=0.7]{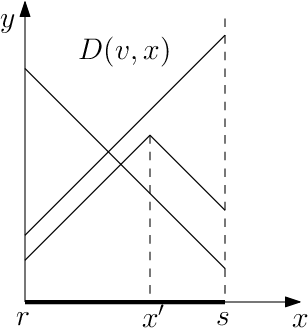}
    \caption{Illustrating the three cases of the (weighted) distance function $D(v,x)$ for $x\in e(r,s)$: As $x$ moves from $r$ to $s$ on $e$, at rate $w_v$, $D(v,x)$ increases, or decreases, or first increases until $v$'s semicircular point $x'$ and then decreases.}
    \label{fig:distancefun}
\end{figure}

% \begin{figure}[h]
%         \centering
%         \begin{minipage}{0.48\linewidth}
%         \centering        
%         \includegraphics[scale=0.8]{distance.eps}
%         \caption{Illustrating the three cases of the (weighted) distance function $D(v,x)$ for $x\in e(r,s)$: As $x$ moves from $r$ to $s$ on $e$, at rate $w_v$, $D(v,x)$ increases, or decreases, or first increases until $v$'s semicircular point $x'$ and then decreases.}
%         \label{fig:distancefun}
%         \end{minipage}
%         \begin{minipage}{0.48\linewidth}
%         \centering
%         \includegraphics[scale=0.45]{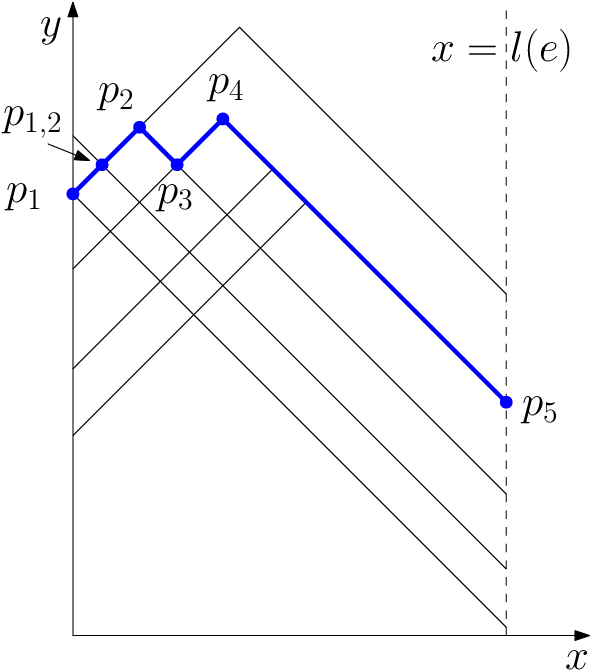}
%         \caption{Illustrating the $6$-level closure (the heavy chain) of seven functions $D(v,x)$, which has five vertices $p_1, p_2, p_3, p_4, p_5$. It cannot turns at the intersection $p_{1,2}$.}
%         \label{fig:klevel}
%         \end{minipage}
%         \end{figure}

Let $e(r,s)$ be an arbitrary edge of $G$. Let $x$ be any point on $e(r,s)$, which is at distance $t(x)$ to $r$ along $e(r,s)$. For each vertex $v\in V$, we use $D(v,x)$ to represent the (weighted) distance from $v$ to $x$. Fig.~\ref{fig:distancefun} shows the three cases of $D(v,x)$. Denote by $I_{e(r,s)}(y,y')$ the path (segment) along $e(r,s)$ between two points $y,y'$ on $e(r,s)$. If there is a point $x'\in e(r,s)$ so that $d(v,r) + t(x') = d(v,s) + l(e(r,s)) -t(x')$, then $v$ has two shortest paths to $x'$ respectively containing $I_{e(r,s)}(r,x')$ and $I_{e(r,s)}(x',s)$. We refer to $x'$ as the semicircular point of $v$ on $e(r,s)$ and also say that $v$ is a \textit{neutral} vertex of $x'$. Notice that every vertex has at most one semicircular point on each edge of $G$. 

Denote by $\bar{V}(x)$ the set of all neutral vertices of a point $x$ of $G$. W.r.t. $x\in e(r,s)$, $V$ can be partitioned into three subsets $\bar{V}(x)$, $V_r(x)$, and $V_s(x)$: $V_r(x)$ is composed of all vertices in $V-\bar{V}(x)$ satisfying the condition $d(v,r) + t(x) < d(v,s) + l(e(r,s)) - t(x)$, and $V_s(x)$ contains all remaining in $V-\bar{V}(x)$ with $d(v,r) + t(x) > d(v,s) + l(e(r,s)) - t(x)$. Indeed, $V_r(x)$ (resp., $V_s(x)$) contains all vertices in $V$ whose shortest paths to $x$ each contains $I_{e(r,s)}(r,x)$ (resp., $I_{e(r,s)}(s,x)$) on $e(r,s)$. 

% In the situation where $x$ is interior of $e(r,s)$, we consider $x$ as a dummy vertex on its shortest path tree(s) (but not $G$). Let the vertex containing $x$ be the root of its shortest path tree(s). We have the following useful properties.  

When $x$ is interior of the edge, we consider $x$ as a dummy vertex in its shortest path tree(s) (but not $G$) except when we say its shortest path tree(s) in $G$. Let the (real or dummy) vertex containing $x$ be the root of its shortest path tree(s). We have the following useful properties.  

\begin{observation}\label{obs:shortestpathtreeproperty}
    In any shortest path tree of $x$, every vertex of $\bar{V}(x)$ is either a leaf or an internal node whose descendants are all in $\bar{V}(x)$. Further, for any two points $y, y'$ on $e(r,s)$ with $t(y)< t(y')$, $\bar{V}(y')\in V_r(y)$ and $V_r(y')\in V_r(y)$, and symmetrically, $\bar{V}(y)\in V_s(y')$ and $V_s(y)\in V_s(y')$. 
\end{observation}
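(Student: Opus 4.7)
The plan is to establish the two parts of Observation 2 separately. Both reduce to direct manipulation of the defining (in)equalities for $\bar{V}(\cdot)$, $V_r(\cdot)$, and $V_s(\cdot)$, combined with the elementary fact that along any shortest path tree $T$ rooted at $x$, the $T$-distance from any vertex to $x$ coincides with its graph distance to $x$.

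For the first part, I would fix a shortest path tree $T$ of $x$, a vertex $v\in \bar{V}(x)$ that is internal in $T$, and an arbitrary descendant $u$ of $v$ in $T$. Since the $T$-path from $u$ to $x$ passes through $v$, the additive identity $d(u,x)=d(u,v)+d(v,x)$ holds. Using the defining equality $d(v,x)=d(v,r)+t(x)=d(v,s)+l(e(r,s))-t(x)$ for $v\in\bar{V}(x)$, the chain
\[
d(u,x)=d(u,v)+d(v,r)+t(x)\geq d(u,r)+t(x)\geq d(u,x)
\]
is forced to collapse to an equality by the triangle inequality $d(u,r)\leq d(u,v)+d(v,r)$ and the fact that $d(u,r)+t(x)$ is the length of a concrete $u$-to-$x$ walk. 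Hence $d(u,x)=d(u,r)+t(x)$, and the symmetric chain through $s$ yields $d(u,x)=d(u,s)+l(e(r,s))-t(x)$. Thus $u$ has two shortest paths to $x$, one containing $I_{e(r,s)}(r,x)$ and the other containing $I_{e(r,s)}(x,s)$, so $u\in \bar{V}(x)$, which proves the subtree claim.

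For the second part, I would argue directly from the three-way partition conditions. Take $v\in \bar{V}(y')$, so that $d(v,r)+t(y')=d(v,s)+l(e(r,s))-t(y')$. Replacing $t(y')$ by the smaller $t(y)$ strictly decreases the left side and strictly increases the right side, yielding $d(v,r)+t(y)<d(v,s)+l(e(r,s))-t(y)$, i.e.\ $v\in V_r(y)$; this gives $\bar{V}(y')\subseteq V_r(y)$. The same substitution applied to the strict inequality defining $V_r(y')$ only strengthens it, so $V_r(y')\subseteq V_r(y)$. The remaining inclusions $\bar{V}(y)\subseteq V_s(y')$ and $V_s(y)\subseteq V_s(y')$ follow by identical manipulations after swapping the roles of $r$ and $s$ (equivalently, replacing $t(\cdot)$ by $l(e(r,s))-t(\cdot)$).

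The main obstacle is the first part: one must confirm that some shortest $u$-to-$x$ path actually uses $I_{e(r,s)}(r,x)$ and, independently, that another such path uses $I_{e(r,s)}(x,s)$, rather than merely bounding $d(u,x)$ from one side. The squeeze above is the crucial step, since the ancestor relation in $T$ pins $d(u,x)$ to $d(u,v)+d(v,x)$ from above, the triangle inequality bounds the bypass through $r$ from above, and the neutrality identity $d(v,x)=d(v,r)+t(x)$ forces these bounds to match $d(u,x)$. Once this is in place, the second part is routine algebra on the partition (in)equalities.
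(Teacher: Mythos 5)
Your proposal is correct. For the second part (the monotonicity of $\bar{V}(\cdot)$, $V_r(\cdot)$, $V_s(\cdot)$ along the edge) you do essentially what the paper does: the paper phrases it as "all shortest paths of such a vertex to $y$ contain only $I_{e(r,s)}(r,y)$," which is just the path-language rendering of your substitution $t(y')\mapsto t(y)$ in the defining (in)equalities, and your algebraic version is a faithful, if anything more explicit, translation (note only that membership in $V_r(y)$ also requires $v\notin\bar{V}(y)$, which your strict inequality immediately supplies). For the first part, the paper's proof is little more than an assertion — it says that a descendant of a neutral vertex "must be in $\bar{V}(x)$ by the definition" — whereas you actually justify it: using that the tree path from a descendant $u$ to the root forces $d(u,x)=d(u,v)+d(v,x)$, combining this with the neutrality identities $d(v,x)=d(v,r)+t(x)=d(v,s)+l(e(r,s))-t(x)$, and squeezing against the walk bounds $d(u,x)\leq d(u,r)+t(x)$ and $d(u,x)\leq d(u,s)+l(e(r,s))-t(x)$ via the triangle inequality to conclude $d(u,r)+t(x)=d(u,x)=d(u,s)+l(e(r,s))-t(x)$, i.e.\ $u\in\bar{V}(x)$. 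So your write-up is not a different method so much as a completion of the detail the paper elides, and it buys a verifiable proof of the descendant claim at no extra cost.
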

\begin{proof}
    Let $T$ be any shortest path tree of $x$. Suppose a vertex $v\in\bar{V}(x)$ has a descendant $v'$ in $T$ that is not in $\bar{V}(x)$. Due to $v\in \bar{V}(x)$, by the definition, $v'$ must be in $\bar{V}(x)$, which causes a contradiction. This proves the first statement. 
    
    On account of $t(y)<t(y')$, each $v\in\bar{V}(y')$ has all its shortest paths to $y$ containing only $I_{e(r,s)}(r,y)$ of $e(r,s)$ and so does each $v\in V_r(y')$. Hence, $\bar{V}(y')\in V_r(y)$ and $V_r(y')\in V_r(y)$. Likewise, we have $\bar{V}(y)\in V_s(y')$ and $V_s(y)\in V_s(y')$. \qed
\end{proof}
 
For the case $k=n$, as proved in~\cite{ref:KarivAnC79}, $\lambda^*$ is in the set of the (weighted) one-center objective value of every two vertices by constraining their center lying on each edge of $G$, that is, the set of the $y$-coordinates of the intersections between every two functions $D(v,x)$ w.r.t. each edge where their slopes are of opposite signs. Denote by $\Lambda$ this set.  Additionally, let $\Lambda'$ be the set of values $D(v,x)$ of each $v\in V$ at all $O(mn)$ semicircular points in $G$. Clearly, $|\Lambda| = O(mn^2)$ and so is $|\Lambda'|$. For the general case $k\leq n$, the following observation holds. 

\begin{observation}\label{obs:lambda_belongs_twosets}
    $\lambda^*\in \Lambda\cup\Lambda'$. 
\end{observation}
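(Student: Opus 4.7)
The plan is to localize the optimality condition to the edge containing $x^*$ and then analyze the one-dimensional upper envelope of the distance functions restricted to that edge. Write $h_v(x) := w_v D(v,x)$ and $F_{T^*}(x) := \max_{v \in V(T^*)} h_v(x)$. By Observation~\ref{obs:x^*interiorofT^*}, choose $T^*$ so that $x^* \in T^*$; I focus on the harder case where $x^*$ lies interior to an edge $e(r,s)$, which forces $e(r,s) \in T^*$ and $r,s \in V(T^*)$. On $e(r,s)$, each $h_v$ is the pointwise minimum of the two linear functions $w_v(d(v,r)+t(x))$ and $w_v(d(v,s)+l(e(r,s))-t(x))$, so it is $\wedge$-shaped with its single breakpoint at $v$'s semicircular point (when that point lies on $e(r,s)$) and of slope $\pm w_v$ elsewhere.

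Next I would show that $T^*$ locally witnesses $\phi$ near $x^*$: for any $v \in V(T^*)$, the unique $T^*$-path from $v$ to $x^*$ goes through $r$ or $s$, and it remains a shortest $v$-to-$x$ path in $G$ as long as $x$ does not cross $v$'s semicircular point on $e(r,s)$. Hence, on at least one (open, one-sided) neighborhood of $x^*$, the tree $T^*$ remains in $G^k(x)$, which gives $\lambda^* = \phi(x^*,G) \leq \phi(x,G) \leq F_{T^*}(x)$ and therefore $F_{T^*}(x) \geq F_{T^*}(x^*) = \lambda^*$ on that side. The symmetric argument applies on the other side unless $x^*$ is the semicircular point of some $v \in V(T^*)$ that achieves the max at $x^*$; but then $\lambda^* = h_v(x^*)$ is precisely a value of $h_v$ at its semicircular point, so $\lambda^* \in \Lambda'$ and we are done.

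In the remaining case, $x^*$ is a genuine two-sided local minimum of the piecewise-linear function $F_{T^*}$ on $e(r,s)$. The breakpoints of $F_{T^*}$ are (i) semicircular points of individual $h_v$ and (ii) crossings of pairs $(h_u,h_v)$, and a local minimum requires left-slope $\leq 0$ and right-slope $\geq 0$. A short case analysis on these two breakpoint types shows that, outside the semicircular configuration already handled by $\Lambda'$, the local minimum forces two functions $h_u,h_v$ to meet at $x^*$ with slopes of opposite sign, yielding $\lambda^* \in \Lambda$.

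Finally, the case $x^* \in V$ reduces to the edge analysis by viewing $x^*$ as the endpoint $t=0$ of each incident edge and using that $F_{T^*}$ must be non-decreasing into each such edge; the same crossing-versus-peak dichotomy applies. The principal obstacle is the first step: since $\phi$ is itself a minimum over the $x$-varying family $G^k(x)$, it cannot be identified with $F_{T^*}$ globally. The trick is to fix $T^*$ at $x^*$ and argue that it locally remains a valid witness, paying close attention to the failure mode at semicircular points --- which is exactly the configuration absorbed by $\Lambda'$ in the statement.
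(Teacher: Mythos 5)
Your overall strategy is the same local-perturbation argument the paper uses (fix $T^*$, argue it remains a valid witness for nearby $x$ on the edge, then force either an opposite-slope crossing, giving $\Lambda$, or a semicircular value, giving $\Lambda'$), but your case analysis around semicircular points that coincide with $x^*$ has a genuine hole. First, the claim that ``on at least one one-sided neighborhood of $x^*$ the tree $T^*$ remains in $G^k(x)$'' does not follow from the per-vertex statement: if some vertex of $V(T^*)$ whose $T^*$-path goes through $r$ has its semicircular point exactly at $x^*$, and some vertex routed through $s$ does as well, then the fixed tree $T^*$ ceases to be a subtree of any shortest path tree of $x$ on \emph{both} sides. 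Second, even when only one side is blocked, the blocking vertex need not attain the maximum at $x^*$; your escape clause (``unless $x^*$ is the semicircular point of some $v\in V(T^*)$ that achieves the max'') therefore does not fire, you cannot establish $F_{T^*}(x)\geq\lambda^*$ on the blocked side, and consequently neither the two-sided local-minimum analysis (which yields $\Lambda$) nor your $\Lambda'$ conclusion is available for that configuration. In short, the dichotomy ``both sides valid, or the max-achiever peaks at $x^*$'' is not exhaustive.

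The gap is repairable, and the repair is exactly how the paper opens its proof: $\Lambda'$ is defined as the set of values $D(v,x)$ of \emph{every} vertex $v$ at \emph{every} semicircular point (not just each vertex's weighted distance to its own semicircular point, which is the narrower reading your argument implicitly uses). Hence, whenever $x^*$ coincides with any semicircular point, $\lambda^*=\max_{v\in V(T^*)}D(v,x^*)\in\Lambda'$ immediately, and one may assume from the start that $x^*$ lies in an open interval free of semicircular points; on such an interval $T^*$ stays a subtree of a shortest path tree on both sides, and your local-minimum/crossing analysis (which is correct) goes through and yields $\lambda^*\in\Lambda$. Two smaller remarks: $D(v,x)$ already denotes the weighted distance, so $h_v:=w_vD(v,x)$ double-weights (harmless but should be $h_v:=D(v,x)$); and your treatment of the case $x^*\in V$ is only a sketch --- the same semicircular caveat reappears there, and the opposite-slope crossing must be exhibited on a single incident edge, which needs the analogue of the reduction above.
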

\begin{proof}
    Clearly, it is sufficient to show that if $x^*$ is not at any semicircular point then $\lambda^*\in\Lambda$. On the one hand, $x^*$ is interior of an edge, say $e(r,s)$, in $T^*$. Denote by $x_r$ (resp., $x_s$) the closest semicircular point to $x^*$ in the interval $[r, x^*)$ (resp., $(x^*,s]$) on $e(r,s)$, and if no semicircular points are in $[r, x^*)$ (resp., $(x^*,s]$), then we let $x_r$ (resp., $x_s$) be $r$ (resp., $s$). 
    
    Assume that $\lambda^*$ is not caused by the intersection of two functions $D(v,x)$ where their slopes are of opposite signs, i.e., $\lambda^*\notin\Lambda$. By Observation~\ref{obs:shortestpathtreeproperty}, for all points $x\in (x_r, x_s)$ on $e(r,s)$, $\bar{V}(x)=\emptyset$ and both sets $V_r(x)$ and $V_s(x)$ remain same. This implies that for any point in $(x_r, x_s)$, $T^*$ is a subtree of its one shortest path tree in $G$. In addition, due to the assumption, vertices of $V(T^*)$ whose values $D(v,x^*)$ equal to $\lambda^*$ are all in either $V_r(x^*)$ or $V_s(x^*)$. Hence, there must be a point in $(x_r, x^*)\cup (x^*, x_s)$ such that the maximum (weighted) distance of (vertices in) $V(T^*)$ to this point is smaller than that of $V(T^*)$ to $x^*$, i.e., $\lambda^*$. This means that $\phi(x, G)$ achieves a smaller objective value than $\lambda^*$ at a point in $(x_r, x^*)\cup (x^*, x_s)$ on $e(r,s)$, which leads a contradiction. 

    On the other hand, $x^*$ is at a vertex of $V(T^*)$, say $v'$. Suppose $x^*$ is not a center of any two vertices, i.e., $\lambda^*\notin \Lambda$. Because $x^*$ is not a semicircular point. There must be an edge, say $e'$, in $T^*$ incident to $v'$ with the following properties: Let $x''$ be the semicircular point on $e'$ closest to $v'$ and $x''$ 
    is set as the other incident vertex $v''$ of $e'$ if no semicircular points are in $e'$. For any point in $[x^*, x'')$, $T^*$ is a subtree of its one shortest path tree in $G$, and importantly, by considering $v'$ as the root of $T^*$, vertices in $V(T^*)$ that decide $\lambda^*$ are all in the subtree of $T^*$ rooted at $v''$. Hence, a smaller objective value than $\lambda^*$ can be achieved at any point in $(x^*, x'')$ on $e'$, which leads a contradiction. 
    
    % For any point in the interval $[x^*, x'')$ on $e'$, $V(T^*)$ and the dummy vertex created for the point induce a subtree of its one shortest path tree; importantly, if considering $v'$ as the root of $T^*$, vertices in $V(T^*)$ that decide $\lambda^*$ are all in the subtree of $T^*$ rooted at the other incident vertex of $e'$. Hence, a smaller objective value than $\lambda^*$ can be achieved at any point in this open segment on $e'$ between $v'$ and its closest semicircular point, which leads a contradiction. 
    % For $x^*$ being at a vertex of $e(r,s)$, e.g., $r$, if $x^*$ is neither a local center of two vertices nor a semicircular point, there must be another edge $e'$ in $T^*$ incident to $r$ so that $T^*$ is in a shortest path tree of any point between $r$ and the semicircular point closest to $r$ on $e'$; importantly, moving $x^*$ toward that semicircular point on $e'$ reduces the objective value. It is because all vertices of $T^*$ whose $D(v,v')$'s equal to $\lambda^*$ are in the subtree of $T^*$ rooted at $v'$ including only $e'$ among all edges incident to $v'$. This leads a contradiction. 
    
    Therefore, $\lambda^*\in \Lambda\cup\Lambda'$. \qed
    \end{proof}

For the unweighted case or the (weighted) tree version, we have the following observation. 

\begin{observation}\label{obs:lambda_belongs_oneset}
    For the unweighted case or the (weighted) tree version, $\lambda^*\in \Lambda$, and $x^*$ is the (weighted) center of $T^*$ with respect to $V(T^*)$. 
\end{observation}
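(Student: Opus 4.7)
The plan is to strengthen Observation~\ref{obs:lambda_belongs_twosets} by ruling out $\lambda^*\in\Lambda'\setminus\Lambda$ in each of the two settings, and then argue separately that $x^*$ must coincide with the $1$-center of $V(T^*)$.

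For the tree version, I first observe that no vertex has an interior semicircular point. Since in a tree the shortest path between any pair of points is unique, for every edge $e(r,s)$ and every vertex $v$ exactly one of $d(v,s)=d(v,r)+l(e(r,s))$ and $d(v,r)=d(v,s)+l(e(r,s))$ holds; substituting into the defining equation $d(v,r)+t(x')=d(v,s)+l(e(r,s))-t(x')$ forces $t(x')\in\{0,l(e(r,s))\}$, so every ``semicircular'' coincidence degenerates to a vertex. The case analysis in the proof of Observation~\ref{obs:lambda_belongs_twosets} then carries through with $x_r=r$, $x_s=s$, and $x''=v''$ (no genuine semicircular fallback is needed), and the resulting contradiction forces $\lambda^*\in\Lambda$.

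For the unweighted case I proceed by contradiction, assuming $\lambda^*\in\Lambda'\setminus\Lambda$. Then there is a neutral vertex $v\in\bar{V}(x^*)$ on some edge $e(r,s)$ with $d(v,x^*)=\lambda^*$, and by Observation~\ref{obs:x^*interiorofT^*} we have $v\in V(T^*)$. The hypothesis $\lambda^*\notin\Lambda$ precludes any pair $u_1\in V_r(x^*)$, $u_2\in V_s(x^*)$ both attaining $\lambda^*$ at $x^*$, since their opposite-slope distance functions would intersect at height $\lambda^*$ and so contribute to $\Lambda$. Hence the vertices of $V(T^*)$ that attain $\lambda^*$ besides $v$ all lie on a single side of $x^*$, say in $V_r(x^*)$. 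Moving $x^*$ slightly toward $r$ strictly decreases $d(v,\cdot)$ (the point now lies on one of $v$'s two shortest paths) and strictly decreases $d(u,\cdot)$ for every such $u$; by continuity a sufficiently close point $y$ on the $r$-side satisfies $\max_{u\in V(T^*)}d(u,y)<\lambda^*$. Letting $V'$ denote the $k$ closest vertices of $y$, the remark following Observation~\ref{obs:x^*interiorofT^*} gives that $V'\cup\{y\}$ spans a $k$-subtree in a shortest path tree of $y$, so $\phi(y,G)\le\max_{u\in V'}d(u,y)\le\max_{u\in V(T^*)}d(u,y)<\lambda^*$, a contradiction.

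For the second assertion, any $y$ achieving a smaller max (weighted) distance to $V(T^*)$ than $x^*$ leads to a contradiction in the same spirit: in the tree case $y$ automatically lies on the subtree of $G$ spanning $V(T^*)$ (a standard property of vertex $1$-centers on trees), so $T^*$ remains a valid $k$-subtree at $y$; in the unweighted case the $k$ closest vertices of $y$ form an even better $k$-subtree by the argument above. The main obstacle will be the perturbation step for the unweighted case: one must verify via Observation~\ref{obs:shortestpathtreeproperty} that infinitesimally moving toward $r$ drops $d(v,\cdot)$ and the distances to all $V_r$-side optimum-attainers simultaneously, while no non-optimum vertex's distance jumps above $\lambda^*$ before the strict decrease takes effect.
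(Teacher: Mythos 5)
Your tree argument and your treatment of the second assertion are fine and essentially reproduce the paper's route (no semicircular points exist on a tree, so Observation~\ref{obs:lambda_belongs_twosets} already gives $\lambda^*\in\Lambda$; and a point with smaller maximum weighted distance to $V(T^*)$ would yield a smaller objective value, via the $k$-th-closest-vertex characterization in the unweighted case and via the fact that $G$ itself is the unique shortest path tree in the tree case). Two remarks on the unweighted graph case, though. First, the opening inference is a non sequitur: from $\lambda^*\in\Lambda'\setminus\Lambda$ you cannot conclude that some neutral vertex of $x^*$ \emph{attains} $\lambda^*$ (membership in $\Lambda'$ only says $\lambda^*$ equals some vertex's distance to some semicircular point somewhere in $G$), nor that such a vertex lies in $V(T^*)$. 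This is repairable, since your perturbation argument does not actually need a neutral attainer: neutral attainers, if any, have their distance decrease in both directions, so only the one-sidedness of the non-neutral attainers matters.

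The genuine gap is that your perturbation argument is formulated only for $x^*$ interior to an edge $e(r,s)$, where the tri-partition $\bar{V}(x^*),V_r(x^*),V_s(x^*)$ gives a clean two-sided picture and ``moving toward $r$'' shortens every shortest path that funnels through $I_{e(r,s)}(r,x^*)$. When $x^*$ sits at a vertex $v'$ of degree at least three --- a case the paper treats separately and at length --- there is no single ``$r$-side'': the vertices of $V(T^*)$ attaining $\lambda^*$ may reach $v'$ through different incident edges, and moving into any one incident edge increases the distance to attainers whose shortest paths arrive through the others, so no direction is obviously improving and your continuity argument does not apply. Ruling this out requires the paper's additional argument: either all attainers lie in one subtree of $T^*$ at $v'$ (then move into it), or, using $\lambda^*\notin\Lambda$, every pairwise conflict between attainers in different branches must be ``resolved'' by a second shortest path, which forces the existence of an incident edge $e''$ such that every attainer not connected through $e''$ has its semicircular point on $e''$ at $v'$; only then does moving along $e''$ away from $v'$ decrease all attainers' distances simultaneously. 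Without handling this vertex-located case, the proof that $\lambda^*\in\Lambda$ in the unweighted case is incomplete.
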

\begin{proof}
    We first discuss the unweighted case. Assume $x^*$ is on an edge $e'$ of $T^*$. Suppose $\lambda^*\notin\Lambda$, that is, $x^*$ is not the (local) center of any two vertices of $V(T^*)$ on $e'$. When $x^*$ is interior of $e'$, either all vertices of $V(T^*)$ whose distances to $x^*$ equal to $\lambda^*$ have their semicircular points on $e'$ at $x^*$, or their semicircular points on $e'$ are not at $x^*$ but these vertices are all in the same subtree generated by removing $x^*$ from $T^*$. Although $x^*$ might be a semicircular point, moving $x^*$ along $e'$ toward any incident vertex of $e'$ reduces the maximum distance of $x^*$ to $V(T^*)$ in the former case while moving $x^*$ along $e'$ toward that subtree in the later case does so. It follows that there exists a point on $e'$ whose distance to its $k$-th closest vertex of $V$ is less than $\lambda^*$, which means that a point of $e'$ leads a smaller objective value than $\lambda^*$. Hence, a contradiction occurs. 
    
    % By Observation~\ref{obs:x^*interiorofT^*}, there exists a point on $e'$ that leads a smaller objective value than $\lambda^*$, which leads a contradiction. 
    
    Otherwise, $x^*$ is at an incident vertex $v'$ of $e'$. Vertices whose distances to $v'$ equal to $\lambda^*$ are all in the same subtree generated by removing $v'$ from $T^*$, so moving $x^*$ toward this subtree along the edge connecting it and $v'$ reduces the maximum distance of $x^*$ to $V(T^*)$. Or these vertices are in different subtrees (generated by removing $v'$ from $T^*$) but $v'$ must have an incident edge $e''$ so that all these vertices excluding those connected with $v'$ by $e''$ have their semicircular points on $e''$ at $v'$. Hence, moving $x^*$ along $e''$ away from $v'$ reduces the maximum distance of $x^*$ to $V(T^*)$, which causes a contradiction. Hence, $\lambda^*\in\Lambda$.
   
    Furthermore, suppose $x^*$ is not the center of $T^*$ w.r.t. $V(T^*)$. Because the maximum distance of $V(T^*)$ to their center on $T^*$ is smaller than $\lambda^*$. The distance from their center to its $k$-th closest vertex in $V$ is smaller than $\lambda^*$. Due to Observation~\ref{obs:x^*interiorofT^*}, their center leads a smaller objective value than $\lambda^*$. Thus, the statement is true for the unweighted case. 

    When $G$ is a tree graph, every vertex has no semicircular points. Hence, $\lambda^*\in\Lambda$. Notice that unlike the graph version, for every pair of vertices, their center on their (unique) path of $G$ is the only intersection between their distance functions $D(v,x)$ where their functions $D(v,x)$ are of slopes with opposite signs, which implies $\Lambda = O(n^2)$. In addition, since the unique shortest path tree of every point is $G$ itself, $\phi(x,G)$ at the center of $T^*$ w.r.t. $V(T^*)$ is not larger than $\max_{v\in V(T^*)}w_vd(v,x^*)$. Thus, the observation holds. \qed
\end{proof}

As in~\cite{ref:DearingAM74}, the (weighted) diameter $W(G)$ of $G$ is defined as $\max_{v,u\in V} \frac{w_vw_ud(u,v)}{w_v + w_u}$; when all weights are same, clearly, $W(G)$ is exactly one half of the diameter of $G$. Regarding to our problem, the observation below reveals the equivalency between our problem and the problem of finding the $k$-subtree of minimum diameter in a graph. 

\begin{observation}\label{obs:smallestdiameterequivalency}
For the unweighted case, $T^*$ is of minimum diameter among all $k$-subtrees of $G$ if and only if $W(T^*) = \lambda^*$. When $G$ is a tree, $T^*$ is a $k$-subtree of minimum diameter. 
\end{observation}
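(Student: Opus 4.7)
The plan is to derive both parts of the observation from two intermediate facts: (I) $\lambda^* = W(T^*)$ always holds under the hypotheses of the observation, and (II) $\lambda^* \leq W(T)$ for every $k$-subtree $T$ of $G$.

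To establish (I), I would invoke Observation~\ref{obs:lambda_belongs_oneset}, which tells us that $x^*$ is the (weighted) center of $T^*$ with respect to $V(T^*)$. Since $T^*$ is a $k$-subtree of a shortest-path tree of $x^*$, the distance $d(v,x^*)$ is the same whether measured in $T^*$ or in $G$, so $\lambda^*$ is exactly the (weighted) 1-center radius of the tree $T^*$. This radius equals $W(T^*)$: in the unweighted case it is half the diameter of $T^*$, and in the weighted tree case it is the classical tree-center identity of Dearing and Francis~\cite{ref:DearingAM74}. For (II), I would take $c_T$ to be the (weighted) center of $T$, so every $v\in V(T)$ satisfies $w_v d_T(v,c_T)\leq W(T)$. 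Since $T\subseteq G$, $d_G(v,c_T)\leq d_T(v,c_T)$, with equality when $G$ is a tree. Thus $V(T)$ supplies $k$ vertices within weighted distance $W(T)$ of $c_T$ in $G$, and a $k$-subtree of $c_T$'s shortest-path tree(s) witnessing this bound can be extracted either from the $k$ closest vertices (unweighted case, using Observation~\ref{obs:x^*interiorofT^*}) or from $T$ itself (when $G$ is a tree and its unique shortest-path tree is $G$). Either way, $\phi(c_T,G)\leq W(T)$, so $\lambda^*\leq W(T)$.

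With (I) and (II) in hand, the conclusions follow quickly. The ($\Leftarrow$) direction of the iff reads $W(T^*)=\lambda^*\leq W(T)$ by (II) for every $k$-subtree $T$, so $T^*$ minimizes $W$, which in the unweighted case is half the diameter; the ($\Rightarrow$) direction is trivial by (I). The tree claim combines (I) and (II) to give $W(T^*)=\lambda^*\leq W(T)$ for every $k$-subtree $T$, so $T^*$ is a $k$-subtree of minimum (weighted) diameter. The main technical care sits in (I), namely ensuring that the distances used in $W(T^*)$, in the 1-center radius of $T^*$, and in $\phi(x^*,G)$ all refer to the same values; this is where the fact that $T^*$ lies inside a shortest-path tree of $x^*$ is crucial, since otherwise paths in $T^*$ might be strictly longer than the corresponding $G$-distances.
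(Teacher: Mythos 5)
Your proposal is correct and takes essentially the same route as the paper: your fact (I) is exactly the paper's appeal to Observation~\ref{obs:lambda_belongs_oneset} plus the tree-center identity ($\Phi(c,T)=W(T)$ for a tree $T$ with center $c$, via Dearing--Francis), and your fact (II) reproduces the paper's chain $\lambda^*=\phi(x^*,G)\leq\phi(c',G)\leq\phi(c',T')=\Phi(c',T')=W(T')$ for an arbitrary $k$-subtree $T'$. The only presentational difference is that you obtain the witness $k$-subtree at $c_T$ directly from Observation~\ref{obs:x^*interiorofT^*} (or from $T$ itself in the tree case) instead of invoking the paper's subgraph inequality $\phi(x',G')\geq\phi(x',G)$, which is an equally valid phrasing of the same step.
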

\begin{proof} 
    We first prove the second statement. For $G$ being a tree graph, due to Observation~\ref{obs:lambda_belongs_oneset}, finding $T^*$ is equivalent to finding 
    a $k$-subtree with the smallest optimal one-center objective value among all $k$-subtrees of $G$, that is, finding a $k$-subtree with minimum (weighted) diameter. 
    % Additionally, the weighted radius of a tree is one half of its weighted diameter~\cite{ref:DearingAM74}. 

    For $G$ being a general graph whose vertices are of weights one, we define $\Phi(x, G)$ at any point $x\in G$ as $\max_{v\in V} d(v,x)$. Clearly, $\Phi(x, G)$ is the objective value of the one-center problem at $x$, which equals to $\phi(x, G)$ for $k=n$. Note that $W(G)\leq\Phi(x, G)$~\cite{ref:DearingAM74}. Let $G'$ be any subgraph of at least $k$ vertices in $G$. We observe that at any point $x'$ of $G'$, $\phi(x', G')\geq\phi(x', G)$. 
    Because in the unweighted case, $\phi(x', G')$ (resp., $\phi(x', G)$) 
    equals to the distance of $x'$ to its $k$-th closest vertex in $G'$ (resp., $G$). Clearly, this observation also holds for the weighted tree version. With these properties, the first statement is proved as follows. 
    
    Suppose $W(T^*) = \phi(x^*, G) = \lambda^*$. 
    Let $T'$ be any $k$-subtree of $G$, and denote by $c'$ its center. Clearly, $W(T^*) = \phi(x^*, G) \leq \phi(c', G)$. 
    By the above properties, we have $\phi(c',G) \leq \phi(c',T') = \Phi(c',T')$. Since $T'$ is a tree, $\Phi(c',T') = W(T')$. Hence, $W(T^*)\leq W(T')$. 
 
    On the other hand, suppose $T^*$ is of minimum diameter among all $k$-subtrees of $G$. By Observation~\ref{obs:lambda_belongs_oneset}, we have $W(T^*) = \Phi(x^*, T^*)$. Due to $\Phi(x^*, T^*) = \phi(x^*,G)$, $W(T^*) = \phi(x^*, G) =\lambda^*$. \qed
    % However, this observation does not apply to the weighted version on a general graph. See Figure~\ref{fig:counterexample} for an example where $T^*$ is not of minimum weighted diameter when $W(T^*) = \lambda^*$. 
%     \begin{figure}
%     \centering
% \includegraphics[width=0.3\textwidth]{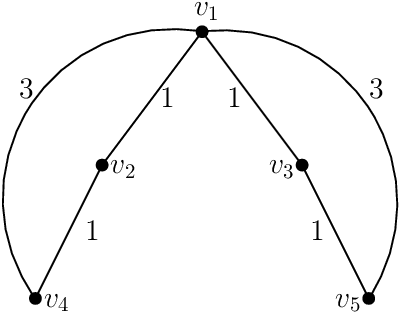}
%         \caption{Illustrating that Observation~\ref{obs:smallestdiameterequivalency} does not apply to the general weighted case by this example where $k=3$, $w_{v_1} =100$, $w_{v_2}=w_{v_3} =50$, 
%         $w_{v_4}=w_{v_5} =1$, and all edge lengths are as labelled.  Clearly, $x^* = v_1$, $\lambda^*=50$, and $T^*$ is induced by $\{v_1, v_2, v_3\}$, so $W(T^*) = 50$; however, the subgraph induced by $\{v_1, v_2, v_3\}$ is a $3$-subtree of minimum weighted diameter which is $3$.}
%         \label{fig:counterexample}
% \end{figure}
\end{proof}

Furthermore, the following corollary can be utilized to determine whether any given $k$-subtree $T^k$ of $G$ is of minimum diameter among all its $k$-subtrees. 

\begin{corollary}\label{cor:proveTsmallestdiameter}
    For the unweighted case or the weighted tree version, $T^k$ is of minimum diameter if and only if $W(T^k) = \lambda^*$.
\end{corollary}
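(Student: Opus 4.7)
The plan is to reduce the corollary to Observation~\ref{obs:smallestdiameterequivalency} by establishing two facts about the optimal pair $(x^*,T^*)$ in the two regimes of interest: (i) $W(T^*)=\lambda^*$, and (ii) $T^*$ itself attains the minimum diameter over all $k$-subtrees of $G$. Once (i) and (ii) are in hand, both directions of the corollary follow at once by comparing $W(T^k)$ with $W(T^*)$.

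For (i), I would invoke Observation~\ref{obs:lambda_belongs_oneset}, which gives that $x^*$ is the (weighted) center of $T^*$ with respect to $V(T^*)$. Because $T^*$ is a shortest path tree of $x^*$, the distances in $G$ from $x^*$ to vertices of $V(T^*)$ coincide with the corresponding distances in $T^*$, so $\lambda^*=\phi(x^*,G)=\max_{v\in V(T^*)}w_v d(v,x^*)$ equals the (weighted) eccentricity of $x^*$ in the tree $T^*$. Since $x^*$ is the 1-center of the tree $T^*$, this eccentricity is exactly $W(T^*)$ by the standard tree identity used throughout the paper, yielding (i). For (ii), the weighted tree version is explicitly stated in Observation~\ref{obs:smallestdiameterequivalency}, and the unweighted graph version follows by feeding (i) into the ``$W(T^*)=\lambda^*\Rightarrow T^*$ is of minimum diameter'' direction of the same observation.

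With (i) and (ii), the corollary is immediate. If $T^k$ is of minimum diameter, then by (ii) we have $W(T^k)=W(T^*)$, which by (i) equals $\lambda^*$. Conversely, if $W(T^k)=\lambda^*$, then (i) gives $W(T^k)=W(T^*)$, and by (ii) this common value is the minimum diameter, so $T^k$ attains it. I do not expect any real obstacle: the corollary says nothing genuinely new beyond Observation~\ref{obs:smallestdiameterequivalency}, and the only point that requires care is making explicit that $T^*$ itself realizes the minimum diameter in both regimes, so that ``$W(T^k)=W(T^*)$'' and ``$T^k$ has the minimum diameter'' can be used interchangeably in the argument.
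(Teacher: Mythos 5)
Your proof is correct, and it is organized a bit differently from the paper's. The paper proves the corollary by re-running, for an arbitrary competitor $T'$ with center $c'$ (and, in the converse direction, for $T^k$ with its center $c''$), the same inequality chains used inside the proof of Observation~\ref{obs:smallestdiameterequivalency}: $W(T^k)=\phi(x^*,G)\leq\phi(c',G)\leq\phi(c',T')=\Phi(c',T')=W(T')$, and $W(T^k)\leq W(T^*)=\Phi(x^*,T^*)=\phi(x^*,G)\leq\phi(c'',G)\leq\phi(c'',T^k)=\Phi(c'',T^k)=W(T^k)$. You instead treat Observation~\ref{obs:smallestdiameterequivalency} as a black box and pivot everything on the single equality $W(T^*)=\lambda^*$, which you obtain from Observation~\ref{obs:lambda_belongs_oneset} (that $x^*$ is the weighted center of $T^*$ w.r.t.\ $V(T^*)$) together with the Dearing--Francis identity and the fact that $G$-distances from $x^*$ to $V(T^*)$ agree with distances inside $T^*$ (which needs $x^*\in T^*$, i.e.\ Observation~\ref{obs:x^*interiorofT^*} -- you correctly flag this). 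Both proofs draw on the same circle of facts; yours is more modular and avoids repeating the comparison $\phi(c',G)\leq\phi(c',T')=\Phi(c',T')=W(T')$ against an arbitrary $T'$, at the cost of needing that $T^*$ itself attains the minimum diameter in both regimes, which you correctly supply from the tree statement of Observation~\ref{obs:smallestdiameterequivalency} and, in the unweighted graph case, from its ``if'' direction fed with $W(T^*)=\lambda^*$. The only implicit convention you share with the paper is that ``minimum diameter'' is measured via $W$ (half the diameter in the unweighted case, the weighted diameter for trees), which is consistent with the paper's usage, so there is no gap.
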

\begin{proof}
    The properties in the proof of Observation~\ref{obs:smallestdiameterequivalency} supports the following proof for this corollary. Suppose $W(T^k) = \phi(x^*, G)$. Let $T'$ be any $k$-subtree of $G$ and $c'$ be its center. We then show $W(T^k)\leq W(T')$. 
    Clearly, $W(T^k) = \phi(x^*, G) \leq \phi(c', G) \leq \phi(c', T')$. 
    Due to $\phi(c', T') = \Phi(c', T')$, $ W(T^k)\leq\Phi(c', T') = W(T')$. Hence, $T^k$ is of minimum diameter. 
    % among all $k$-subtrees. 
    % Hence, $W(T^k)\leq W(T')$. 
    
    For the other direction, suppose $T^k$ is of minimum diameter. Hence, $W(T^k)\leq W(T^*)$. Let $c''$ be the center of $T^k$. 
    Due to $W(T^*) = \Phi(x^*, T^*)$, $W(T^k)\leq\Phi(x^*, T^*)\leq\phi(x^*, G)\leq \phi(c'', G)$. 
    Moreover, $\phi(c'', G) \leq \phi(c'', T^k) = \Phi(c'', T^k)$. 
    It follows that $W(T^k)\leq \phi(x^*, G)\leq\Phi(c'',T^k) = W(T^k)$. 
    Hence, $W(T^k) = \phi(x^*, G) =\lambda^*$. \qed
\end{proof}

In the following, when we talk about a point $x$ on an edge, we use $x$ to denote $t(x)$ for convenience. 
% If $x$ is interior of the edge, we consider $x$ as a dummy vertex in its shortest path tree(s) (not $G$) except when we say its shortest path tree(s) in $G$. 
In addition, for any point $p\in\mathbb{R}^2$, we use $x(p)$ and $y(p)$ to denote its $x$- and $y$-coordinates, respectively; if the context is clear, for a point $x$ on the $x$-axis, we directly use $x$ to denote its $x$-coordinate. 

% \section{Solving the Problem on a Graph}\label{sec:generalversion}
% In this section, we shall present our two algorithms that respectively solve the weighted and unweighted versions for $G$ being an undirected graph. 

% \subsection{The Problem on a Vertex-Weighted Graph}\label{sec:generalweightedversion}
% In order to introduce our algorithm, we first give the result for the feasibility test that determines for any given value $\lambda$, whether $\lambda\geq\lambda^*$, and defer its proof in Section~\ref{sec:generalweightedfeasibility}. 

\section{Solving the Problem on a Vertex-Weighted Graph}\label{sec:generalweightedversion}
In this section, we shall present our algorithm for the weighted version on undirected graphs. To introduce our algorithm, we first give the result for the feasibility test that determines for any given value $\lambda$, whether $\lambda\geq\lambda^*$, and defer its proof in Section~\ref{sec:generalweightedfeasibility}. 

\begin{lemma}\label{lem:generalweightedfeasibility}
Given any value $\lambda$, we can decide whether $\lambda\geq\lambda^*$ in 
$O(mn\log n + m^2\log n)$ time. 
\end{lemma}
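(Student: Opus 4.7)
The plan is to realize the characterization sketched in Section~\ref{sec:intro}: $\lambda\geq\lambda^*$ if and only if some point $x\in G$ hosts, in one of its shortest path trees, a connected self-inclusive subtree of size at least $k$, that is, a subtree containing $x$ whose every vertex $v$ satisfies $w_vd(v,x)\leq\lambda$. I would process each edge $e(r,s)$ separately and build, as $x$ sweeps from $r$ to $s$, a piecewise-constant function $S_e(x)$ equal to the size of the largest such self-inclusive subtree rooted at $x$; feasibility then reduces to checking whether $\max_{e}\max_{x\in e}S_e(x)\geq k$.

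For a fixed edge $e(r,s)$, the events at which $S_e$ can change come in two types. First, coverage events: by the shape of $D(v,\cdot)$ in Fig.~\ref{fig:distancefun}, which is monotone or single-peaked with peak at the semicircular point, the set $\{t\in[0,l(e(r,s))] : w_vD(v,t)\leq\lambda\}$ has at most two boundary points, contributing at most two coverage toggles per vertex. Second, structural events: by Observation~\ref{obs:shortestpathtreeproperty}, the partition $V_r(x),V_s(x),\bar{V}(x)$ only changes when $x$ crosses the semicircular point of some vertex, at which point that vertex (together with its subtree in the shortest path tree of $x$) migrates from $V_s$ to $V_r$. Each such event is computable in $O(1)$ time from the given distance matrix, and there are $O(n)$ of them on $e$, sortable in $O(n\log n)$ time.

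Between consecutive events, the set of covered vertices is fixed and the structure of the shortest path tree of $x$ is unchanged, so $S_e(x)$ is constant. I would sweep through the events and maintain $S_e(x)$, equivalently the size of the connected component of covered vertices containing $x$ in the shortest path tree, under insertions, deletions, and localized subtree re-parentings. Using a standard dynamic forest data structure such as Euler-tour trees or link-cut trees, these updates and component-size queries cost $O(\log n)$ each, yielding $O(n\log n)$ per edge and $O(mn\log n)$ over the $m$ edges.

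The principal obstacle I anticipate is the structural-event handling at semicircular points. When a vertex migrates from $V_s$ to $V_r$, its reattachment edge into the shortest path tree rooted at $x$ changes, and the self-inclusion property must be re-verified along the new parent chain, which can propagate through $O(m)$ tree edges in the worst case since a new parent vertex might itself be uncovered and force a further re-routing. Charging these propagations to pairs $(e,e')$ of edges of $G$ (each edge participates as a re-routing edge at most once per other edge) accounts for the additional $O(m^2\log n)$ term in the bound. With this accounting in place, scanning all $m$ edges, reporting the maximum $S_e(x)$, and comparing it with $k$ decides $\lambda\geq\lambda^*$ within $O(mn\log n+m^2\log n)$ time.
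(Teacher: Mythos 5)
Your high-level plan matches the paper's: reduce feasibility to the existence of a point whose largest covered self-inclusive subtree has size at least $k$, sweep each edge, and track a piecewise-constant size function whose $O(n)$ events per edge are coverage boundaries (where $D(v,x)=\lambda$) and semicircular points. The gap is in how you count. The subtree in question is not ``the component of covered vertices containing $x$ in the shortest path tree'': shortest path trees are not unique, and a vertex must be counted as long as \emph{some} shortest path from it to $x$ avoids uncovered (heavy) vertices --- in the paper's terms, it is excluded only if it is a descendant of $H(x)$, i.e., \emph{every} shortest path to $x$ meets $H(x)$ (Observations~\ref{obs:decision} and~\ref{obs:computinglargesttree}). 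If you maintain one particular shortest path tree with Euler-tour or link-cut trees, a covered vertex can be severed from $x$'s component merely because its parent in that specific tree happens to be heavy, even though an alternative shortest-path predecessor is covered; your count then undershoots and the test can wrongly declare $\lambda$ infeasible. Fixing this forces you to work with all shortest-path predecessors (the paper's sets $P$ and $C$ over the shortest-path DAG), or to maintain a ``covered-parent-preferred'' tree, and it is exactly the cost of that maintenance that your analysis does not bound: the claim that re-routing propagation can be charged to pairs $(e,e')$ ``at most once per other edge'' is asserted, not argued, and nothing in your event structure prevents a vertex's eligibility from being re-examined many times during a single sweep.

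The paper sidesteps re-parenting altogether, and this is the idea your proposal is missing. For a sweep along $e(r,s)$ it splits the count into $f_r(e,x)$ and $f_s(e,x)$, the numbers of vertices reachable from $x$ heavy-free through the $r$-side and through the $s$-side respectively, plus an overlap correction at semicircular points. Each of these is monotone non-increasing along the sweep --- once a vertex stops being light by $r$ it never recovers --- so each can be computed by a one-pass, amortized deletion process (Lemma~\ref{lem:finddescendants}) that removes every vertex and touches every adjacency $O(1)$ times, giving $O(n+m\log n)$ per edge and $O(mn\log n+m^2\log n)$ overall without any dynamic-connectivity machinery. (A small slip, for the record: by Observation~\ref{obs:shortestpathtreeproperty}, as $x$ moves from $r$ to $s$ vertices migrate from $V_r(x)$ to $V_s(x)$, not the other way around.) As it stands, your proposal has the right skeleton but neither a correct counting invariant nor a substantiated charging argument, so the stated bound is not established.
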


Our idea for computing $\lambda^*$ is as follows: First, we compute a set of $O(mn)$ lines in the $x,y$-coordinate plane generated by extending line segments on graphs of functions $y=D(v,x)$ of all vertices w.r.t. every edge of $G$ and including vertical lines through incident vertices of every edge on $x$-axis. Clearly, the set $\Lambda\cup \Lambda'$ belongs to the set of $y$-coordinates of all intersections between these obtained lines, and $\lambda^*$ is the $y$-coordinate of the lowest intersection with a feasible $y$-coordinate. Next, we compute $\lambda^*$ by finding that lowest one among all intersections by utilizing the line arrangement search technique~\cite{ref:ChenAn13} with the assistance of our feasibility test in Lemma~\ref{lem:generalweightedfeasibility}. The line arrangement search technique is reviewed as follows. 

Suppose $L$ is a set of $N$ lines in the plane. Denote by $\calA(L)$ the arrangement of lines in $L$. In $\calA(L)$, every intersection of lines defines a vertex of $\calA(L)$ and vice versa. Let $v_1(L)$ be the lowest vertex of $\calA(L)$ whose $y$-coordinate $y(v_1(L))$ is a feasible value, and let $v_2(L)$ be the highest vertex of $\calA(L)$ whose $y$-coordinate $y(v_2(L))$ is smaller 
than $y(v_1(L))$. By the definitions, $y(v_2(L))<\lambda^*\leq y(v_1(L))$ and no vertices in $\calA(L)$ have $y$-coordinates in range $(y(v_2(L)), y(v_1(L)))$. 
Lemma~\ref{lem:linearrangementtechnique} was given in~\cite{ref:ChenAn13} to find the two vertices. 

\begin{lemma}\label{lem:linearrangementtechnique}
     \textup{\cite{ref:ChenAn13}} Both vertices $v_1(L)$ and $v_2(L)$ can be computed in $O((N+\tau)\log N)$ time, where $\tau$ is the running time of the feasibility test. 
\end{lemma}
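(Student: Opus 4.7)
The plan is to adapt the classical pairing-and-median scheme used in Chen~\cite{ref:ChenAn13}, which is in the spirit of Cole's and Megiddo's parametric search ideas: maintain a shrinking set of ``active'' lines whose arrangement still contains both $v_1(L)$ and $v_2(L)$, and in each round invoke the feasibility test on the median of the intersection $y$-coordinates of a perfect matching of the active lines, using the outcome to discard a constant fraction of them.

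First I would fix the invariant: after each round, the surviving subset $L'\subseteq L$ satisfies $v_1(L')=v_1(L)$ and $v_2(L')=v_2(L)$; equivalently, the open horizontal strip $S=(y(v_2(L)),y(v_1(L)))$ is still free of vertices of $\calA(L')$. Next I would describe one round: pair the active lines arbitrarily into $\lfloor |L'|/2\rfloor$ disjoint pairs, compute the $y$-coordinate of each pair's intersection, find the median $y^\ast$ of these values in $O(|L'|)$ time with a deterministic selection algorithm, and run the feasibility test on $y^\ast$ in $O(\tau)$ time. If $y^\ast$ is feasible (so $\lambda^\ast\leq y^\ast$), then for every pair whose crossing lies at $y$-coordinate $\geq y^\ast$ one of the two lines cannot be incident to a vertex inside the strip $S$: above a pair's crossing the ``upper'' and ``lower'' roles of the two lines are fixed, so the line that is dominated in the relevant half-plane can be dropped. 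The symmetric rule handles the infeasible case and pairs whose crossings lie below $y^\ast$. Since at least half of the pairs have crossings on the ``correct'' side of $y^\ast$, we discard at least $|L'|/4$ lines per round.

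The main obstacle will be formalising this discard rule and verifying that it preserves the invariant, i.e., that the line thrown away cannot be incident to $v_1(L)$ or $v_2(L)$. The key geometric observation is that, above (respectively below) a pair's crossing point, one of the two lines is dominated in its half-plane, so it cannot carry any vertex of $\calA(L')$ inside the strip $S$; throwing it away therefore does not remove $v_1(L)$ or $v_2(L)$ from the arrangement of survivors. Once this is established, the recurrence $T(N)=T(3N/4)+O(N+\tau)$ solves to $O((N+\tau)\log N)$. After $O(\log N)$ rounds only $O(1)$ lines remain, and both $v_1(L)$ and $v_2(L)$ are then read off by inspecting the constant-size residual arrangement together with one final feasibility test on each of its $O(1)$ vertices.
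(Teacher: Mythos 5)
Your scheme breaks at the discard step, and the invariant you propose to maintain does not capture what must be preserved. Since the vertices of $\calA(L')$ are a subset of the vertices of $\calA(L)$, the open strip $S=(y(v_2(L)),y(v_1(L)))$ is \emph{automatically} free of vertices of $\calA(L')$ for every $L'\subseteq L$; this tells you nothing. What you actually need is that the two specific lines defining $v_1(L)$ (and the two defining $v_2(L)$) are never removed, and your pruning rule cannot certify that. Knowing that $y^\ast$ is feasible only tells you $\lambda^*\leq y^\ast$, hence that $v_1(L)$ and $v_2(L)$ lie on or below the horizontal line $y=y^\ast$. For a pair $(\ell_a,\ell_b)$ whose own crossing lies above $y^\ast$, both lines still cross every other non-parallel line, possibly at $y$-coordinates just above $\lambda^*$; there is no sense in which one of them is ``dominated'' below the crossing, because the sought vertex is an arbitrary interior vertex of the arrangement, not a point of an upper or lower envelope. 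Concretely, with three lines $a,b,c$, pair $a$ with $b$ so that $a\cap b$ lies far above $\lambda^*$ while $v_1(L)=a\cap c$ lies just above $\lambda^*$: your rule may discard $a$, after which the lowest feasible vertex of the survivors has a strictly larger $y$-coordinate, and the algorithm (whose whole purpose in this paper is to output $y(v_1(L))=\lambda^*$ exactly) returns a wrong value. The envelope-domination argument you are importing is the one that makes Megiddo/Dyer-style pruning valid for two-dimensional LP, where the optimum lies on the envelope; it does not transfer to searching for the vertex of $\calA(L)$ that brackets $\lambda^*$.

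Note also that the paper does not prove this lemma; it is quoted from~\cite{ref:ChenAn13}. The technique there (and the standard route to the bound $O((N+\tau)\log N)$) does not discard lines at all: it keeps all $N$ lines and shrinks a horizontal strip known to contain $\lambda^*$, in effect sorting the lines by their intercepts with the unknown horizontal line $y=\lambda^*$ and using the feasibility test as the comparison oracle (parametric search with batched comparisons), so that after $O(\log N)$ rounds of $O(N)$ work plus $O(\log N)$ oracle calls the strip contains no vertex in its interior and $v_1(L)$, $v_2(L)$ are read off its boundary. If you want to salvage a prune-and-search flavor, you must prune \emph{candidate vertices} (by narrowing the $y$-interval, e.g.\ via inversion counting against a horizontal line) rather than lines; pruning lines is exactly the step that cannot be justified.
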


Regarding to our problem, if these $O(mn)$ lines are known, with the assistance of Lemma~\ref{lem:generalweightedfeasibility}, we can adapt Lemma~\ref{lem:linearrangementtechnique} to compute $\lambda^*$ in $O(mn\log n\log mn + m^2\log n\log mn)$ time. 

It remains now to compute these $O(mn)$ lines. We use a list $L$ to store all these lines, which is empty initially. For every edge of $G$, we perform the following $O(n)$-time routine: Suppose we are about to process edge $e(r,s)$. Consider $e(r,s)$ being on the $x$-axis with vertex $r$ at the origin and vertex $s$ at the point of $x$-coordinate $l(e(r,s))$. First, we join into $L$ the two vertical lines respectively through vertices $r,s$ on $x$-axis. Next, for each $v\in V$, we determine in $O(1)$ time function $y = D(v,x)$ w.r.t. $x\in e(r,s)$ with the provided distance matrix, and then insert into $L$ the lines containing all $O(1)$ line segments on its graph. 

As a result, a set of $O(mn)$ lines stored in $L$ is obtained in $O(mn)$ time. Clearly, the set $\Lambda\cup \Lambda'$ belongs to the set of $y$-coordinates of intersections between lines in $L$. Now we can employ Lemma~\ref{lem:linearrangementtechnique} to find the lowest vertex with a feasible $y$-coordinate in the line arrangement $\calA(L)$ by applying Lemma~\ref{lem:generalweightedfeasibility} to decide the feasibility of every tested $y$-coordinate. The $y$-coordinate of that lowest vertex is exactly $\lambda^*$. Accordingly, $x^*$ and $T^*$ can be found by applying Lemma~\ref{lem:generalweightedfeasibility} to $\lambda^*$.  
Thus, we have the following theorem. 

\begin{theorem}\label{the:generalweightedresult}
    The weighted connected $k$-vertex one-center problem can be solved in $O(mn\log n\log mn + m^2\log n\log mn)$ time. 
\end{theorem}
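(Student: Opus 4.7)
The plan is to exploit Observation~\ref{obs:lambda_belongs_twosets}, which gives $\lambda^*\in\Lambda\cup\Lambda'$, and to locate $\lambda^*$ by a parametric search over the vertex set of a line arrangement, using Lemma~\ref{lem:generalweightedfeasibility} as the oracle. Concretely, I would build a set $L$ of $O(mn)$ lines in the $x,y$-plane so that every value in $\Lambda\cup\Lambda'$ appears as the $y$-coordinate of some vertex of $\calA(L)$, and then apply Lemma~\ref{lem:linearrangementtechnique} to $L$.

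For the construction, I would process each edge $e(r,s)$ independently. Embed $e(r,s)$ on the $x$-axis with $r$ at the origin and $s$ at abscissa $l(e(r,s))$; push the two vertical lines $x=0$ and $x=l(e(r,s))$ into $L$. For each $v\in V$, read $d(v,r)$ and $d(v,s)$ from the distance matrix and determine in $O(1)$ time the graph of $y=D(v,x)$ for $x\in e(r,s)$, which by Fig.~\ref{fig:distancefun} is an $x$-monotone piecewise-linear function of at most two pieces; extend each piece to a full line and insert it into $L$. This produces $|L|=O(mn)$ in $O(mn)$ total time. Every element of $\Lambda$ is by definition the intersection of two distance-function pieces of opposite slope and hence a vertex of $\calA(L)$; every element of $\Lambda'$ equals $D(v,x')$ at some semicircular point $x'$, which is precisely where the two extended linear pieces of $D(v,\cdot)$ meet, so it also appears as a vertex of $\calA(L)$. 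Therefore $\Lambda\cup\Lambda'$ embeds into the vertex set of $\calA(L)$, and by Observation~\ref{obs:lambda_belongs_twosets}, so does $\lambda^*$.

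Next I would invoke Lemma~\ref{lem:linearrangementtechnique} on $L$ with the feasibility tester of Lemma~\ref{lem:generalweightedfeasibility}; since $y$-coordinates strictly below $\lambda^*$ are infeasible and all values in $\Lambda\cup\Lambda'$ are realized as arrangement vertices, the lowest feasible vertex $v_1(L)$ returned by the lemma must satisfy $y(v_1(L))=\lambda^*$. With $N=O(mn)$ and $\tau=O(mn\log n+m^2\log n)$, this costs $O((N+\tau)\log N)=O(mn\log n\log mn+m^2\log n\log mn)$ time. To finish, I would run the feasibility tester one last time at $\lambda^*$; the point witnessing feasibility is $x^*$ and the witnessed $k$-subtree is $T^*$, all within the same bound.

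The main obstacle is the correctness of the reduction in the second paragraph, specifically the claim that every semicircular-point value in $\Lambda'$ is captured as a crossing in $\calA(L)$; this is exactly what forces the construction to extend each linear piece of $D(v,\cdot)$ to a full line (rather than keeping it as a segment) and to include the vertical boundary lines of each edge, so that breakpoints interior to the edge interval are genuinely realized as arrangement vertices rather than lost at the endpoints of pieces.
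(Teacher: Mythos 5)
Your proposal is correct and follows essentially the same route as the paper: build the $O(mn)$ lines by extending the linear pieces of every $D(v,x)$ over every edge (plus the vertical lines through the edge endpoints), observe that $\Lambda\cup\Lambda'$ is realized among the $y$-coordinates of the arrangement vertices, and then invoke Lemma~\ref{lem:linearrangementtechnique} with the feasibility test of Lemma~\ref{lem:generalweightedfeasibility} as the oracle, recovering $x^*$ and $T^*$ by one final run of the test at $\lambda^*$. The running-time accounting matches the paper's as well.
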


\subsection{The Feasibility Test}\label{sec:generalweightedfeasibility}
For any given value $\lambda$, we say that a subgraph $G'$ of $G$ can be covered by a point $x$ in $G$ (under $\lambda$) if and only if the maximum (weighted) distance of $x$ to $V(G')$ is no more than $\lambda$. Indeed, the feasibility test asks for the existence of a point in $G$ that covers a $k$-subtree of its shortest path tree(s) in $G$ (without joining a dummy vertex for $x$ into its shortest path trees). For any point $x$ of $G$, there is a subtree of maximum cardinality in $G$ that is covered by $x$ and whose vertices and $x$ induce a connected subtree in a shortest path tree of $x$; this subtree either contains $x$ inside or excludes it but has a vertex adjacent to $x$ in that shortest path tree. (The latter situation may occur only if $x$ is interior of an edge.) Refer to this subtree as the \textit{largest covered self-inclusive subtree} of $x$, and denote it by $T_\lambda(x)$. The following key observation leads our decision algorithm. 

\begin{observation}\label{obs:decision}
    There exists a point $x'$ in $G$ with $|V(T_\lambda(x'))|\geq k$ if and only if $\lambda$ is feasible. 
\end{observation}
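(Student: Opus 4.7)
The plan is to prove the two directions separately. The forward direction---$\lambda$ feasible implies the existence of an $x'$ with $|V(T_\lambda(x'))|\geq k$---is the substantial step and adapts the relocation argument used in the proof of Observation~\ref{obs:x^*interiorofT^*}. The reverse direction extracts a $k$-subtree out of $T_\lambda(x')$.

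For the forward direction, take any pair $(x,T^k)$ witnessing $\lambda\geq\lambda^*$, so $T^k\in G^k(x)$ and $\max_{v\in V(T^k)}w_vd(v,x)\leq\lambda$, and fix a shortest path tree $T_x$ realising $T^k\in G^k(x)$. If $V(T^k)\cup\{x\}$ is already connected in $T_x$---so that $T^k$ satisfies the self-inclusive condition defining $T_\lambda(x)$---then the maximality of $T_\lambda(x)$ gives $|V(T_\lambda(x))|\geq|V(T^k)|=k$ and we are done with $x'=x$. Otherwise, $V(T^k)$ is disjoint from $\{x\}$ and from all $T_x$-neighbours of $x$; let $v'$ denote the vertex of $V(T^k)$ closest to $x$ in $T_x$. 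Since $V(T^k)$ is connected in $T_x$, $v'$ is the unique topmost vertex of $V(T^k)$ in $T_x$, so every $T_x$-path $x\!\to\!v$ for $v\in V(T^k)$ passes through $v'$, yielding $d(x,v)=d(x,v')+d(v',v)$. The $T_x$-subpaths $v'\!\to\!v$ are shortest $G$-paths, so $T^k$ is also a $k$-subtree of some shortest path tree $T_{v'}$ of $v'$; further,
\begin{equation*}
w_vd(v',v)=w_v\bigl(d(x,v)-d(x,v')\bigr)\leq w_vd(x,v)\leq\lambda
\end{equation*}
for every $v\in V(T^k)$. Hence $(v',T^k)$ is again a feasible witness, $T^k$ is self-inclusive at $v'\in V(T^k)$, and $|V(T_\lambda(v'))|\geq k$.

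For the converse, suppose $|V(T_\lambda(x'))|\geq k$ for some $x'$. By the definition of $T_\lambda(x')$, $V(T_\lambda(x'))\cup\{x'\}$ forms a connected subtree of some shortest path tree $T_{x'}$ of $x'$, and every $v\in V(T_\lambda(x'))$ satisfies $w_vd(v,x')\leq\lambda$. I will extract a member of $G^k(x')$ from this subtree by running a breadth-first traversal of $V(T_\lambda(x'))\cup\{x'\}$ rooted at the (possibly dummy) vertex $x'$ and retaining the first $k$ real vertices reached, together with the tree edges joining them. The retained set, together with $x'$ when $x'$ is dummy, is a connected subtree of $T_{x'}$ with exactly $k$ real vertices, and each of these vertices has weighted distance at most $\lambda$ from $x'$; hence $\phi(x',G)\leq\lambda$, i.e.\ $\lambda\geq\lambda^*$.

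The main obstacle lies in the forward direction: when $T^k$ is not self-inclusive at $x$, one must justify both that the connected set $V(T^k)$ has a unique topmost vertex $v'$ in $T_x$ through which every $x$-to-$V(T^k)$ path passes, and that the $T_x$-subpaths emanating from $v'$ remain shortest $G$-paths, so that some shortest path tree $T_{v'}$ of $v'$ realises $T^k$ as a $k$-subtree. Both are routine consequences of the shortest-path-tree structure, but each needs to be recorded carefully because the membership of $T^k$ in $G^k(v')$ hinges on the existence of such a $T_{v'}$.
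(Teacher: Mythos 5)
Your proof is correct and takes essentially the same route as the paper's: for the substantive direction you relocate the witness to the topmost vertex $v'$ of the covered $k$-subtree (the paper's root vertex $u$), observing that the subtree lies in a shortest path tree of $v'$, contains $v'$, and remains covered since distances to $v'$ only shrink. The converse extraction of a connected $k$-subtree from $T_\lambda(x')$ is the part the paper treats as immediate, and your BFS argument handles it adequately.
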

\begin{proof}
    It suffices to show that if $\lambda\geq\lambda^*$ then such a point must exist. Since $\lambda$ is feasible, there must be a point on $G$ so that it covers a $k$-subtree of its shortest path trees in $G$. Let $x'$ be such a point. Clearly, the statement is true if a $k$-subtree of its one shortest path tree covered by $x'$ contains $x'$ or has a vertex adjacent to $x'$ in that shortest path tree. 
    
    Otherwise, the largest self-inclusive subtree of its every shortest path tree in $G$ covered by $x'$ is of size less than $k$. The largest subtree covered by $x'$ in $G$ contains at least $k$ vertices of $V$ but is not adjacent to $x'$ in its shortest path trees. Let $T^k$ be such a largest subtree covered by $x'$. Suppose $T^k$ is rooted at vertex $u$ in its shortest path tree including $T^k$. (Recall that each shortest path tree of $x'$ is rooted at the dummy or real vertex containing $x'$.) Because for each vertex $v\in V(T^k)$, the path of $v$ to $u$ on $T^k$ is exactly its shortest path to $u$ on $G$. $T^k$ is thus a subtree of a shortest path tree of $u$ in $G$; additionally, it includes $u$ and is covered by $u$ under $\lambda$. Due to $|V(T^k)|\geq k$, the observation holds. \qed
\end{proof}

% It should be mentioned that Observation~\ref{obs:decision} does not imply that if $\lambda$ is feasible, there must be a vertex whose largest covered self-inclusive subtree is of size at least $k$.   

The underlying idea of our feasibility test is: For every edge $e$ of $G$, we determine $|V(T_\lambda(x))|$, i.e., the size $|T_\lambda(x)|$ of $T_\lambda(x)$, for points on $e$ where $|V(T_\lambda(x))|$ changes. In the process, if a point is found so that its $V(T_\lambda(x))$ is of size at least $k$, then $\lambda$ is feasible and so we immediately return. Otherwise, no such points exist and thus $\lambda<\lambda^*$. 

Let $S$ be a subset of $V$. For any point $x\in G$, we refer to a vertex as a \textit{descendant} of subset $S$ 
w.r.t. $x$ if its every shortest path to $x$ contains vertices in $S$. We say a vertex is a \textit{heavy} vertex of $x$ if it cannot be covered by $x$. Clearly, $T_\lambda(x)$ includes neither any heavy vertex of $x$ nor any descendant of the set $H(x)$ of all its heavy vertices. Otherwise, a vertex is called a \textit{light} vertex of $x$ (if it is neither a vertex in $H(x)$ nor a descendant of $H(x)$). 

The following observation sets the base for determining $|V(T_\lambda(x))|$. 

\begin{observation}\label{obs:computinglargesttree}
    $T_\lambda(x)$ is induced by all light vertices of $x$. 
\end{observation}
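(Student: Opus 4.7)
The plan is to establish the set equality $V(T_\lambda(x)) = L$, where $L$ denotes the set of all light vertices of $x$, by proving both inclusions.

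For the inclusion $V(T_\lambda(x)) \subseteq L$, the covering condition immediately rules out every heavy vertex, so it suffices to exclude descendants of $H(x)$. I would argue by contradiction: suppose $T_\lambda(x)$ contains some descendant $v$ of $H(x)$, and let $T$ be a shortest path tree of $x$ in which $V(T_\lambda(x)) \cup \{x\}$ induces a connected subtree. The $x$-to-$v$ path in $T$ is a shortest path in $G$, and by the descendant property it must contain some heavy vertex $u \in H(x)$. Since $V(T_\lambda(x)) \cup \{x\}$ is connected in $T$ and contains both $v$ and $x$, the whole path lies inside this set, forcing $u \in V(T_\lambda(x))$ and contradicting that $T_\lambda(x)$ is covered by $x$.

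For the reverse inclusion $L \subseteq V(T_\lambda(x))$, the idea is to construct a specific shortest path tree $T'$ of $x$ in which $L \cup \{x\}$ itself induces a connected subtree; the induced subtree on $L$ is then a self-inclusive subtree covered by $x$, and by the maximality of $T_\lambda(x)$ this yields $L \subseteq V(T_\lambda(x))$. To build $T'$, I would process vertices in non-decreasing order of distance from $x$ and, for each vertex $v$, pick a shortest-path predecessor as parent, preferring a light one whenever $v$ is light. The verification that this preference is always satisfiable is where the hypothesis "light" does its work: because $v \in L$ is not a descendant of $H(x)$, some shortest $v$-to-$x$ path $\pi$ avoids $H(x)$; the neighbor $u$ of $v$ on $\pi$ is not heavy (it lies on $\pi$) and is not a descendant of $H(x)$ either, as witnessed by the subpath of $\pi$ from $u$ to $x$, hence $u$ is light. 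An easy induction on distance from $x$ then shows that every light vertex is connected to $x$ through light vertices in $T'$. The case where $x$ is interior of an edge $e(r,s)$ is handled by the same construction, with $x$ serving as the dummy root whose children lie in $\{r,s\}$.

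The main obstacle I anticipate is coordinating the per-vertex choices of shortest paths into one consistent shortest path tree in which every light vertex remains connected to $x$ through light vertices. The greedy parent-selection above resolves this, and its correctness rests precisely on the characterization of a light vertex as one admitting a shortest $v$-to-$x$ path disjoint from $H(x)$, which is immediate from the definitions. Once $T'$ is in place, verifying that the induced subtree on $L$ is self-inclusive and covered, and closing the loop with $T_\lambda(x)$ via maximality, is routine.
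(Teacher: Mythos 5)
Your proof is correct and takes essentially the same route as the paper: the crux in both is that every light vertex admits a shortest path to $x$ consisting only of light vertices, from which one obtains a shortest path tree in which $L\cup\{x\}$ induces a connected covered subtree, and maximality of $T_\lambda(x)$ closes the argument. Your greedy light-preferring predecessor selection merely makes explicit the shortest path tree whose existence the paper infers from connectivity of the subgraph of light vertices, and you additionally spell out the inclusion $V(T_\lambda(x))\subseteq L$, which the paper states as immediate just before the observation.
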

\begin{proof}
    It is sufficient to show that $x$ must have a shortest path tree where the path from $x$ to its every light vertex contains no (heavy) vertex in $H(x)$. 
    
    On the one hand, every vertex adjacent to $x$ in $G$ is in $H(x)$. It follows that if $x$ is interior of an edge in $G$ then $x$ has no light vertices, and otherwise, the one containing $x$ is its only light vertex. Hence, the statement is true in this situation. 
   
    On the other hand, $x$ is adjacent to at least one of its light vertices. Let $G'$ be the subgraph generated by removing $H(x)$ and all descendants of $H(x)$ from $G$. Note that if $x$ is interior of an edge and only one of its two adjacent vertices is light, then $x$ is not in $G'$, which contains its only light adjacent vertex though; otherwise, $x$ is in $G'$. For the former case, to maintain the reachability, we join a dummy vertex for $x$ into $G'$ by connecting $x$ and its only light adjacent vertex with an edge of the same length as their segment length along the edge of $G$ containing $x$.     
    
    By the definition, every light vertex of $x$ has at least one shortest path to $x$ on $G$ where every vertex is light w.r.t. $x$. This implies the following properties: (1) $G'$ is connected; (2) $G'$ contains such shortest path(s) in $G$ from $x$ to its every light vertex. Hence, there must be a shortest path tree w.r.t. $x$ where the path from $x$ to its every light vertex contains only its light vertices. 

    Thus, the observation holds.\qed
\end{proof}

Furthermore, a problem needs to be addressed for determining $|V(T_\lambda(x))|$: 
Given any subset $S$ of $V$, the goal is to find all descendants of $S$ 
w.r.t. $x$ on $G$ and the subgraph generated by removing $S$ 
and its descendants from $G$. The following lemma gives the result. 

% \begin{lemma}\label{lem:finddescendants}
%     With $O((m+n)\log n)$-time preprocessing work, given any subset $S$ of $V$, all descendants of $S$ w.r.t. $x$ and the subgraph generated by removing $S$ and its descendants from $G$ can be obtained in $O(n' + m'\log n)$ time where $n'$ is the total number of vertices in $S$ and its descendants, and $m'$ is their total degrees. 
% \end{lemma}
\begin{lemma}\label{lem:finddescendants}
    With $O(n+m\log n)$-time preprocessing work, given any subset $S$ of $V$, all descendants of $S$ w.r.t. $x$ and the subgraph generated by removing $S$ and its descendants from $G$ can be obtained in $O(n' + m'\log n)$ time where $n'$ is the total number of vertices in $S$ and its descendants, and $m'$ is their total degrees. 
\end{lemma}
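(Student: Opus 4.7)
The plan is to reduce the computation to a propagation in the shortest-path DAG rooted at $x$. Let $D_x$ be the DAG whose arcs are the tight edges $(u,v)\in E$ satisfying $d(u,x) = d(v,x) + l(e(u,v))$, oriented toward $x$. A vertex $v$ is a descendant of $S$ with respect to $x$ exactly when every directed $v\to x$ path in $D_x$ meets $S$, or equivalently when every in-neighbour (``parent'') of $v$ in $D_x$ is itself in $S$ or is a descendant of $S$. This recursive characterisation is computable by a counter propagation.

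For preprocessing I would run Dijkstra from $x$ in $O(n+m\log n)$ time to obtain $d(v,x)$ for every $v$ (treating $x$ as a virtual vertex if interior of an edge), sweep $E$ once to orient each tight edge, and record the in-degree $p(v)$ of $v$ in $D_x$ along with an adjacency list partitioned into parents / children / slack so that the children of any vertex can be scanned in time proportional to their number. For a query on $S$, I maintain a counter $c(v)$ in a balanced BST keyed by vertex index (missing keys representing $c=0$) together with a queue initialised to $S$. Whenever $u$ is dequeued I scan its children $w$ in $D_x$, increment $c(w)$, and enqueue $w$ as soon as $c(w)=p(w)$. The algorithm halts when the queue is empty; the enqueued set equals $S\cup\mathrm{desc}(S)$, and the required subgraph is represented implicitly by flagging these vertices as deleted in $G$.

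Correctness follows by induction on the dequeueing order, which is a topological order in $D_x$: at the moment $c(v)=p(v)$, every parent of $v$ has already been enqueued and so lies in $S\cup\mathrm{desc}(S)$, forcing every shortest $v\to x$ path to meet $S$; conversely, any genuine descendant has all its parents in $S\cup\mathrm{desc}(S)$ and is therefore eventually enqueued. Each enqueued vertex contributes $O(\deg_G(v))$ work for the child scan and the associated $c$-updates, summing to $O(m')$ bucket accesses plus $O(\log n)$ per update and yielding the claimed $O(n'+m'\log n)$ bound. The main obstacle I anticipate is preserving strict output-sensitivity across many successive queries without an $\Omega(n)$ global reinitialisation of the counter table; storing $c$ in a BST rather than a flat array is what pays this cost, since every lookup is $O(\log n)$ but the BST can be discarded in time proportional to its size at the end of each query, so no work is ever charged to a vertex outside $S\cup\mathrm{desc}(S)$.
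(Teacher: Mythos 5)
Your proposal is correct and is essentially the paper's own argument: the paper precomputes for every vertex the set $P$ of its next-hop neighbours toward $x$ and the reverse sets $C$ (stored in balanced BSTs), then runs exactly your queue propagation, declaring a vertex a descendant the moment its $P$-set is emptied --- which is your counter test $c(w)=p(w)$ --- and both versions charge $O(\log n)$ dictionary work only to removed vertices and their incident edges, giving the same $O(n'+m'\log n)$ bound. One slip to fix: with your arcs oriented from the farther endpoint toward $x$, the ``parents'' of $v$ (the neighbours through which every shortest $v$--$x$ path leaves $v$) are its \emph{out}-neighbours, so $p(v)$ must be $v$'s out-degree and, upon dequeuing $u$, you must scan the \emph{in}-neighbours of $u$ (the farther vertices using $u$ as a next hop); as literally written, ``in-neighbour/in-degree'' orients the propagation the wrong way (e.g.\ it would vacuously classify sources of $D_x$ as descendants), although your induction argument makes clear the intended, correct convention.
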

\begin{proof}
    The computations will be performed on a copy $G'$ of $G$. First, join a dummy vertex of weight zero into $G'$ for point $x$ if $x$ is interior of an edge. Let $v_x$ be the dummy or real vertex containing $x$. The vertex set $V'$ of $G'$ is $V\cup\{v_x\}$, and $v_x\notin S$ if $v_x$ is a dummy vertex. 
    % Consider the (dummy or real) vertex $v_x$ containing $x$ as the root of its shortest path trees in $G$. 
    Clearly, if $v_x\in S$, which can be verified in $O(n)$ time, then every vertex in $V-S$ is a descendant of $S$ and the subgraph generated by removing $S$ and its descendants from $G$ is a null graph. In general, $v_x\notin S$ but it is clear to see that every descendant of $S$ in $G'$ is a descendant of $S$ in $G$ and vice versa. 
    
    To find all descendants of $S$ w.r.t. $v_x$, in the preprocessing work, we first process every vertex in $G'$ by utilizing any self-balanced binary search tree to store all its adjacent vertices instead of a list, which can be carried out in $O(n+m\log n)$ time. With such representation of $G'$, for any vertex of $V'$, its `adjacency' binary search tree is still of size linear to its degree but deleting an adjacent vertex takes $O(\log n)$ time. As will be shown later, this representation improves the efficiency on finding the subgraph induced by $S$ and its descendants.  
    
    Next, we compute for 
    each vertex $v$ the following two sets: set $P$ consists of the preceding vertices of $v$ 
    on all its shortest paths to $v_x$, and set $C$ is composed of all vertices 
    whose sets $P$ include $v$, i.e., its succeeding vertices on all shortest paths to $v_x$. For $v_x$, let its set $P$ be empty. Because sets $P$ and $C$ of every vertex each contains at most all its adjacent vertices. The total size of sets $P$ and $C$ for all vertices is no more than $2(m+2)+n+1$. 

    % While computing sets $P$ and $C$ for each vertex, 
    % we utilize any self-balanced binary search tree 
    % to maintain the two sets, 
    % % (by using indices of vertices as keys), 
    % which supports a logarithmic-time insertion and deletion. 
    
    Since every edge has a positive length, we can compute the sets $P$ and $C$ for all vertices of $V'$ as follows. For each vertex $v$ of $V'$, we determine for its each adjacent vertex $u$ whether $d(x,v) + l(e(v,u)) = d(x,u)$; if yes then $v$ is the preceding vertex of $u$ on $u$'s one shortest path to $v_x$, so we insert $v$ into $u$'s set $P$ and insert $u$ to $v$'s set $C$. Last, for $v_x$, we set its set $P$ as empty. To achieve a logarithmic-time insertion and deletion, we utilize any self-balanced binary search tree to maintain the two sets for each vertex. Since the distance of every vertex to $x$ is known in $O(1)$ time, the total running time is $O(n+m\log n)$. As a result, the total preprocessing time is $O(n+m\log n)$. 
    
    % Dijkstra's algorithm~\cite{ref:CormenIn22} will be modified to compute the two sets $P$ and $C$ 
    % for each vertex. The algorithm procedure remains same except for the relaxing step: 
    % While visiting a vertex $v$ extracted from the priority queue, for its each adjacent vertex $u$, we check if $d(x,v) + l(e(v,u)) = d(x,u)$. If yes, then $v$ is the preceding vertex of $u$ on $u$'s one shortest path to $x$, so we insert $v$ into $u$'s set $P$ and then insert $u$ to $v$'s set $C$. 
    % % Otherwise, we continue to process next vertex in the priority queue.   
    % Finally, for $v_x$, we insert it into its own set $P$. 

    % Because every edge has a positive length and Dijkstra's algorithm finds the shortest paths of $x$ to all vertices in the nondecreasing order by their distances to $x$. The correctness of the above modification is thus verified. Additionally, since the distance of every vertex to $x$ is known in $O(1)$ time, the running time remains $O((m+n)\log n)$. Thus, the total preprocessing time is $O((m+n)\log n)$. 

    The reason why we need these sets $P$ and $C$ for finding $S$'s descendants is explained as follows. For every descendant of $S$, on its each shortest path to $v_x$, every intermediate vertex of the subpath from it to its closest vertex in $S$ must be a descendant of $S$; because otherwise, it has a shortest path to $v_x$ excluding any vertex of $S$. This means that the set $P$ of any vertex $v\in V'-S$ only contains $S$'s vertices and its descendants iff $v$ is a descendant of $S$. Additionally, for every descendant of $S$, consider the link number of the above subpath for its each shortest path, and refer to their maximum as the \textit{shortest-path depth} of this descendant to $S$; define the shortest-path depth of each vertex in $S$ to $S$ as zero. Clearly, all $S$'s descendants can be found level by level in the ascending order of their shortest-path depths to $S$ by iteratively removing $S$ and its descendants found following that order from sets $P$ of remaining vertices such that if the set $P$ of a vertex becomes empty, then this vertex is a new descendant of $S$ whose shortest-path depth to $S$ is smallest among all unknown (remaining) vertices but not less than that of any known descendant. 

    % Now we are ready to show how to find all descendants of $S$. Since the set $P$ of $S$'s each descendant consists of only vertices that are in $S$ or descendants of $S$, our strategy is to gradually remove $S$ and its descendants found from sets $P$ of vertices such that if the set $P$ of a vertex becomes empty, then this vertex must be a descendant of $S$. % or in $S$. 
    
    % Furthermore, to find all descendants of $S$, we gradually remove $S$ and its descendants 
    % (in level order) from sets $P$ of all vertices such that if a vertex 
    % has an empty set $P$, then it must be a descendant of $S$ or in $S$. 
    
    Now we are ready to show how to find all descendants of $S$ in order. We use a list $L$ to store $S$ and all descendants of $S$. Initially, we insert each vertex of $S$ into $L$. Next, create a queue $Q$ to maintain $S$ and $S$'s descendants found during the procedure so that all vertices in $Q$ are in that ascending order and need to be removed from sets $P$ of vertices in $V'-L$ further. Then we insert each vertex $v\in S$ into $Q$ after breaking the connection between $v$ and its preceding vertices on all shortest path to $v_x$ as follows: Delete $v$ from set $C$ of each vertex in $v$'s set $P$ and then emptify $v$'s set $P$. At the beginning, $Q$ thus contains all vertices of the shortest-path depths to $S$ being zero. 
    
     % When a vertex is inserted into $Q$, i.e., when a new descendant of $S$ is found, we also delete not only its set $P$ is empty but also it is not in any set $C$. Because every newly found descendant is only likely to belong to sets $C$ of $S$'s vertices or $S$'s descendants that have been found and we remove it from sets $C$ of all these vertices. 

    % Proceed as the following to explore vertices in $Q$ to remove them from sets $P$ of vertices in $V'$. Note that when a vertex is inserted into $Q$, not only its set $P$ is empty but also it is not in any set $C$. Because every newly found descendant is in sets $C$ of only vertices in $S$ or its descendants that have been found. 
    
    We proceed to find $S$'s descendants (level by level) in the ascending order by their shortest-path depths to $S$. Every step we extract the front vertex $v$ from $Q$, which is of the smallest shortest-path depth in $Q$, and process every vertex in its set $C$ as follows. For each vertex $u$ in $v$'s set $C$, we remove $v$ from $u$'s set $P$ and delete $u$ from $v$'s set $C$; then we check whether $u$'s set $P$ is empty or not. If yes, then $u$ is a descendant of $S$ (since only $S$ and $S$'s descendants are deleted from sets $P$), so we insert $u$ into $Q$ and add it to list $L$. Note that $u$ must be of the smallest shortest-path depth among all remaining descendants in $V'-L$ since every descendant with a smaller shortest-path depth has been discovered. 
    
    It also should be mentioned that when a vertex is inserted into $Q$, not only its set $P$ is empty but also it is not in any set $C$. Because every newly found descendant is only in sets $C$ of vertices in $S$ or its descendants that have been found. 

    We terminate this procedure once $Q$ becomes empty. At this moment, 
    all descendants of $S$ in $G'$ are found and stored in list $L$, 
    and their information have been removed from sets $C$ and $P$ of all remaining vertices in $G'$. Through the whole procedure, 
    vertices of $S$ and $S$'s descendants each is explored exactly once, 
    and for each vertex in their sets $C$ and $P$, at most two deletion operations are executed, which take $O(\log n)$ time. 
    Additionally, the total size of sets $C$ and $P$ of every vertex is 
    bounded by the degree of this vertex in $G'$. 
    As a result, all descendants of $S$ in $G$ can be found in $O(n'+m'\log n)$ time. 
    % with $O((m+n)\log n)$-time preprocessing work, 

    To generate the subgraph by removing $S$ and its descendants from $G$, during the above procedure, 
    once we finish exploring a vertex $v$ in $Q$, 
    we delete $v$ from $G'$ as follows: For its each adjacent vertex $u$, we delete $v$ from the adjacency binary search tree of $u$ in $O(\log n)$ time. Subsequently, we remove $v$ (and its adjacent information) from $V'$ in $O(1)$ time. Last, if $v_x$ is a dummy vertex then we delete $v_x$ as the above from $G'$, which can be done in $O(\log n)$ time since its degree is two. Clearly, the total running time remains $O(n'+m'\log n)$. 
 
    % To be efficient, in the preprocessing step, we first process $G$ by utilizing any self-balanced binary search tree to store all adjacent vertices for its each vertex instead of an adjacent list, which can be carried out in $O((m+n)\log n)$. So, for any vertex of $G$, its `adjacency' binary search tree is of size linear to its degree and deleting an adjacent vertex takes $O(\log n)$ time. Clearly, such representation of $G$ leads $O(n'+m'\log n)$ time for removing all vertices of $L$ from $G$. 
    % by utilizing any self-balanced binary search tree to store all 
    % adjacent vertices for each vertex in $V$ instead of an adjacent list. 
    % Clearly, this does not affect either the above preprocessing 
    % work for computing sets $P$ and $C$ or the procedure for 
    % finding descendants of $S$. To generate that subgraph in the above way, we need to preprocess $G$ to build for each vertex of $G$ a self-balanced binary search tree to maintain all its adjacent vertices, which does not increase the preprocessing time complexity. 
    
    In a sum, with the $O(n + m\log n)$-time preprocessing work, 
    all descendants of $S$ and the subgraph of $G$ induced by $V$ excluding $S$ and all its descendants can be found in $O(n' + m'\log n)$ time.  \qed
\end{proof}

We now present our algorithm for determining values $|V(T_\lambda(x))|$ 
at necessary points on an arbitrary edge $e(r,s)$. We simply use $e$ to denote edge $e(r,s)$. For any point $x\in e$, we define $f(e,x) = |V(T_\lambda(x))|$. Recall that $x$ is at distance $t(x)$ to $r$ along $e$. $f(e,x)$ is indeed a function w.r.t. $t(x)$. The following Lemma can be employed to construct $f(e,x)$.     
% By Observation~\ref{obs:computinglargesttree}, 
% computing $f(e,x)$ at any $x\in e$ requires to 
% count the light vertices of $x$. 
% We have the following Lemma. 

\begin{lemma}\label{lem:lightvertexfunction}
    $f(e,x)$ is a piecewise constant function of complexity $O(n)$. The ordered set of all its breakpoints can be computed in $O((m+n)\log n)$ time. 
\end{lemma}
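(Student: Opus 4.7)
The plan is to identify all points on $e=e(r,s)$ where $f$ can change value, bound their number by $O(n)$, and compute them with a single left-to-right sweep that reuses the data structures of Lemma~\ref{lem:finddescendants}. By Observation~\ref{obs:computinglargesttree}, $f(e,x)$ counts the light vertices of $x$, so it can change only when either (a) a vertex's heavy/light status flips, or (b) the partition $V_r(x)\cup\bar{V}(x)\cup V_s(x)$ changes. For each $v\in V$, the weighted distance $D(v,x)$ restricted to $e$ is piecewise linear with at most three monotone pieces (Fig.~\ref{fig:distancefun}), so $D(v,x)=\lambda$ admits at most two roots, giving at most $2n$ events of type (a); the partition changes only at the at most $n$ semicircular points on $e$, giving events of type (b). Hence $f(e,x)$ has at most $O(n)$ breakpoints. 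With the distance matrix, all candidates can be enumerated in $O(n)$ time and sorted in $O(n\log n)$ time.

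The computation proceeds as follows. First perform the $O(n+m\log n)$ preprocessing of Lemma~\ref{lem:finddescendants}. Initialize at $x=r$ by computing $H(r)=\{v\colon w_v d(v,r)>\lambda\}$ in $O(n)$ time and invoking the find-descendants procedure with $S=H(r)$ to obtain the initial light set together with the value $f(e,r)$ in $O(n+m\log n)$ time. Then process the sorted breakpoints left-to-right. At each breakpoint, exactly one local modification is applied: a vertex $v$ is inserted into $H$ and the queue-based traversal propagates the newly blocked descendants through $v$'s $C$-set; or a vertex $v$ is removed from $H$ and restored to the light set, with each former descendant $u$ rechecked by reinserting $v$ into $u$'s $P$-set and deciding whether $u$ is still blocked by another heavy ancestor; or a vertex $v$ crosses its semicircular point and switches from $V_r$ to $V_s$, which updates $v$'s $P$ and $C$ entries (the $r$-side predecessors are replaced by the $s$-side predecessors) and may flip its descendancy. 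The counter $f$ is updated by the net change in the light-vertex count at each step.

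The hard part, as I see it, is the amortized analysis proving total cost $O((m+n)\log n)$ rather than per event. Two structural facts must be established: (i) each vertex's heavy/light status flips $O(1)$ times, since $D(v,x)$ is piecewise monotone with $O(1)$ pieces; and (ii) each vertex's descendancy status changes $O(1)$ times over the whole sweep. The latter relies on the observation that if all shortest $u\!\to\!r$ paths in $G$ pass through $v$ and $u\in V_r(x)$, then by the triangle inequality $d(u,s)+l(e)-t(x)\le d(u,v)+d(v,s)+l(e)-t(x)$ forces $v\in V_r(x)$ as well. Consequently, while $u$ sits in $V_r$, its forced $V_r$-ancestors are themselves in $V_r$ and the restriction of $H$ to those ancestors grows monotonically (since $D(v,x)$ is non-decreasing on the $V_r$-side), so $u$ can enter the descendancy set at most once during its $V_r$-phase. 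A symmetric argument using that $H\cap V_s$ shrinks monotonically gives at most one descendancy-leave during $u$'s $V_s$-phase, plus one possible change at $u$'s own semicircular event. Hence each vertex is touched $O(1)$ times and each of its adjacencies is touched $O(1)$ times across the whole sweep, and each operation on the balanced BSTs of Lemma~\ref{lem:finddescendants} costs $O(\log n)$. Summing yields the claimed $O((m+n)\log n)$ bound.
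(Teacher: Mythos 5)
Your route is genuinely different from the paper's. The paper never runs a single dynamic sweep over the combined heavy/descendant structure; instead it splits the count as $f(e,x)=f_r(e,x)+f_s(e,x)-f_{rs}(e,x)$, where $f_r$ counts vertices light ``by $r$'' and $f_s$ those light ``by $s$''. The key point of that decomposition is monotonicity: $Q_r(y')\subseteq Q_r(y)$ for $t(y)<t(y')$, so the sweep for $f_r$ (from $r$ toward $s$) only ever \emph{deletes} vertices and their descendants, and the deletion-only procedure of Lemma~\ref{lem:finddescendants} applies verbatim; $f_s$ is symmetric, and the two are merged, with flags $c_r,c_s$ supplying the correction term $f_{rs}$ at semicircular points. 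You instead maintain the light set bidirectionally in one pass, which forces you to support re-insertions (a vertex leaving $H$, a descendant being ``unblocked'') and a wholesale swap of a vertex's predecessor set at its semicircular point, and then to argue an amortized bound via your claims (i) and (ii). Your structural claim (ii) is in fact sound: during a vertex's $V_r$-phase all vertices on its shortest paths to $x$ lie on the $r$-side (their semicircular points come no earlier), their weighted distances only grow, so descendancy can only be gained, and symmetrically only lost during the $V_s$-phase; this gives $O(1)$ status flips per vertex and the charging to degrees works out to $O((m+n)\log n)$. So the approach can be made to work, and what it buys is a single pass over one event list; what the paper's decomposition buys is that no restoration logic or amortized flip-counting is needed at all.

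Two places where your write-up is under-specified and would need care. First, the restoration step is more than ``rechecking whether $u$ is still blocked by another heavy ancestor'': you must maintain the invariant that $P(u)$ holds exactly the currently light predecessors of $u$, so that cascades propagate correctly in both directions (a vertex becoming light must be re-inserted into the $P$-sets of all its $C$-set members, and its becoming a non-descendant must cascade further), and the semicircular swap must replace $P(u)$ by the precomputed $s$-side predecessor set; none of this exists in Lemma~\ref{lem:finddescendants}, which is deletion-only. Second, and more substantively, updating the counter by the ``net change'' at each event misses the value of $f$ \emph{at} a semicircular point: there a neutral vertex is light if it is light by $r$ \emph{or} by $s$, and $f(e,x_0)$ can strictly exceed both one-sided limits (this is exactly why the paper attaches two values to such breakpoints and computes $f_{rs}$ from the flags). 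Since the feasibility test may succeed only at such an isolated point, your sweep must explicitly evaluate the union count at each semicircular point rather than only track increments across it.
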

\begin{proof} 
    Recall that $\bar{V}(x)$ is the set of neutral vertices of $x$ in $V$, and $V_r(x)$ (resp., $V_s(x)$) is the set of all vertices in $V$ whose shortest paths to $x$ each contains the segment $I_e(r,x)$ (resp., $I_e(x,s)$) on $e$ between $r$ (resp., $s$) and $x$. Decomposing any shortest path tree of $x$ at the dummy or real vertex $v_x$ containing $x$ generates two subtrees $T_r$ and $T_s$ rooted at $v_x$ so that $T_r$ including $I_e(r,x)$ is induced by $v_x$, the set $V_r(x)$, and a subset of $\bar{V}(x)$, and $T_s$ including $I_e(x,s)$ is induced by $v_x$, the set $V_s(x)$ and the remaining in $\bar{V}(x)$. Note that if $v_x$ is $r$ then $v_x\in V_r(x)$ and $v_x\notin V_s(x)$, and if $v_x$ is $s$ then $v_x\in V_s(x)$ and $v_x\notin V_r(x)$; otherwise, $v_x$ is a dummy vertex and thus belongs to neither $V_r(x)$ nor $V_s(x)$. 
    
    % where $\bar{V}(x)$ is the set of all neutral vertices of $x$ in $V$, and $V_r(x)$ (resp., $V_s(x)$) is the set of all vertices in $V$ whose shortest paths to $x$ each contains only $I_e(r,x)$ (resp., $I_e(s,x)$) along $e$ in $G$. 
    % i.e., all vertices that each has two shortest paths 
    % to $x$ respectively through $r$ and $s$ (and so their functions $D(v,x)$ 
    % assume the maximum at $x$)
    % $V_r(x)$ (resp., $V_s(x)$) is the set of all 
    % vertices whose all shortest paths to $x$ each contains only the subpath between $r$ (resp., $s$) and $x$ along $e$. 
    % In other words, each vertex in $V_r(x)$ has its distances $d(v,x)$ equal to $d(v,r)+t(x)$ 
    % but smaller than $d(v,s)+ l(e)- t(x)$; whereas $V_s(x)$ contains all 
    % vertices whose shortest paths to $x$ each contains only $s$ on $e(r,s)$, that is, 
    % whose distances $d(v,x)$ equal to $d(v,s) + l(e)- t(x)$ but smaller than 
    % $d(v,r)+t(x)$. 
    
    For any vertex $v\in V$, we say $v$ is light w.r.t. $x$ by $r$ (resp., $s$) if it has a shortest path to $x$ that contains $I_e(r,x)$ (resp., $I_e(x,s)$) but excludes any heavy vertex of $x$. Clearly, every light vertex of $x$ in $V$ is by either $r$ or $s$. Only vertices in $\bar{V}(x)$ might be light w.r.t. $x$ by both $r$ and $s$; whereas every light vertex of $x$ in $V_r(x)$ (resp., $V_s(x)$) is by only $r$ (resp., only $s$).

    Let $Q_r(x)$ (resp., $Q_s(x)$) be the subset of $V$ that contains $x$'s light vertices in $V_r(x)$ (resp., $V_s(x)$) and all its light vertices in $\bar{V}(x)$ by $r$ (resp., $s$). Clearly, $x$ must have a shortest path tree such that $Q_r(x)$ induces a subtree rooted at $r$ in $T_r$; it must have a shortest path tree, which might be different to the one for $Q_r(x)$, such that $Q_s(x)$ induces a subtree rooted at $s$ in $T_s$. Let $Q_{rs}(x)$ be subset $Q_r(x)\cap Q_s(x)$. Clearly, $Q_{rs}(x) = \emptyset$ if $x$ is not a semicircular point. By Observation~\ref{obs:computinglargesttree}, we have at any $x\in e$, $f(e,x) = |Q_r(x)| + |Q_s(x)| - |Q_{rs}(x)|$. 

    Define $f_r(e,x) =|Q_r(x)|$, $f_s(e,x)= |Q_s(x)|$, and $f_{rs}(e,x) = |Q_{rs}(x)|$ for $x\in e$. Below, we shall present several properties of $Q_r(x)$ and the algorithm for determining $f_r(e,x)$ guided by these properties. Since $f_s(e,x)$ can be derived in a similar way, we omit the details of computing $f_s(e,x)$. While determining $f_r(e,x)$ and $f_s(e,x)$, by maintaining for every semicircular point on $e$ the \textit{coverage} information that indicates whether its neutral vertices each is light by $r$ or $s$, $Q_{rs}(x)$ at each semicircular point can be obtained in the time linear to the number of its neutral vertices. 

    %%%%%%%%%%%%%分析$Q_r(x)$的属性
    %  So, $Q_r(y')\in V_r(y)$. The goal is to prove $Q_r(y')\in Q_r(y)$, i.e., every vertex in $Q_r(y')$ is a light vertex of $y$ by $r$. First every vertex in $Q_r(y')$ is not heavy w.r.t. $y$. 
    % Then every vertex is not a descendant of the set of 
    % all heavy vertices of $y$ in $T_r(y)$ by $r$: Assume it is a descendant of of the set of all heavy vertices of $y$ in $T_r(y)$ by $r$. Then its each shortest path to $y$ must contain a heavy vertex of $y$. So, its each shortest path to $y'$ must contain a heavy vertex of $y'$. 
    For any two points $y<y'$ on $e$ (i.e., $t(y)<t(y')$), by Observation~\ref{obs:shortestpathtreeproperty}, we have $V_r(y')\cup\bar{V}(y')\in V_r(y)$. Hence, $Q_r(y')\in V_r(y)$. Additionally, every vertex in $Q_r(y')$ must be light w.r.t. $y$ by $r$. Hence, $Q_r(y')\in Q_r(y)$ and $f_r(e,y')\leq f_r(e,y)$. 
    
    Furthermore, for any vertex $v$ whose function $y = D(v,x)$ is a line segment of slope $+w_v$ for $x\in e$ (i.e., $t(x)\in [0,l(e)]$) in the $x,y$-coordinate plane, solving $D(v,x)=\lambda$ generally generates a point $x_v$ on $e$ so that $v$ becomes heavy w.r.t. any $x\in (x_v, s]$; if $D(v,x)<\lambda$ at any $x\in e$, let $x_v$ be $s$ and thereby $(x_v,s]$ contains no points on $e$; if $D(v,x)>\lambda$ at any $x\in e$, let $x_v$ be any adjacent vertex of $r$ but not $s$, and $(x_v,s]$ contains every point of $e$. Clearly, $v$ is heavy w.r.t. any $x\in (x_v, s]$. For any vertex $v$ that has a semicircular point on $e$, solving $D(v,x) =\lambda$ leads at most two points on $e$, and let $x_v$ be the closest one along $e$ to $r$ among them and $v$'s semicircular point; if $D(v,r) >\lambda$, similarly, let $x_v$ be any adjacent vertex of $r$ but not $s$. Clearly, $v$ is not light at any $x\in (x_v, s]$ by $r$ (but might be light by $s$). Refer to such point $x_v$ as the \textit{turning} point of $v$ on $e$ w.r.t. $r$.  

    Let $V_r$ be the set including all vertices whose functions $y=D(v,x)$ are of slope $+w_v$ for $x\in e$ and all vertices that have semicircular points on $e$. 
    % Indeed, $V_r$ is composed of all vertices $v\in V$ satisfying the following condition: There exists a point on $e$ so that $v$ has a shortest path to it that contains the subpath from $r$ to it along $e$. 
    Clearly, $V_r$ is exactly the set $\cup_{x\in e}\{V_r(x)\cup\bar{V}(x)\}$. The subgraph $G_r$ of $G$ induced by $V_r$ is connected since every vertex of $V_r$ has a shortest path to $r$ containing only $I_e(r,r)$ of $e$.

    Let $S_r = \{x_1, x_2, \cdots, x_z\}$ be the ordered distinct set of the turning points on $e$ of vertices in $V_r$ w.r.t. $r$ such that every point belongs to $e$ and they are sorted ascendingly by their distances to $r$ along $e$. So, $z\leq n$. By Observation~\ref{obs:shortestpathtreeproperty}, for each $1\leq i\leq z$, at any $x\in (x_i,x_{i+1})$, $V_r(x) = V_r(x_{i}) = V_r(x_{i+1})\cup\bar{V}(x_{i+1})$ due to $\bar{V}(x)=\emptyset$. Additionally, each $v\in V_r$ is not in $Q_r(x)$ at any $x\in (x_v, s]$. Hence, $Q_r(x_1) = Q_r(r)$ and for each $1\leq i\leq z$, $Q_r(x) = Q_r(x_{i+1})$ at any $x\in (x_i,x_{i+1}]$. 
    
    Due to $Q_r(x_{i+1})\in Q_r(x_i)$ for each $1\leq i< z$, $f_r(e,x)$ is a piecewise constant function w.r.t. $x\in e$ (i.e., $t(x)\in [0, l(e)]$), and it breaks and falls at only points in $S_r$. See Fig.~\ref{fig:generalf(x)} for an example. Symmetrically, $f_s(e,x)$ is a piecewise constant function of complexity $O(n)$ that monotonically decreases as $x$ moves away from $s$ along $e$ but falls only at points of $e$ where functions $D(v,x)$ break or increase to $\lambda$. 
    % This proves the first statement.

\begin{figure}[h]
    \centering
    \includegraphics[scale=0.7]{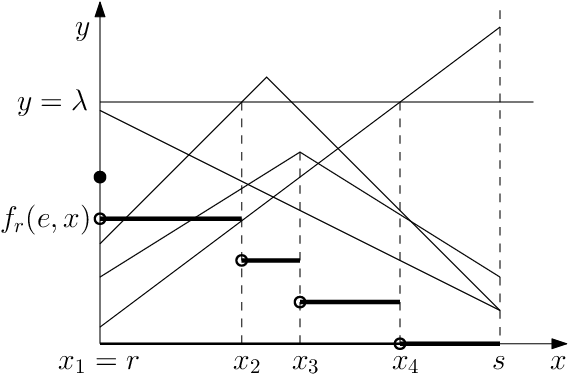}
    \caption{Illustrating that function $f_r(e,x)$ (the heavy line segments) 
    is a piecewise constant function in $x\in e$ that breaks and falls at points where functions $D(v,x)$ increase up to $\lambda$, e.g., $x_2$ and $x_4$, or semicircular points, e.g., $x_1$ and $x_3$.}
    \label{fig:generalf(x)}
\end{figure}
    
    %%%%%%%%%%%%%计算$f_r(e,x)$
    We are now ready to present how to determine $f_r(e,x)$. Start by the following preprocessing work. First, we compute the set $V_r$ and its turning points w.r.t. $r$ on $e$ by adapting the following properties. A vertex $v\in V$ has its function $y=D(v,x)$ of slope $+w_v$ for $x\in [0,l(e)]$ iff $d(v,s) = d(v,r) +l(e)$; $v$ has a semicircular point interior of $e$ if $d(v,s) < d(v,r) +l(e)$ and $d(v,r) < d(v,s) +l(e)$; its semicircular point is at $r$ if $d(v,r) = d(v,s) +l(e)$ and $v$ is not a descendant of subset $\{s\}$ w.r.t. $r$; whereas its semicircular point is at $s$ if $d(v,s) = d(v,r) +l(e)$ and $v$ is not a descendant of subset $\{r\}$ w.r.t. $s$. With these properties and Lemma~\ref{lem:finddescendants}, by the given distance matrix, set $V_r$, all semicircular points, and the turning points of its vertices on $e$ w.r.t. $r$ can be obtained in $O(n + m\log n)$ time.  
    % Hence, with the given distance matrix, set $V_r$ can be obtained in $O(n)$ time. 
    % all vertices in $V$ with $d(v,s) = d(v,r) +l(e)$ include those whose semicircular points on $e$ are at $s$. 
    
    Next, we find the subgraph $G_r$ in $G$ induced by $V_r$ in $O(m+n)$ time. As in Lemma~\ref{lem:finddescendants}, we utilize any self-balanced binary search tree to store the adjacent vertices of every vertex in $G_r$; further, we apply that algorithm in Lemma~\ref{lem:finddescendants} to build sets $P$ and $C$ for each vertex in $G_r$ w.r.t. the source vertex $r$. These can be done in $O(n+m\log n)$ time. 
   
    Third, we compute the set $S_r$ by sorting all obtained turning points on $e$ and deleting the duplicates. Simultaneously, we compute for each distinct point $x'$ in $S_r$ two lists $N$ and $H$: $N$ contains all neutral vertices of $x'$ and $H$ contains all non-neutral vertices in $V_r$ whose turning points w.r.t. $r$ are $x'$ on $e$. So, set $N\cup H$ of a point in $S_r$ contains all vertices of $V_r$ whose turning points are $x'$. Clearly, all these operations can be 
    carried out in $O(n\log n)$ time. Last, we insert point $x_0 = (r, s, t(x_0) = -1)$ at the beginning of $S_r$, and set its set $N =\emptyset$ but include in its set $H$ all heavy vertices of $r$ in $V_r$, which can be found in $O(n)$ time.

    % as the above described, during which each point in $S_r$ is associated with the vertex that generates it. Then, we scan $S_r$ to delete duplicates and simultaneously compute for each distinct point $x'$ in $S_r$ two lists $N$ and $H$: $N$ contains all neutral vertices of $x'$ and $H$ contains all non-neutral vertices in $V_r$ whose weighted distances at $x'$ exactly equal to $\lambda$. 
    % Clearly, the sets $N$ and $H$ of a point in $S_r$ contain all vertices of $V_r$ that generate this point. All these operations can be 
    % finished in $O(n\log n)$ time. Last, we insert point $x_0 = (r, s, t(x_0) = -1)$ at the beginning of $S_r$, and set its set $N$ as empty but include in its set $H$ all heavy vertices of $r$ in $V_r$, which can be found in $O(n)$ time. 
 
    Last, for each $v\in V_r$, we attach a flag $c_r$ to it so that $c_r$ being true means that $v$ is a light vertex of its semicircular point on $e$ by $r$. Initialize this flag of every vertex in $V_r$ as true. 
    
    Overall, the total preprocessing time is $O((m+n)\log n)$. 
    %%%%%%%%%%%%%%%%%The preporcessing work ends

    %%%%%%%%%%%%%%%%%Start to computing $f_r(e,x)$
    We shall now determine $f_r(e,x)$, i.e., $|Q_r(x)|$, at each point of $S_r$ in order. For each $0\leq i\leq z$, denote by $G_i$ the subgraph of $G$ that is induced by $Q_r(x_i)$, and $G_i$ must be connected. 
    % since every vertex in $Q_r(x_i)$ has a shortest path to $x\in e$ in $G$ that includes only vertices in $Q_r(x_i)$ and subpath $I_e(r,x)$ along $e$. 
    % Note that $r\in V(G_i)$ but $e$ is not in $G_i$ and nor is point $x_i$ if $x_i$ is interior of $e$ or at $s$.    
    Define $Q_r(x_0)$ as $V_r$, so $f_r(e,x_0) = |V_r|$ and $G_0 = G_r$. 
    To determine $|Q_r(x_1)|$ (and $G_1$), clearly, it suffices to remove from $G_0$ $x_1$'s heavy vertices and $x_0$'s neutral vertices, i.e., the set $N\cup H$ of $x_0$, as well as all descendants of their set w.r.t. $r$. In general, for each $1\leq i\leq z$, implied by the properties of $Q_r(x)$, $G_i$ and $Q_r(x_i)$ can be obtained by removing 
    from subgraph $G_{i-1}$ all vertices in set $N\cup H$ of $x_{i-1}$ and all descendants of this set in $G_{i-1}$ w.r.t. $r$. 

    Since sets $P$ and $C$ of vertices in $V_r$ w.r.t. $r$ are precomputed, Lemma~\ref{lem:finddescendants} (the main procedure) can be employed to compute $G_i$ for each $1\leq i\leq z$ by iteratively removing from $G_{r}$ all vertices in the set $N\cup H$ of each $x_i$ and all descendants of this set w.r.t. $r$, during which the size of each $G_i$ is attached with $x_i\in S_r$. Clearly, the running time is $O(n + m\log n)$. 

    % Since sets $P$ and $C$ of vertices in $V_r$ w.r.t. $r$ are precomputed, Lemma~\ref{lem:finddescendants} (its main procedure) can be employed to compute $G_i$ for each $1\leq i\leq z$ by iteratively removing for each $1\leq i\leq z$ from the subgraph $G_{i-1}$ all vertices in the set $N\cup H$ of $x_{i-1}$ and all descendants of this set in $G_{i-1}$, during which the size of each $G_i$ is attached with $x_i\in S_r$. Clearly, the running time is $O(n + m\log n)$.
    
    % Since sets $P$ and $C$ for vertices in $V_r$ are precomputed, Lemma~\ref{lem:finddescendants} (its main procedure) can be employed to find all the descendants of the set $N\cup L$ of $x_{i-1}$ in $G_{i-1}$ and thereby $G_{i-1}$ and $Q_r(x_{i-1})$ in $O(n'_{i-1} + m'_{i-1}\log n)$ time where $n'_{i-1}$ is the size of the set $N\cup L$ of $x_{i-1}$ and $m'_{i-1}$ is the total degree of vertices in $N\cup L$ and 
    % descendants of set $N\cup L$ in $G_{i-1}$. 
    % It follows that 

    % Now it is clear to see that after computing the sets $C$ and $P$ for its each vertex of $G_r$ as in Lemma~\ref{lem:finddescendants}, we can apply that routine of 
    % Lemma~\ref{lem:finddescendants} to $G_0$ and iteratively 
    % remove for each $1\leq i\leq z$ from the subgraph $G_i$ all vertices in set $N\cup H$ of $x_i$ and all descendants of this set. 
    % During the procedure, for each $1\leq i\leq z$, we attach the size of $G_i$ with $x_i\in S_r$. 

    % Let $n_i$ be all descendants of set $N\cup H$ 
    % of $x_i$ in $G_i$ and $m_i$ be their total degrees in $G_i$. 
    % So, the running time is $O(\sum_{i=1}^{z}(n_i + m_i\log n))$, 
    % which is $O(m\log n+ n)$. 

    Furthermore, to maintain the coverage information for each vertex in $V_r$ whether it is a light vertex 
    of its semicircular point on $e$ by $r$, additional operations are performed for each $x_i\in S_r$ in the above procedure: For every vertex to remove from $G_i$, we set its flag $c_r$ as false if its semicircular point on $e$ is in segment $(x_i, s]$ on $e$, which can be known in $O(1)$ time.
    
    In summary, with the $O((m+n)\log n)$ preprocessing work, the ordered set of all breakpoints of function $f_r(e,x)$ for $x\in e$ can be obtained 
    in $O(n+m\log n)$ time. 
    
    In an analogous manner, $f_s(e,x)$ for $x\in e$, i.e., the ordered set of all its breakpoints in the ascending order by their distances to $s$, can be determined in $O((m+n)\log n)$ time. Denote by $S_s$ this ordered set of all its breakpoints. While computing $S_s$, a flag $c_s$ is maintained for each vertex to indicate whether it is a light vertex of its semicircular point by $s$. 

    Recall that our goal is to determine $f(e,x)$ that counts the total light vertices of each point $x\in e$, and $f(e,x) = f_r(e,x) + f_s(e,x) - f_{rs}(e,x)$, 
    where $f_{rs}(e,x) =|Q_{rs}(x)|$ and $Q_{rs}(x) = Q_{r}(x)\cap Q_{s}(x)$. Since $Q_{rs}(x) = \emptyset$ at any point on $e$ that is not a semicircular point, $f(e,x)$ is piece-wise constant and breaks at only points in $S_r\cup S_s$ on $e$, which proves the first statement. 
    
    Below, while merging $S_r$ and $S_s$ together into a whole sequence $S$ of distinct points in the ascending order by their distances to $r$ along $e$, we shall determine $f_{rs}(e,x)$ and $f(e,x)$ for every point of $S$ such that $S$ is the ordered set of all breakpoints of $f(e,x)$. All points in $S_r$ (resp., $S_s$) are in ascending order by their distances to $r$ (resp., $s$) along $e$. So, we loop through $S_r$ forward by using index $i$ but loop through $S_s$ backward by using index $j$. 
    
    Every step we compare $x_i\in S_r$ and $x_j\in S_s$ to decide the next point $x_k$ of $S$ in order and compute $f(e,x)$ for $x\in (x_{k-1}, x_k]$. 
    The three cases of $x_i<x_j$, $x_i>x_j$, and $x_i=x_j$ are handled as follows. 

    \begin{enumerate}
        \item Case $x_i<x_j$. Since $x_i$ is closer to $r$ along $e$, $x_k=x_i$. Due to $x_{j+1}<x_i<x_j$, $x_i$ must not be a semicircular point. Hence, $f_{rs}(e,x_i) = 0$. 
        Because $f_s(e,x) = f_s(e,x_{j+1})$ 
        at any $x\in [x_{j+1}, x_j)$ and $f_r(e,x) = f_r(e,x_i)$ at any $x\in (x_{i-1}, x_i]$. 
        For any $x\in (x_{k-1}, x_k]$ in $S$, we have $f(e,x) = f_s(e,x_{j+1}) + f_r(e,x_i)$, 
        which can obtained in $O(1)$ time. 
        So, we join $x_k$ into $S$, attach $f(e,x_k)$ to $x_k$, and increment $i$. 

        \item Case $x_i>x_j$. Set $x_k=x_j$. Due to $x_{i-1}<x_j<x_i$, 
        $x_j$ is not a semicircular point and thereby $f_{rs}(e,x_j) = 0$. 
        Different to the above case, since $f_s(e,x)$ jumps at $x=x_j$, $f(e,x) = f_s(e,x_{j+1}) + f_r(e,x_i)$ for $x\in (x_{k-1}, x_k)$ but at $x=x_k$, $f(e,x) = f_s(e,x_j) + f_r(e,x_i)$. 
        Hence, we cannot attach only $f(e,x_k)$ to $x_k$ (since it means $f(e,x) = f(e, x_k)$ for $x\in (x_{k-1}, x_k]$). 
        To handle such a breakpoint, we attach both above values of $f(e,x)$ to $x_k$ in $S$. 
        Last, $j$ is decremented. 

        \item Case $x_i=x_j$. Set $x_k=x_i$. We first decide in $O(1)$ time 
        whether $x_i$ is a semicircular point on $e$ by checking 
        if its set $N$ is empty or not. Suppose it is not. Neither is $x_j$. So, $f_{rs}(e,x_j) = 0$. 
        As in the case $x_j<x_i$, since $f_s(e,x)$ jumps at $x=x_j$, $f(e,x) = f_s(e,x_{j+1}) + f_r(e,x_i)$ for $x\in (x_{k-1}, x_k)$ 
        but at $x=x_k$, $f(e,x) = f_s(e,x_j) + f_r(e,x_i)$. 
        We attach both values to $x_k$ in $S$, and then increment $i$ but decrement $j$. 

        Otherwise, $x_i$ is a semicircular point on $e$. So, $f_{rs}(e,x_i)\geq 0$. 
        % Recall that $x_i$'s set $N$ contains 
        % all neutral vertices of $x_i$, and so is $x_j$'s set $N$. 
        Recall that every vertex $v$ has two flags $c_r$ and $c_s$ 
        where $c_r$ (resp., $c_s$) is true iff $v$ is a light vertex of 
        its semicircular point on $e$ by $r$ (resp., $s$). 
        To compute $f_{rs}(e,x_i)$, one needs to count the number 
        of vertices in the set $N$ of $x_i$ or $x_j$ whose flags 
        $c_r$ and $c_s$ both are true, which can be done in the time 
        linear to the size of $N$ of $x_i$ or $x_j$. Then, we compute in $O(1)$ time the value $f_s(e,x_{j+1}) + f_r(e,x_i)$ for $f(e,x)$ at any $x\in (x_{k-1}, x_{k})$, and value $f_s(e,x_j) + f_r(e,x_i) 
        - f_{rs}(e,x_i)$ for $f(e,x)$ at $x=x_k$. 
        Join $x_k = x_i$ into $S$ and attach both values to it. Last, increment $i$ and then decrement $j$.
    \end{enumerate}

    Clearly, the above merging step takes $O(n)$ time. Combining all above efforts, we can determine in $O((m+n)\log n)$ time $f(e,x)$ (i.e., all its breakpoints) for $x\in e$. \qed
    \end{proof}

    Recall that the feasibility test is to decide if a point exists in $G$ so that its largest covered self-inclusive $T_\lambda(x)$ is of size no less than $k$, i.e., it has at least $k$ light vertices. We can decide the existence of such a point by applying Lemma~\ref{lem:lightvertexfunction} to every edge of $G$. During the procedure, once such a point is found (so that $f(e,x)$ at this point is at least $k$), $\lambda$ is known to be feasible and so we immediately return. Otherwise, no such points exist and hence $\lambda$ is infeasible. As a result, the feasibility of any given value $\lambda$ can be known in $O(m^2\log n + mn\log n)$ time. 

\section{Solving the Problem on a Vertex-Unweighted Graph}\label{sec:generalunweightedversion}
In this section, we introduce the algorithm for the unweighted case where every vertex is of weight one. 

For any edge $e$ of $G$, the point on $e$ minimizing $\phi(x,G)$ among its all points is called the \textit{local} partial center on $e$; denote it by $x^*_e$. $x^*_e$ may not be unique on $e$ but let $x^*_e$ represent any of them. To find $x^*$ on $G$, our strategy is to compute for every edge of $G$ its local partial center $x^*_e$ and the objective value at $x^*_e$ such that $x^*$ is the one with the smallest objective value among them. 

Consider the problem of computing the local partial center on an arbitrary edge $e(r,s)$ of $G$, which is denoted by $e$ for simplicity. As analyzed in Section~\ref{sec:preliminary}, in the unweighted case, at any point $x$ on $G$, $\phi(x,G)$ equals to the distance of $x$ to its $k$-th closest vertex, so the $k$-subtree $T'$ of its shortest path trees in $G$ with $\phi(x,G) = \max_{v\in V(T')}d(x,v)$ is induced by its $k$ closest vertices. It thus follows that finding the local partial center $x^*_e$ on $e$ requires to compute the distance of every point on $e$ to its $k$-th closest vertex. As we shall show below, it is equivalent to solve the problem of computing the \textit{$k$-th level} of a set of $x$-monotone polygonal chains. 

Consider function $y = d(v,x)$ of each $v\in V$ for $x\in e$ in the $x,y$-coordinate plane. Function $y = d(v,x)$ defines a $x$-monotone polygon chain $C_v$ whose leftmost and rightmost endpoints are respectively on the vertical line $x=0$ and $x=l(e)$. Specifically, either $C_v$ is a line segment of slope $+1$ or $-1$, or it consists of two line segments where the left one is of slope $+1$ and the right one is of slope $-1$. We refer to the segment of slope $+1$ on $C_v$ as its \textit{$x$-segment}, and its segment of slope $-1$ as its \textit{$y$-segment}. (Because rotating the $x,y$-plane by $45$ degree along the positive 
$x$-axis causes that the segment of slope $+1$ becomes horizontal and the segment of slope $-1$ becomes vertical.)

Let $C$ be the set of the $n$ polygonal chains of all vertices w.r.t. $e$. $C$ can be obtained in $O(n)$ time by the given distance matrix. If values $d(v,r)$ of all vertices are distinct and so are values $d(v,s)$, then any two chains in $C$ intersect at most once. If so, $x^*_e$ is of the same $x$-coordinate as the lowest point on the \textit{$k$-th level} of an arrangement of $C$ that is the closure of the set of all points that lie on chains of $C$ such that the open downward-directed vertical ray emanating from this closure intersects exactly $k$ chains of $C$; this $k$-th level of $C$ can be computed in $O(n\log^2n+nk)$ time~\cite{ref:EverettAn96}. 

Generally, chains of $C$ may overlap with each other. Specifically, the $x$-segments (resp., the $y$-segment) of two vertices with same values $d(v,r)$ (resp., $d(v,s)$) have the same left (resp., right) endpoint, so they overlap partially or fully from their common left (resp., right) endpoint. This implies that the $k$-th level of $C$ may not exist since it requires exactly $k$ chains not above its closure. Hence, we give a more generalized definition: The (general) $k$-th level of $C$ is the closure of a set of all points that lie on chains of $C$ such that 
the open downward-directed vertical ray emanating from this closure intersects fewest but at least $k$ chains in $C$. See Fig.~\ref{fig:klevel} for an example. 

\begin{figure}
    \centering
     \includegraphics[scale=0.5]{}
     \caption{Illustrating the $6$-level closure (the heavy chain) of seven functions $D(v,x)$, which has five vertices $p_1, p_2, p_3, p_4, p_5$. It cannot turns at the intersection $p_{1,2}$.}
     \label{fig:klevel}
\end{figure}

Clearly, the (general) $k$-th level of $C$ always exists and $x^*_e$ is decided by its lowest point. Let $C^k$ denote the $k$-th level of $C$. We can utilize the following lemma to construct $C^k$, whose proof is in Section~\ref{sec:kthlevel}. 

\begin{lemma}\label{lem:klevelclosure}
    $C^k$ is a $x$-monotone polygon chain of complexity $O(n)$, and it can be constructed in $O(n\log n)$ time. 
\end{lemma}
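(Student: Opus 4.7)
The plan is to establish three properties of $C^k$ in sequence: $x$-monotonicity, complexity $O(n)$, and constructibility in $O(n\log n)$ time. The first is immediate from the definition: $C^k$ assigns to each $x\in[0,l(e)]$ the unique value $\min\{y:|\{u\in V:C_u(x)\le y\}|\ge k\}$, so $C^k$ is the graph of a single-valued function of $x$ and is therefore $x$-monotone.

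For the complexity bound, the key claim is that each chain $C_v$ contributes at most one maximal sub-interval of its $x$-segment and at most one of its $y$-segment to $C^k$, giving overall complexity at most $2n$. Fix a chain $v$ and consider its $x$-segment, on which $C_v(x)=L_+^v(x):=d(v,r)+x$. For any other chain $w$, the difference $C_w(x)-L_+^v(x)$ equals the constant $d(w,r)-d(v,r)$ on $w$'s $x$-segment and the strictly decreasing affine function $d(w,s)+l(e)-d(v,r)-2x$ on $w$'s $y$-segment; the two expressions coincide at $w$'s peak, so this difference is continuous and non-increasing in $x$. Consequently both $|\{w:C_w(x)<L_+^v(x)\}|$ and $|\{w:C_w(x)\le L_+^v(x)\}|$ are non-decreasing while $v$ remains on its $x$-segment. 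The condition for $v$'s $x$-segment to lie on $C^k$ at $x$ is that the first count be less than $k$ and the second be at least $k$; since both counts are monotone, this holds on a single contiguous sub-interval. A symmetric argument, with $r$ and $s$ swapped and the direction of $x$ reversed, handles $v$'s $y$-segment, and summing over all $n$ chains yields $O(n)$.

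For the $O(n\log n)$ construction, I propose a left-to-right sweep that maintains the chain $v$ currently on $C^k$ together with its active segment. An initial $O(n\log n)$ preprocessing sorts the chains by $d(v,r)$, by $d(v,s)$, and by peak $x$-coordinate. The chain on $C^k$ at $x=0$ is the one of rank $k$ in the $d(\cdot,r)$ order. Given the current chain $v$ on its $x$-segment, the next breakpoint of $C^k$ is the earliest of (i) $v$'s peak $x_v^*$ and (ii) a catcher event in which a chain $w$ currently above $v$ drops (on its $y$-segment) to meet $L_+^v$. The computation from the complexity argument shows that a viable catcher must satisfy $d(w,r)>d(v,r)$, so that its crossing with $L_+^v$ lies on $w$'s $y$-segment, and its crossing coordinate $x^*=(d(w,s)+l(e)-d(v,r))/2$ must exceed the current sweep position; among viable catchers the earliest is the one with smallest $d(w,s)$. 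A balanced search tree keyed by $d(w,s)$ over the remaining candidates, together with a mirrored structure (keyed by $d(w,r)$) for the case when $v$ is on its $y$-segment, supports each query and each update in $O(\log n)$ time. Since $C^k$ has $O(n)$ pieces, the sweep runs in $O(n\log n)$.

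The main obstacle will be to argue that the auxiliary structures can be maintained correctly across transitions without any per-event linear scan. The segment-wise monotonicity from the complexity argument is crucial here: once a chain's relevant segment has left $C^k$ it never re-enters, so each chain is inserted into or removed from the candidate pool $O(1)$ times, giving amortized $O(\log n)$ per event. Degeneracies arising from overlapping segments, permitted by the generalized definition of $C^k$, are handled by a fixed symbolic tie-breaking rule, for instance lexicographic in $(d(v,r),d(v,s))$, which does not alter the output.
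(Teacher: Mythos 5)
Your first two steps are fine: identifying $C^k$ with the pointwise ``$k$-th smallest of the $n$ chain values'' gives $x$-monotonicity, and your per-chain argument (the difference $C_w-L_+^v$ is non-increasing, so the two counts are monotone and the portion of $C^k$ on each $x$- or $y$-segment is one contiguous sub-interval) is a clean and correct proof of the $O(n)$ complexity bound, arguably tidier than the paper's assertion of the same structural facts. The genuine gap is in the construction algorithm, which is the heart of the lemma. Your sweep declares the next breakpoint of $C^k$ to be the \emph{earliest} crossing of the current line by another chain (a ``catcher'' on its $y$-segment when you are on an $x$-segment, and the mirrored event when you are on a $y$-segment), with viability defined purely geometrically ($d(w,r)>d(v,r)$ and crossing ahead of the sweep). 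But $C^k$ does not turn at every such crossing: when a chain $w$ drops through the current line, $C^k$ turns down onto $w$ only if at least $k$ chains remain below or on $w$'s line just after the crossing; symmetrically, while descending on a $y$-segment, chains crossing up through the line merely decrement the count of chains not above it, and $C^k$ turns up only at the crossing where that count would fall below $k$. (Take $k=n=2$: while the level sits on $v$'s $x$-segment, the other chain crossing down through it is your earliest catcher, yet the $2$-nd smallest value stays on $v$ and does not turn.) Nothing in your proposal maintains or tests this count, so the algorithm as described outputs the wrong polyline; this threshold test is exactly what the figure's remark ``it cannot turn at $p_{1,2}$'' refers to.

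Repairing this is not a one-line fix, because between two consecutive turns of $C^k$ the current line can be crossed by many chains, and each crossing changes the count; a priority-queue sweep that processes crossings one by one has no obvious near-linear bound, and your amortization claim (``each chain enters/leaves the candidate pool $O(1)$ times'') does not cover it, since a chain can cross the zigzag $C^k$, and hence re-enter the relevant side, many times. The paper's construction is organized around precisely this difficulty: it groups the $x$-segments and $y$-segments into sequences lying on common lines, sorted by $x$-intercept and, within a sequence, by endpoint height ($S^+$ and $S^-$), maintains for the current segment $l_i$ the piecewise-constant count $f(l_i,x)$ of chains not above its line, decides turns by comparing $f$ against $k$, and bounds the total work by a charging argument in which each sequence is scanned a constant number of times. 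Also minor but of the same nature: your initialization pins down the starting height (rank $k$ in $d(\cdot,r)$) but not whether $C^k$ initially follows the overlapping $x$-segments or the $y$-segments, which again requires the below-or-on count at $x=0$; and the claim that symbolic tie-breaking ``does not alter the output'' deserves justification, since the generalized definition of $C^k$ exists precisely because of overlapping segments.
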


Based on the above analysis, $x^*$ and $\lambda^*$ can be computed as follows: For each edge $e$ of $G$, we first determine in $O(n)$ time the set $C$ of functions $d(v,x)$ of each $v\in V$, and then apply Lemma~\ref{lem:klevelclosure} to find the lowest point on $C^k$ in $O(n\log n)$ time. Among all obtained points, we keep the one with the smallest $y$-coordinate since $\lambda^*$ is its $y$-coordinate and $x^*$ is its projection to the corresponding edge. The total running time is $O(mn\log n)$.  

\begin{theorem}\label{the:unweightedgeneralresult}
    The unweighted connected $k$-vertex one-center problem can be solved in $O(mn\log n)$ time. 
\end{theorem}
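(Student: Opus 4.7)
The plan is to prove Theorem~\ref{the:unweightedgeneralresult} by reducing the computation of $x^*$ and $\lambda^*$ to locating the lowest point on the $k$-th level $C^k$ for each edge of $G$. Let $e=e(r,s)$ be an arbitrary edge. By Observation~\ref{obs:x^*interiorofT^*}, in the unweighted setting $\phi(x,G)$ at any $x\in G$ equals the distance from $x$ to its $k$-th closest vertex in $V$. Computing the local partial center $x^*_e$ therefore reduces to finding $x\in e$ minimizing this $k$-th-nearest distance.

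Parameterizing $e$ by $t(x)\in[0,l(e)]$ and drawing the graph of $y=d(v,x)$ for each $v\in V$ in the $x,y$-plane yields a family $C$ of $n$ $x$-monotone polygonal chains of complexity $O(1)$ each: either a single segment of slope $+1$ or $-1$, or a concatenation of the two meeting at the semicircular point of $v$ on $e$. Since $d(v,r)$ and $d(v,s)$ are retrievable in $O(1)$ time from the given distance matrix, $C$ is built in $O(n)$ time per edge. Under the generalized $k$-th level definition, the $y$-value of $C^k$ at abscissa $x$ equals the distance from $x$ to its $k$-th closest vertex, because the ``fewest but at least $k$'' rule precisely handles ties among chains passing through the same height (including the overlaps forced by tied values of $d(v,r)$ or $d(v,s)$). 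Thus the lowest point of $C^k$ has $x$-coordinate $x^*_e$ and $y$-coordinate $\phi(x^*_e,G)$, and because every segment of $C^k$ has slope $\pm 1$ this minimum is always attained at a vertex of $C^k$. Invoking Lemma~\ref{lem:klevelclosure} and then performing a single linear sweep over the $O(n)$ vertices of $C^k$ yields $x^*_e$ and $\phi(x^*_e,G)$ in $O(n\log n)$ time.

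I iterate this per-edge procedure over all $m$ edges of $G$ and retain the candidate with smallest $y$-value. By Observation~\ref{obs:x^*interiorofT^*} the optimum $x^*$ lies on some edge, so it appears among the per-edge minimizers; the retained pair therefore equals $(x^*,\lambda^*)$. The total running time is $m\cdot O(n\log n)=O(mn\log n)$, matching the theorem. The main technical obstacle is packaged inside Lemma~\ref{lem:klevelclosure}, namely constructing $C^k$ in $O(n\log n)$ time despite the overlapping chains produced by tied values of $d(v,r)$ or $d(v,s)$; without this refinement, the classical arrangement-based bound of $O(n\log^2 n+nk)$ for a standard $k$-level computation would only give an overall cost of $O(mn\log^2 n+mnk)$, which is worse than the claimed bound whenever $k=\omega(\log n)$.
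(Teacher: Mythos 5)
Your proposal is correct and follows essentially the same route as the paper: per edge, build the $O(n)$ chains $y=d(v,x)$ from the distance matrix, use the generalized $k$-th level definition to handle overlapping chains, invoke Lemma~\ref{lem:klevelclosure} to get the lowest point of $C^k$ in $O(n\log n)$ time, and take the best candidate over all $m$ edges for a total of $O(mn\log n)$. The justification that $\phi(x,G)$ equals the distance to the $k$-th closest vertex (via Observation~\ref{obs:x^*interiorofT^*}) and the observation that the naive $k$-level bound would be too slow match the paper's reasoning as well.
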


% \begin{figure}[h]
%     \centering
%     \includegraphics[scale=0.5]{klevel.eps}
%     \caption{Illustrating the $4$-level closure (the blue polygonal chain) of five distance functions $D(v,x)$.}
%     \label{fig:enter-label}
% \end{figure}

\subsection{Computing the $k$-th Level}\label{sec:kthlevel}
    The $k$-th level $C^k$ of $C$ starts with a chain in $C$ (its leftmost segment) that intersects line $x=0$ at the lowest point so that at least $k$ chains in $C$ intersect $x=0$ below or at this point. Let $C^+$ (resp., $C^-$) be the set of all chains whose $x$-segments 
    (resp., $y$-segments) intersect $x=0$ at this point. All chains of $C^-$ overlap fully; the $x$-segments of chains in $C^+$ overlap partially or fully 
    starting from this point, which is their common left endpoint. 
    If $C^-$ and chains below this point at $x=0$ exceed $k-1$ chains, then $C^k$ starts at this point and follows chains in $C^-$, i.e., their common $y$-segment. Otherwise, $C^k$ follows the overlapping $x$-segment of chains in $C^+$. 

    As $x$ increases, $C^k$ always follows the chains that it lies on except when one of the following situations occurs:    
    
    \begin{enumerate}
        \item $C^k$ reaches an endpoint of the chains on $x=l(e)$ where $C^k$ terminates; 
        \item $C^k$ reaches an intersection of two chains in $C$ where their slopes are of opposite signs.  
    \end{enumerate}

    It is clear to see that $C^k$ is a $x$-monotone polygonal chain (zigzag) of at most $2n$ line segments. More specifically, $C^k$ overlaps with the $x$- and $y$-segments of chains in $C$ alternatively; for any line segment on $C^k$ of slope $+1$ (resp., $-1$), the $x$-intercept of the line containing it is smaller than that of any succeeding line segment of slope $+1$ (resp., $-1$). These properties of $C^k$ support the correctness of the following algorithm for constructing $C^k$. 

    \paragraph{\textbf{Preprocessing}}
    To construct $C^k$, we compute two ordered sets (arrays) $S^+$ and $S^-$ for $C$. $S^+$ stores all $x$-segments of chains in $C$ in the following manner: Every element of $S^+$ is a sequence of overlapping $x$-segments sorted descendingly by the $y$-coordinates of their right endpoints; all sequences (arrays) of $S^+$ are in the ascending order by the $x$-intercepts of the lines containing their own $x$-segments, which is the left-to-right order of $C^k$ intersecting these (parallel) lines. Similarly, $S^-$ stores sequences of $y$-segments so that $y$-segments on the same line are in the same sequence where they are sorted descendingly by the $y$-coordinates of their left endpoints, and all sequences in $S^-$ are sorted ascendingly by the $x$-intercepts of the lines containing their $y$-segments, i.e., the left-to-right order of $C^k$ intersecting their (parallel) lines. Obviously, $S^-$ and $S^+$ can be derived in $O(n\log n)$ time. 

    % \subsubsection{The Algorithm}
    \paragraph{\textbf{The algorithm}}
    Let $N$ be the complexity of $C^k$ and $l_i$ denote the $i$-th segment on $C^k$ in the left-to-right order. There is a (unique) sequence in $S^-\cup S^+$ where $l_i$ overlaps with its segments; for convenience, denote this sequence and its index in array $S^-$ or $S^+$ both by $I_i$. 
    Let $p_i$ be $l_i$'s left endpoint, which is the $i$-th vertex of $C^k$. For a line segment, we refer to the $x$-intercept of the line containing it as its $x$-intercept. For any two sequences respectively in $S^-$ and $S^+$, every pair of segments from different sequences intersect at most once and intersections of all pairs are at the same point. 
    
    By the properties of $C^k$, all segments $l_i$ of $C^k$ with odd indices have their sequences $I_i$ all in the same set of $S^+$ and $S^-$, and sequences $I_j$ of segments $l_j$ with even indices are all in the other set. Moreover, for each $1<i<N$, 
    only segments that are parallel to $l_{i-1}$ but of larger $x$-intercepts are likely to intersect $l_i$ and thereby overlap $l_{i+1}$. Hence, we have index $I_{2i-1}<I_{2i+1}$ for each $1\leq i\leq \frac{N-1}{2}$, 
    and $I_{2j}<I_{2j+2}$ for each $1\leq j\leq \frac{N-2}{2}$. 

    It follows that extending lines of all $x$- and $y$-segments forms a grid of the $x,y$-plane so that $C^k$ is a $x$-monotone path on the grid that starts from the intersection $p_1$ where segments in $I_1$ intersect $x=0$, follows in order the longest segment in each $I_i$ all the way to the right, turns upward or downward at the intersection between segments in $I_i$ and $I_{i+1}$ for each $1\leq i< N$, and ends at the intersection $p_{N+1}$ where segments in $I_N$ intersect $x=l(e)$. 
    
    Define function $f(l_i,x)$ as the number of chains in $C$ not above the line of segment $l_i$ of $C^k$ for $x\in [x(p_i), l(e)]$. $f(l_i,x)$ is a piecewise constant function and changes only at $x$-coordinates of intersections between $l_i$'s line and segments in the succeeding sequences of $I_{i-1}$ in the set containing $I_{i-1}$. If $l_i$ is of slope $-1$, $f(l_i,x)$ is monotonically decreasing as $x$ increases since $l_i$ intersects only $x$-segments; if $l_i$ is of slope $+1$, similarly, $f(l_i,x)$ must be monotonically increasing. Hence, $C^k$ must turn at a breakpoint of $f(l_i,x)$ if it turns to be less than $k$ after this breakpoint, or the turn of $C^k$ at this breakpoint reduces chains not above $C^k$ but still ensures $k$ chains below or on $C^k$.   
    
    Guided by the above analysis, we can loop through $S^+$ and $S^-$ alternatively to find all sequences $I_i$ and vertices $p_i$ of $C^k$ in order by computing the breakpoints of their functions $f(l_i,x)$. 

    % More specifically, we first determine $I_1$ and $p_1$ as the above described. For each $1<i<N$, assuming $I_i$ is in $S^-$ (resp., $S^+$), we compute the breakpoints of $f(l_i,x)$ in order to find the sequence after $I_{i-1}$ in $S^+$ (resp., $S^-$) that intersects the longest segment of $I_i$ at $p_{i+1}$. 

    Suppose that we are about to determine $I_{i+1}$ and $p_{i+1}$. While visiting sequences from entry $I_{i-1}+1$ in $S^+$ or $S^-$, we compute $f(l_i,x)$ at the intersection generated by every sequence and $I_i$ in order until all sequences after $I_{i-1}$ have been processed, or a sequence $I'$ is encountered 
    so that either of the following conditions satisfies: (1) If $I_i\in S^-$, $f(l_i,x)$ becomes less than $k$ after the intersection caused by $I_i$ and sequence $I'\in S^+$; 
    (2) If $I_i\in S^+$, the longest segment of $I'$ has at least $k$ chains not above it at any point of $x$-coordinate slightly larger than that of the intersection by $I'$ and $I_i$, i.e., turning $C^k$ at the intersection by $I'$ and $I_i$ still ensures at least $k$ chains below or on $C^k$. In the former situation, $l_i$ is exactly the last (rightmost) segment on $C^k$, so the last vertex is the intersection of the segments in $I_i$ and line $x=l(e)$. 
    In the later situation, $C^k$ must turn at the intersection by $I_i$ and $I'$, so $I_{i+1} = I'$ and $p_{i+1}$ is exactly this intersection. The details of handling the two cases in the later situation are presented as follows.  
    %%%计算$S(l_{i+1})$ in two cases: $+1$ slope and $-1$ slope
    \begin{enumerate}
    \item Case $I_i\in S^-$.
    Sequence $I_{i+1}$ is after $I_{i-1}$ in $S^+$. Starting from index $I_{i-1}+1$, 
    we visit each sequence $I\in S^+$ to determine value $f(l_i,x)$ at the $x$-coordinate of the intersection by $I_i$ and $I$ until $I_{i+1}$ is found. Let $Q_i=\{p_{i,1} = p_i, p_{i,2}, \cdots\}$ be the ordered set of all intersections generated by $I_i$ and all sequences in $S^+$ from $I_{i-1}$ to the end. Note that if $l_i$ is not the last segment of $C^k$ then there is at least one sequence after $I_{i-1}$ in $S^+$ such that its longest segment intersects that of $I_i$'s. Denote by $I^i_j$ the sequence in $S^+$ whose longest $x$-segment intersects that of $I_i$ at $p_{i,j}$.   
        
    As illustrated in Fig.~\ref{fig:negativei}, if the longest segment of the current sequence $I$ does not intersect that of $I_i$, then $f(l_i,x)$ remains constant (since the whole chain of every $x$-segment 
    in $I$ remains below the line of $l_i$). Otherwise, $f(l_i,x)$ breaks at their intersection. It takes $O(1)$ time to decide if $I$'s longest segment intersects that of $I_i$ since $x$-segments in any sequence of $S^+$ are in the descending order of the $y$-coordinates of their right endpoints.  
        
        \begin{figure}[h]
        \centering
        \includegraphics[scale = 0.5]{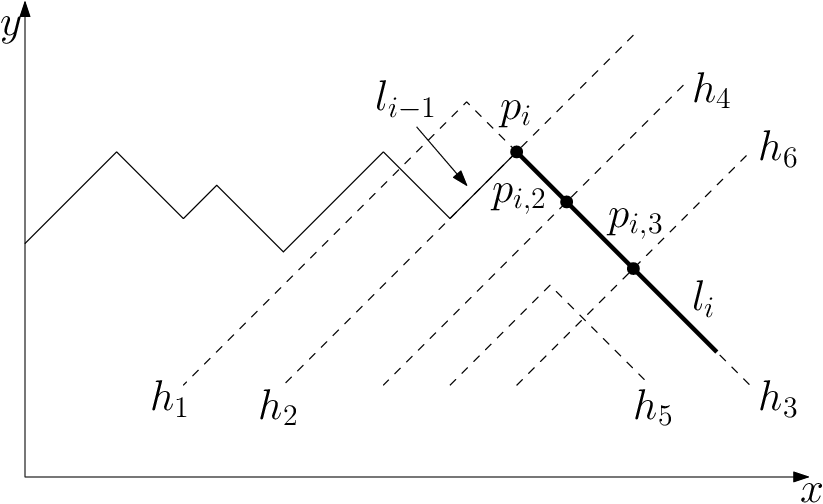}
        \caption{Illustrating the case of segment $l_i$ on $C^k$ is of slope $-1$: As $x$ increases from $x(p_{i,2})$, the $x$-segment of chain $h_4$ has its right endpoint higher than $p_{i,2}$ and so $h_4$ becomes above $l_i$ while chain $h_3$ is still on $l_i$ since the right endpoint of its $x$-segment is lower than or at $p_{i,2}$; moreover, the $x$-segment of chain $h_5$ does not intersect $l_i$ and so $h_5$ is below $C^k$, and the $x$-segment of chain $h_6$ is on $l_i$ at $x = x(p_{i,3})$; it follows that $f(l_i,x(p_{i,3})) = f(l_i,x(p_{i,2})) - 1$.}
        \label{fig:negativei}
        \end{figure}

    Generally, $I_i$ and $I$ has an intersection, and suppose their intersection is $p_{i,j+1}$. So, $I$ is sequence $I^i_{j+1}$ in $S^+$. At $x=x(p_{i,j})$, $f(l_i,x(p_{i,j}))$ equals to the total number of chains below $l_i$, chains whose only $y$-segments contain $p_{i,j}$, chains whose only $x$-segments contain $p_{i,j}$, and chains whose $x$-segments and $y$-segments both contain $p_{i,j}$ (i.e., whose peaks are $p_{i,j}$). For $x>x(p_{i,j})$, only chains whose only $x$-segments contain $p_{i,j}$ turn to be above the line of $l_i$; the $x$-segments of all these chains are all $x$-segments stored in sequence $I^i_{j}$ whose right endpoints are of $y$-coordinates larger than $y(p_{i,j})$ (i.e., higher than $p_{i,j}$). In addition, every chain with $x$-segments in sequence $I = I^i_{j+1}$ is below $l_i$'s line at $x=x(p_{i,j})$. Thus, $f(l_i,x)$ at any $x\in (x(p_{i,j}), x(p_{i,j+1})]$ equals to $f(l_i,x(p_{i,j}))$ minus the number of $x$-segments in $I^i_j$ whose right endpoints are higher than $p_{i,j}$. It implies that $f(l_i,x(p_{i,j+1}))$ can be obtained in $O(|I^i_j|)$ time by scanning $I^i_j$ to count all such $x$-segments. 
        
    % In addition, every chain whose $x$-segment is in sequence $I = I^i_{j+1}$ is below $l_i$'s line at $x=x(p_{i,j})$. 
    
    % Hence, $f(l_i,x)$ at any $x\in (x(p_{i,j}), x(p_{i,j+1})]$ equals to $f(l_i,x(p_{i,j}))$ minus the number of $x$-segments in $I^i_j$ whose right endpoints are higher than $p_{i,j}$. Consequently, $f(l_i,x(p_{i,j+1}))$ can be obtained in $O(|I^i_j|)$ time by scanning $I^i_j$ to count all such $x$-segments. 

    Now we are ready to present the routine for computing every breakpoint of $f(l_i,x)$. Through the whole procedure, we always maintain the last breakpoint $f'$ obtained, its corresponding sequence (index) $I'$ in $S^+$, and the intersection $p'$ by $I'$ and $l_i$. At the beginning, $p'= p_i = p_{i,1}$ and $f' = f(l_{i-1},x(p_{i,1}))$, which have been set at the end of the iteration for determining $p_i$ and $I_i$. So, we only set $I' = I_{i-1}$ initially. 
        
    Starting from index $I_{i-1}+1$, for each sequence $I$ in $S^+$, we first check in $O(1)$ time if the longest segments of $I_i$ and $I$ intersect. If no, we skip it and continue to process next sequence in order. Otherwise, we proceed to determine in $O(|I'|)$ time $f(l_i,x)$ at their intersection $q$ as follows. Scan sequence $I'$ to count the $x$-segments with higher right endpoints than $p'$, so $f(l_i,x)$ at any $x\in (x(p'), x(q)]$ equals to $f'$ minus the total number of such $x$-segments. If $f(l_i,x(q))\geq k$, then $p'$ is not $p_{i+1}$ and $C^k$ turns at most at $q$; hence, after setting $I' = I$, $p' = q$, and $f' = f(l_i,x(q))$, we continue to process next sequence of $S^+$. Otherwise, $f(l_i,x(q))<k$, which means that $C^k$ must turn at the last intersection $p'$. So, we set $p_{i+1}=p'$ and $I_{i+1} =I'$. 
    
    It should be mentioned that at the end of the iteration for determining $p_{i+1}$ and $I_{i+1}$, variable $p'$ and $f'$, which are respectively $p_{i+1}$ and $f(l_{i+1}, x(p_{i+1}))$, have been initialized properly for determining $f(l_{i+1},x)$ further.
            
    Regarding to the time complexity, it should be noted that sequences between $I_{i+1}+1$ and the last visited $I''$ in $S^+$ that causes $f(l_i,x)<k$ will be visited again as determining the breakpoints of $f(l_{i+2},x)$ to find $p_{i+3}$ and $I_{i+3}$ further. Every sequence between $I_{i+1}+1$ and $I''-1$ in $S^+$ has its $x$-segment not intersect $l_i$'s line and their chains are all below $C^k$. But the longest segment of $I''$ may intersect $l_{i+2}$. If so, $I''$ must be the first sequence after $I_{i+1}$ in $S^+$ whose $x$-segment intersects $l_{i+2}$, which means index $I''\leq I_{i+3}$ in $S^+$. Otherwise, they do not intersect and $I''< I_{i+3}$. Due to $I''\leq I_{i+3}$, after finding $I_{i+3}$, it is not likely to visit again these sequences between $I_{i+1}+1$ and $I''$ in $S^+$. Additionally, in both iterations, $I''$ is processed by counting $x$-segments in $I_{i+1}$ of right endpoints respectively higher than $p_{i+1}$ and $p_{i+2}$. Thus, we can charge the time on processing every sequence between $I_{i+1}+1$ and $I''$ in $S^+$ for finding $I_{i+1}$ to the iteration of finding $p_{i+3}$ and $I_{i+3}$.  
    
    As a consequence, $p_{i+1}$ and $I_{i+1}$ can be found in the time linear to the total size of sequences between $I_{i-1}$ and $I_{i+1}$ in $S^+$. 

    \item Case $I_i\in S^+$. Sequence $I_{i+1}$ is after entry $I_{i-1}$ in $S^-$. $f(l_i,x)$ breaks only at the $x$-coordinates of intersections caused by $I_i$ and sequences in $S^-$ from $I_{i-1}$ to the end. Recall that $Q_i = \{p_{i,1} = p_i, p_{i,2}, \cdots\}$ is the ordered set of all these intersections. For each intersection $p_{i,j}$, $f(l_i,x(p_{i,j}))$ equals to the total number of chains below $l_i$, chains whose only $x$-segments contain $p_{i,j}$, chains whose only $y$-segments contain $p_{i,j}$, and chains whose peaks are at $p_{i,j}$.

        \begin{figure}[h]
        \centering
        \includegraphics[scale = 0.5]{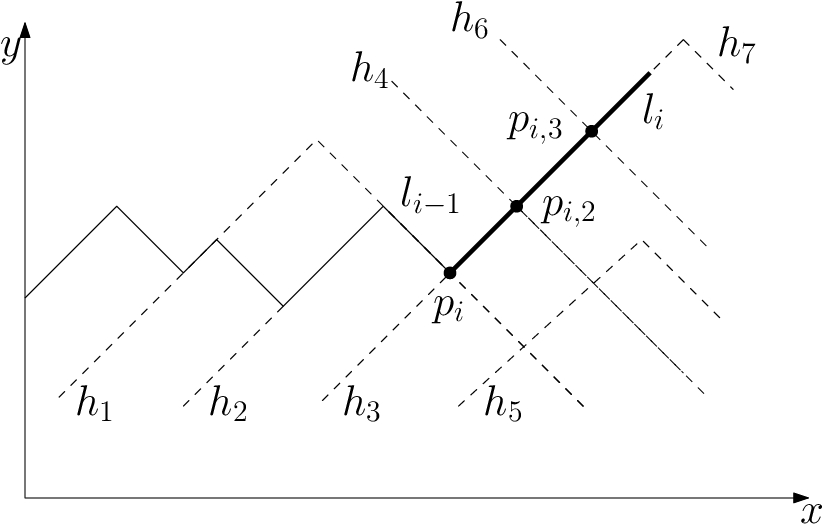}
        \caption{Illustrating the case of segment $l_i$ on $C^k$ is of slope $+1$: As $x$ increases from $x(p_{i,2})$, chains $h_1$ and $h_2$ are still below $l_i$ and the $y$-segments of chain $h_3$ and $h_4$ intersecting $l_i$ at $p_{i,2}$ become below $l_i$; moreover, chain $h_5$ whose $y$-segment does not intersect $l_i$ is below the line of $l_i$ but chain $h_6$ that is above the line of $l_i$ at any $x<x(p_{i,3})$ intersects $l_i$ at $p_{i,3}$ and become below $C^k$ at any $x>x(p_{i,3})$; 
        it follows that $f(l_i,x(p_{i,3})) = f(l_i,x(p_{i,2})) +1$.
        Further, if $C^k$ turns at $p_{i,3}$ then chain $h_7$ on $l_i$ would be above 
        $l_{i+1}$ and so $f(l_{i+1},x(p_{i+1,2}))=f(l_i,x(p_{i,3})) -1$.}
        \label{fig:positivei}
        \end{figure}
    
    As illustrated in Fig.~\ref{fig:positivei}, at any $x\in (x(p_{i,j}), x(p_{i,j+1})]$, chains that are below $l_i$ at $x=x(p_{i,j})$ are below $l_i$'s line; so are chains whose peaks are at $p_{i,j}$ and chains whose only $y$-segments contain $p_{i,j}$; chains whose only $x$-segments contain $p_{i,j}$ are on $l_i$'s line at any $x\in (x(p_{i,j}), x(p_{i,j+1})]$. So, $f(l_i,x) = f(l_i,x(p_{i,j}))$ at any $x\in [x(p_{i,j}), x(p_{i,j+1}))$. However, in sequence $I^i_{j+1}$ that generates $p_{i,j+1}$ with $I_i$, every $y$-segment that intersects $l_i$'s line (at $p_{i,j+1}$) and is of a higher left endpoint than $p_{i,j+1}$ turns to be below $l_i$'s line for $x>x(p_{i,j+1})$ while its chain is above $l_i$'s line at any $x\in [x(p_{i,1}), x(p_{i,j+1}))$. Hence, $f(l_i,x(p_{i,j+1}))$ equals to $f(l_i,x(p_{i,j}))$ plus the number of all such $y$-segments in $I^i_{j+1}$. 

    % It follows that when $l_i$ is of slope $+1$, $f(l_i,x)$ monotonically increases as $x$ increases. But 
    In addition, $C^k$ must turn at $p_{i,j}$ if the line containing $I_j^i$'s $y$-segments has at least $k$ chains below or on it at any $x\in (x(p_{i,j}), x')$ where $x'$ is slightly larger than $x(p_{i,j})$. Since that line is of slope $-1$, the number of chains below or on it at any $x\in (x(p_{i,j}), x')$ is equal to $f(l_i, x(p_{i,j}))$ minus the number of $x$-segments in $I_i$ whose right endpoints are higher than $p_{i,j}$.          

    We now present how to determine $f(l_i,x)$ and find $p_{i+1}$ and $I_{i+1}$. Recall that variables $f'$, $I'$ and $p'$ are used to maintain the last obtained breakpoint of $f(l_i,x)$, the sequence in $S^-$ that decides this breakpoint, and the intersection between the longest segments of $I_i$ and $I'$, respectively. $p'$, $I'$ and $f'$ were set respectively as $p_i$, $f(l_i,x(p_i))$, and $I_i$ at the end of the iteration of computing $p_i$ and $I_i$. So, we only set $I' = I_{i-1}$ initially. 

    Starting from index $I_{i-1}+1$, for every sequence $I$ in $S^-$, we first decide whether $I$'s longest $y$-segment intersects $I_i$'s. If no, we continue to visit next sequence in $S^-$. Otherwise, supposing their intersection is $p_{i,j+1}$, we proceed to compute $f(l_i,x(p_{i,j+1}))$ as follows: Scan sequence $I$, i.e., $I^i_{j+1}$, to count the total number of $y$-segments whose left endpoints are higher than $p_{i,j+1}$; next, set $f(l_i,x(p_{i,j+1}))$ as $f'$ plus the obtained number. 
        
    Proceed to determine whether $C^k$ turns at $p_{i,j+1}$. First, we compute the total number of $x$-segments in $I_i$ whose right endpoints are of larger $y$-coordinates than $p_{i,j+1}$. As we shall show later, this can be achieved in the time linear to the number of $x$-segments in $I_i$ whose right endpoints have $y$-coordinates falling in $(y(p_{i,j}), y(p_{i,j+1})]$. Then, we decide if $f'$ minus the obtained number is less than $k$. If yes, $p_{i,j+1}$ is not $p_{i+1}$, so we set $p' = p_{i,j+1}$, $f' = f(l_i,x(p_{i,j+1}))$, and $I'=I$, and continue to visit next sequence in $S^-$. Otherwise, $p_{i+1}=p_{i,j+1}$ and $I_{i+1} = I$, so we terminate this iteration after setting $p' = p_{i,j+1}$, $f' = f(l_i,x(p_{i,j+1}))$ and $I'=I$. 

    To decide if $p_{i+1}$ is $p_{i,j+1}$, we need to compute the total number of $x$-segments in $I_i$ whose right endpoints are higher than $p_{i,j+1}$. Recall that all $x$-segments in $I_i$ are sorted descendingly by the $y$-coordinates of their right endpoints. To achieve the running time specified above, we maintain an additional variable $j'$ through the whole procedure such that $j'$ is the index of the last $x$-segment in $I_i$ whose right endpoint is higher than $p'$. (Every $x$-segment in $I_i$ after position $j'$ is of a right endpoint lower than or same as $p'$.) At the beginning, besides of setting $I'$, we also scan sequence $I_i$ to find the last $x$-segment whose right endpoint is higher than $p'=p_i$, and subsequently, initialize $j'$ as the index of that segment. 

    While finding $p_{i+1}$, for every intersection $p_{i,j+1}$, since $y(p_{i,j+1})$ is no less than the $y$-coordinate of the right endpoint of the last $x$-segment in $I_i$ that is higher than $p'=p_{i,j}$, which is the $x$-segment at index $j'$ in $I_i$, we loop $I_i$ backward from index $j'+1$ to the beginning to find the first $x$-segment with a higher right endpoint than $p_{i,j+1}$, which is the last $x$-segment in $I_i$ whose right endpoint is higher than $p_{i,j+1}$. Hence, the total number of all $x$-segments in $I_i$ with right endpoints higher than $p_{i,j+1}$ can known in the time linear to the number of $x$-segments in $I_i$ whose right endpoints are of $y$-coordinates in $(y(p_{i,j}), y(p_{i,j+1})]$. As a result, the total time for counting such $x$-segments in $I_i$ for all intersections $p_{i,1}, p_{i,2}, \cdots, p_{i,t}=p_{i+1}$ is linear to the size of $I_i$. 
    
    Overall, when $l_i$ is of slope $+1$, the total running time is linear to the total size of sequences between $I_{i-1}$ and $I_{i+1}$ in $S^-$ and $I_i$.
    \end{enumerate}

    \paragraph{\textbf{Wrapping things up}} Start by computing $p_1$, $I_1$, and $f(l_1,x(p_1))$ as follows. 
    Recall that $p_1$ is the lowest intersection between chains in $C$ and line $x=0$ where at least $k$ chains 
    are not above it at $x=0$. $p_1$ and $f(l_1,x(p_1))$ can be obtained in $O(n)$ time as follows: 
    Compute the $y$-coordinates of intersections between line segments in $S^-\cup S^+$ 
    and line $x=0$, find their $k$-th smallest value by the Selection algorithm~\cite{ref:CormenIn22}, 
    and then count how many $y$-coordinates are no more than it. 
    
    Next, we determine $I_1$ in $O(n)$ time. Recall that $C^-$ (resp., $C^+$) is the subset of 
    chains in $C$ whose $y$-segments (resp., $x$-segments) intersect $x=0$ at $p_1$. Clearly, 
    the $y$-segments of chains in $C^-$ belong to a sequence $I^-$ of $S^-$ such that lines of $y$-segments in every preceding sequence of $I^-$ 
    are below $p_1$ but lines of $y$-segments in every sequence after $I^-$ are above $p_1$; 
    whereas the $x$-segments of $C^+$ are all in a sequence $I^+$ of $S^+$ so that lines of $x$-segments in its each preceding sequence are above $p_1$ but lines of $x$-segments in its each succeeding sequence are below $p_1$. In order to find $I_1$, 
    we scan $S^-$ and $S^+$ again from the beginning to find $I^-$ and $I^+$ by checking whether line $x=0$ 
    intersect the longest segment of each sequence at $p_1$. 
    Simultaneously, we count all $x$- and $y$-segments that intersect line $x=0$ below $p_1$, 
    and count all $y$-segments that intersect $x=0$ exactly at $p_1$. 
    Let $n'$ and $n''$ be the two obtained numbers, respectively. 
    Because every chain containing both $x$-segment and $y$-segment has its peak 
    on neither line $x=0$ nor line $x=l(e)$. Thus, if $n'+n''\geq k$ then $I_1$ is $I^-$, 
    and otherwise, $I_1$ is $I^+$. 

    Initialize $p' = p_1$ and $f' = f(l_1,x(p_1))$. Next, if $I_1 = I^-$, we set $I' = I^+$ but set $I'$ as the first sequence of $S^+$ when $I^+$ is null, that is, no $x$-segments intersect $x=0$ at $p_1$; otherwise, $I_1 = I^+$, so we set $I' = I^-$ if $I^-$ is not null and otherwise, set $I'$ as the first sequence of $S^-$.

    Proceed to loop $S^+$ and $S^-$ alternatively to find $p_i$ and $I_i$ for each $1<i\leq N$ in order. 
    % In details, we first set $p' = p_1$ and $f'= f(l_1, x(p_1))$. If $I_1\in S^-$, we then let $I'$ be $I^+$, and otherwise, let $I'$ be $I^-$. 
    Depending on whether $I_1$ is in $S^+$ or $S^-$, we perform the two above routines alternatively to find vertices on $C^k$ from left to right. Note that once either $S^+$ or $S^-$ runs out during the procedure, we immediately compute the intersection between the longest segment of the current sequence and line $x=l(e)$, which is exactly the last vertex of $C^k$, and then terminate the algorithm. Based on the above analysis, the total running time is linear to the number of segments in $S^+\cup S^-$, which is $O(n)$. 
    
    Recall that we spend $O(n\log n)$ time on computing $S^-$ and $S^+$ in the preprocessing work. In a sum, the $k$-th level $C^k$ of $C$ can be computed in $O(n\log n)$ time. Thus, Lemma~\ref{lem:klevelclosure} is proved. 

\section{Solving the Problem on a Tree}\label{sec:treeversion}
In this section, we propose two faster algorithms for the problem on tree graphs respectively in the weighted and unweighted case, where the algorithm for the unweighted case is presented in the proof of Theorem~\ref{the:unweightedtree}. Note that the intervertex distance matrix of the tree graph is not given. 

Let $T$ represent the given tree graph, and $R(T)$ be its root. So, $|T|=O(n)$. In the preprocessing, we compute and maintain the distance of every vertex to $R(T)$, and then apply the lowest common ancestor data structure~\cite{ref:BenderTh04} to $T$ so that given any two points $y,y'$ on $T$, $d(y,y')$ can be known in $O(1)$ time. The preprocessing time is $O(n)$. 

Below, we shall first give in Section~\ref{sec:treefastertest} a faster feasibility test for tree graphs, and then present in Section~\ref{sec:treecomputelambda} an algorithm based on tree decomposition techniques for computing $\lambda^*$. 

\subsection{A Faster Feasibility Test on Trees}\label{sec:treefastertest}
Recall that on a general graph, to decide the feasibility of any given $\lambda$, Lemma~\ref{lem:generalweightedfeasibility} is applied to every edge to find a point whose largest self-inclusive subtree covered by it (under $\lambda$) is of size at least $k$. On tree graphs, however, the following shows that only $O(n)$ points on $T$ need to be considered. 

For each vertex $v$, $D(v,x)$ increases as $x$ moves away from $v$ along any path. 
There is a point $x'$ on path $\pi(v,R(T))$ so that $D(v,x')=\lambda$ if $D(v,R(T))\leq\lambda$ and otherwise, $D(v,x)$ achieves its maximum at $x'$, i.e., $x'=R(T)$. We refer to $x'$ as the \textit{critical} point of $v$ (w.r.t. $R(T)$). Denote by $Q_\lambda$ the set of all $n$ critical points. We have the following observation for $Q_\lambda$. 

\begin{observation}\label{obs:treecriticalpoints}
    There must be a point in $Q_\lambda$ so that the largest self-inclusive subtree covered by it is of size at least $k$ if and only if $\lambda$ is feasible. 
\end{observation}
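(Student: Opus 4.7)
The forward direction ($\Rightarrow$) is immediate: every point of $Q_\lambda$ is a point of $T$, so if some $x \in Q_\lambda$ has $|V(T_\lambda(x))| \ge k$ then $\lambda$ is feasible by Observation~\ref{obs:decision}. The substantive direction is ($\Leftarrow$): assume $\lambda$ is feasible, so by Observation~\ref{obs:decision} there exists $x_0 \in T$ with $|V(T_\lambda(x_0))| \ge k$, and the plan is to push $x_0$ along the path $P = \pi(x_0, R(T))$ toward the root to a critical point $\tilde{x} \in Q_\lambda$ that still covers at least $k$ vertices. Parameterize $P$ by arc length $t$ measured from $x_0$; for each vertex $v$ let $u^v$ be the LCA of $v$ and $x_0$ in $T$ rooted at $R(T)$ (treating $x_0$ as a dummy vertex when $x_0$ is interior to an edge), which sits on $P$ at position $t_{u^v} \ge 0$ and equals $v$ itself when $v$ lies on $P$.

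I first observe that for every $v \in V(T_\lambda(x_0))$ the critical point $x_v$ actually lies on $P$: the assumption $v \in V(T_\lambda(x_0))$ gives $d(v, u^v) \le d(v, x_0) \le \lambda/w_v$, so moving a distance $\min(\lambda/w_v, d(v, R(T)))$ from $v$ along $\pi(v, R(T))$ reaches a point at or past $u^v$ and hence on the $u^v$-to-$R(T)$ portion of $\pi(v, R(T))$, which is a sub-path of $P$. Writing $t_v^{\mathrm{crit}}$ for the position of $x_v$ along $P$, I set
\[
S^{*} \;=\; V(T_\lambda(x_0)) \,\cup\, \{\,p : p \text{ is a vertex of } T \text{ lying on } P\,\}, \qquad t^{*} \;=\; \min_{v \in S^{*}} t_v^{\mathrm{crit}},
\]
and take $\tilde{x}$ to be the point of $P$ at position $t^{*}$. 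By construction $\tilde{x}$ is the critical point of some vertex of $S^{*}$, so $\tilde{x} \in Q_\lambda$; it remains to exhibit a covered self-inclusive subtree at $\tilde{x}$ of size at least $k$.

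The candidate subtree consists of $V(T_\lambda(x_0))$ together with every vertex of $P$ at position at most $t^{*}$. Every vertex in the first piece is close enough to $\tilde{x}$: if $t_{u^v} \le t^{*}$, then $d(v, \tilde{x}) = d(v, u^v) + (t^{*} - t_{u^v})$, and the chain $t^{*} \le t_v^{\mathrm{crit}} \le t_{u^v} + \lambda/w_v - d(v, u^v)$ rearranges to $D(v, \tilde{x}) \le \lambda$; otherwise $t_{u^v} > t^{*}$, so $\tilde{x}$ sits on the $u^v$-to-$x_0$ segment of $P$, $d(v, \tilde{x}) = d(v, x_0) - t^{*}$, and $D(v, \tilde{x}) = D(v, x_0) - w_v t^{*} \le D(v, x_0) \le \lambda$. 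Every vertex $p$ in the second piece is also close enough to $\tilde{x}$: $p \in S^{*}$ forces $t^{*} \le t_p^{\mathrm{crit}} \le t_p + \lambda/w_p$, so $D(p, \tilde{x}) = w_p(t^{*} - t_p) \le \lambda$.

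The main obstacle is self-inclusivity, i.e., ensuring that the two pieces above form a single connected subtree of $T$ to which $\tilde{x}$ is internal or adjacent. The delicate situation is when $\tilde{x}$ lies strictly above every vertex of $V(T_\lambda(x_0))$ on $P$ (which includes the extreme case where $V(T_\lambda(x_0))$ contains no vertex of $P$ at all, as can happen when $x_0$ is interior to an edge whose $P$-side endpoint is heavy): the link from $V(T_\lambda(x_0))$ up to $\tilde{x}$ must then traverse vertices of $P$ that are not in $V(T_\lambda(x_0))$. Including all vertices of $P$ in $S^{*}$ is precisely what resolves this, because each traversed vertex now has its critical point at or past $\tilde{x}$ and therefore remains close enough to $\tilde{x}$. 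The resulting set is a connected subtree of $T$ that either contains $\tilde{x}$ as a vertex (if $\tilde{x}$ is a vertex of $P$) or has the vertex of $P$ immediately below $\tilde{x}$ adjacent to $\tilde{x}$, so it is a covered self-inclusive subtree at $\tilde{x}$ of size at least $|V(T_\lambda(x_0))| \ge k$, yielding $|V(T_\lambda(\tilde{x}))| \ge k$ and completing the proof.
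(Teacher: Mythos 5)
Your proof is correct, and it reaches the paper's conclusion by essentially the same underlying construction but via a more self-contained route. The paper first appeals to the greedy 1-center placement of the feasibility test in~\cite{ref:WangAn21} applied to a covered $k$-subtree $T'$ rooted at $r'$, and then splits into two cases, only the second of which (when all critical points lie strictly above $r'$) uses the ``lowest critical point of $V(T')\cup V(\pi(r',R(T)))$'' idea; you instead start directly from a point $x_0$ with $|V(T_\lambda(x_0))|\geq k$ given by Observation~\ref{obs:decision}, observe that every critical point of $V(T_\lambda(x_0))$ lies on $P=\pi(x_0,R(T))$, and define $\tilde{x}$ in one shot as the lowest critical point over $V(T_\lambda(x_0))$ together with the vertices of $P$, verifying coverage by explicit distance computations ($t^{*}\leq t_v^{\mathrm{crit}}$ rearranging to $D(v,\tilde{x})\leq\lambda$) rather than citing an external lemma, and your uniform treatment absorbs both of the paper's cases. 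What the paper's version buys is brevity by reusing known machinery; what yours buys is independence from~\cite{ref:WangAn21} and a fully checkable coverage argument, including the observation that path vertices below $\tilde{x}$ are covered even if they were heavy for $x_0$, which the paper leaves implicit. One cosmetic slip: your closing sentence claims self-inclusivity via ``the vertex of $P$ immediately below $\tilde{x}$,'' but when $\tilde{x}$ lies on the same edge as $x_0$, strictly below the first vertex of $P$, no such path vertex exists; adjacency then comes from an endpoint of that edge belonging to $V(T_\lambda(x_0))$ (which you already noted must contain one of the two endpoints), so the argument stands after rephrasing that case.
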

\begin{proof}
    We show below that if $\lambda$ is feasible then $Q_\lambda$ must contains such a point. Due to $\lambda\geq\lambda^*$, by Observation~\ref{obs:decision}, there exists a $k$-subtree $T'$ of $T$ so that $T'$ can be covered by a point on $T'$ under $\lambda$. 

    Suppose $T'$ is rooted at vertex $r'$. Consider the $p$-center problem for $T'$ w.r.t. $V(T')$ with $1\leq p\leq k$. Under any given value $\lambda'$, the feasibility test~\cite{ref:WangAn21} for the $p$-center problem traverses $T'$ from $r'$ in the post order to place least centers in a greedy way so that every center is placed at a critical point of $V(T')$ w.r.t. $r'$. When $p=1$, since $\lambda$ is feasible, the center placed by~\cite{ref:WangAn21} under $\lambda$ covers $T'$. 

    If $r'=R(T)$ or $\max_{v\in V(T')}w_vd(v,r')\geq\lambda$, then the placed center in $T'$ must be in $Q_\lambda$. Otherwise, $r'\neq R(T)$ and $\max_{v\in V(T')}w_vd(v,r')<\lambda$, which means the critical point of each $v\in V(T')$ w.r.t. $R(T)$ is in $\pi(r',R(T))/\{r'\}$. 
    Let $c'$ be the lowest critical point of vertices in $V(T')\cup V(\pi(r',R(T)))$, and $c'$ is on $\pi(r',R(T))/\{r'\}$. Clearly, $c'\in Q_\lambda$ and it can cover all vertices in $V(T'\cup \pi(r',c'))$, which induce a connected subtree of size at least $k$ that is adjacent to $c'$ on $T$. Thus, the observation holds. \qed
\end{proof}
 
Our algorithm decides the feasibility of $\lambda$ as follows: Traverse $T$ to compute set $Q_\lambda$. Next, for each $x\in Q_\lambda$, we determine the size of the largest $x$-inclusive subtree $T_\lambda(x)$ covered by $x$. If a point in $Q_\lambda$ has $|T_\lambda(x)|\geq k$, then $\lambda$ is feasible. Otherwise, it is infeasible. 

To compute $Q_\lambda$, we traverse $T$ from $R(T)$ in the post-order, and during the traversal, we maintain the path from $R(T)$ to the current vertex by employing a stack. For each vertex encountered, we perform a binary search on its path to $R(T)$ to compute its critical point. Since the distance of any two points on $T$ can be obtained in $O(1)$ time, the critical point of every vertex can be figured out in $O(\log n)$ time. Hence, computing $Q_\lambda$ takes $O(n\log n)$ time. 

It remains to solve the query problems of counting and reporting $V(T_\lambda(x))$ for any given point $x\in T$. If $T$ is a balanced binary tree, which can be verified in $O(n)$ time, Lemma~\ref{lem:Bquerylargestsubtree} can be employed to construct a data structure $\calA_1$ in $O(n\log n)$ time that answers the two queries in $O(\log^2n)$ and $O(\log^2n + |V(T_\lambda(x))|)$ time, respectively. In general, $T$ is a general tree graph. Then, Lemma~\ref{lem:querylargestsubtree} can be applied to build a data structure $\calA_2$ in $O(n\log n)$ time that counts and reports $V(T_\lambda(x))$ for any $x\in T$ respectively in $O(\log^2n)$ and $O(\log^2n + |V(T_\lambda(x))|)$ time. Note that compared to $\calA_2$, the construction of $\calA_1$ and the query on it are much simpler. 

\begin{lemma}\label{lem:Bquerylargestsubtree}
    For $T$ being a balanced binary tree, we can build a data structure $\calA_1$ in $O(n\log n)$ time that answers the counting query of $V(T_\lambda(x))$ in $O(\log^2 n)$ time, and reports $V(T_\lambda(x))$ in $O(\log^2 n + K)$ time, where $K=|V(T_\lambda(x))|$. 
\end{lemma}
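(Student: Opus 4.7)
The plan is to precompute, for every rooted subtree $T_u$ of $T$, a sorted list of values that characterizes the light connected component reachable from the root $u$ of $T_u$ under any prescribed entry distance, and then answer each query at $x$ by walking up the $O(\log n)$ ancestor path of $x$ while issuing one binary search per sibling subtree. For each vertex $u$ of $T$ and each $v\in T_u$, I define $\gamma_v^{(u)} = \max_{z\in\pi(u,v)}(d(u,z)-\lambda/w_z)$. When a point $x$ outside $T_u$ reaches $T_u$ through $u$ with entry distance $d_0=d(x,u)$, every vertex $z$ on $\pi(u,v)$ is light (i.e., $w_z\,d(x,z)\le\lambda$) if and only if $\gamma_v^{(u)}\le -d_0$, so the light connected component of $u$ in $T_u$ as seen from $x$ is exactly $\{v\in T_u:\gamma_v^{(u)}\le -d_0\}$, reducing the count to a threshold query on a sorted list of $\gamma$-values.

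The key structural observation driving the preprocessing is the recurrence: if $c$ is a child of $u$ and $v\in T_c$, then $\gamma_v^{(u)}=\max(-\lambda/w_u,\,l(u,c)+\gamma_v^{(c)})$. Hence the sorted order of $\{\gamma_v^{(u)}:v\in T_c\}$ is obtained from that of $\{\gamma_v^{(c)}:v\in T_c\}$ by shifting every value by $l(u,c)$ and clamping any value below $-\lambda/w_u$ up to $-\lambda/w_u$; both operations preserve the underlying order. Accordingly, in a post-order DFS, I build the sorted list for $T_u$ by clamping and shifting each of the (at most two) children's sorted lists, inserting $u$ itself with value $-\lambda/w_u$, and merging the results in $O(|T_u|)$ time. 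Since every vertex of $T$ lies in the subtree of each of its $O(\log n)$ ancestors in a balanced binary tree, $\sum_u |T_u|=O(n\log n)$, so the total preprocessing time is $O(n\log n)$.

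For a counting query at a vertex $x$, let $v_0=x,v_1,\ldots$ be the ancestors of $x$ up to $R(T)$, and let $c'_i$ denote the child of $v_i$ not on the path to $x$ (if one exists). I walk from $v_0$ upward, testing at each step whether $v_i$ is light w.r.t.\ $x$, and halt at the first heavy ancestor; let $i^\ast$ be the last light index reached. Then $|V(T_\lambda(x))|$ equals the count from $T_x$ under threshold $0$, plus $i^\ast$ for the ancestors $v_1,\ldots,v_{i^\ast}$ themselves, plus one count from each sibling subtree $T_{c'_i}$ under threshold $-d(x,c'_i)$ for $1\le i\le i^\ast$. Each count is a single binary search on a presorted list, so the total running time is $O(\log n)\cdot O(\log n)=O(\log^2 n)$. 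Reporting replaces each binary search by a range scan of elements below the threshold, giving $O(\log^2 n+K)$ overall. An interior point $x$ on an edge $e(a,b)$ is handled analogously by querying $T_a$ with threshold $-t(x)$ (if $a$ is light) and walking up from $b$ (parent of $a$) with initial entry distance $l(e(a,b))-t(x)$.

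The main obstacle will be carefully justifying that the clamp-and-shift operation at each merge really preserves the relative order of the un-clamped suffix, so that the linear-time merge is valid, and that the decomposition into the subtree of $x$, the ancestors on the path, and the sibling subtrees exactly covers $V(T_\lambda(x))$ without double counting -- in particular handling the halting of the upward walk at a heavy ancestor (sibling subtrees above it must be excluded) and the interior-point case (where the two endpoints of the containing edge play the role of separate entry points into disjoint parts of $T$).
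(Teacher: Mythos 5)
Your proposal is correct and follows essentially the same route as the paper: your sorted $\gamma$-lists per rooted subtree are exactly the breakpoint arrays of the paper's functions $f(v,x)$ (the size of the largest root-inclusive covered subtree of $T_v$ as a function of entry distance), built bottom-up by merging children's lists in $O(|T_v|)$ per node, and your query — locate $x$, walk up the ancestor path until the first heavy vertex, and do one binary search per hanging sibling subtree — is the paper's query verbatim, yielding the same $O(n\log n)$, $O(\log^2 n)$, and $O(\log^2 n + K)$ bounds. The clamp-and-shift recurrence $\gamma_v^{(u)}=\max(-\lambda/w_u,\,l(u,c)+\gamma_v^{(c)})$ is a clean algebraic restatement of the paper's function-merging step and is sound (both shifting and clamping preserve sorted order, and a heavy entry child automatically yields a zero count).
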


\begin{lemma}\label{lem:querylargestsubtree}
    For $T$ being a general tree, we can build a data structure $\calA_2$ 
    in $O(n\log n)$ time that answers the counting query on $V(T_\lambda(x))$ 
    in $O(\log^2 n)$ time, and reports $V(T_\lambda(x))$ in $O(\log^2 n + K)$ time. 
\end{lemma}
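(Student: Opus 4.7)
The plan is to reduce the general tree case to the balanced binary tree case of Lemma~\ref{lem:Bquerylargestsubtree} by contracting $T$ into an auxiliary binary tree of depth $O(\log n)$. Specifically, I would apply a tree contraction procedure to $T$ that repeatedly rakes leaves and compresses degree-two chains, producing an auxiliary binary tree $T'$ of $O(n)$ nodes and depth $O(\log n)$ in $O(n)$ time. Each node of $T'$ represents either an original vertex of $T$ or a contracted path/subtree, and the contraction preserves pairwise distances among the original vertices, so the $O(1)$-time distance queries built in the preprocessing of Section~\ref{sec:treeversion} remain available. Nodes representing contracted fragments are ``transparent'': they have no weight of their own, do not belong to $H(x)$ for any $x$, and therefore never block a path in the definition of $T_\lambda(x)$.

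Next, I would install the data structure $\calA_1$ of Lemma~\ref{lem:Bquerylargestsubtree} on $T'$, augmenting each internal node with two per-fragment summaries: the number of original vertices of $T$ inside the fragment, and a succinct representation of the weighted-distance profile along the fragment. The latter must answer, for any query pair $(x,\lambda)$, whether the fragment is entirely light or otherwise pinpoint the first heavy vertex encountered when entering the fragment from either endpoint. Each such summary is constructed in time proportional to the fragment size (plus a logarithmic factor for sorting), so the total preprocessing remains $O(n\log n)$, matching the bound claimed in the lemma.

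For a counting query at a point $x\in T$, I would locate $x$ inside $T'$, splicing in a dummy vertex if $x$ is interior to an edge, and then invoke the $\calA_1$ counting routine on $T'$. Because $T'$ has depth $O(\log n)$ and each level is processed in $O(\log n)$ using the augmented summaries, the total counting time stays at $O(\log^2 n)$. The reporting query walks the same structures, fully expanding any fragment that is entirely light and partially expanding the unique fragment that contains the first heavy vertex along each direction, and outputs only the original vertices of $T$. Since any covered subtree of $T'$ containing $K$ original vertices contains only $O(K)$ contracted-fragment ``shells'', the reporting cost stays within $O(\log^2 n + K)$.

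The main obstacle I expect is designing the per-fragment summaries so that the ``first heavy vertex'' along a fragment can be retrieved in $O(\log n)$ time for any query: each fragment path has its vertices $u$ indexed by position along the path, and each contributes a constraint $w_u d(u,x)\le \lambda$ that, once $x$ is fixed, translates into an interval of allowable positions along the fragment. Precomputing per fragment a sorted representation of these constraints that accommodates arbitrary $x$ and supports a binary search for the boundary of the first-heavy region is the key technical ingredient; once this is in place, the rest of the construction mirrors Lemma~\ref{lem:Bquerylargestsubtree} applied to $T'$, and the claimed bounds follow.
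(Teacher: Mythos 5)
Your high-level plan is genuinely parallel to the paper's: the paper also first binarizes $T$ (with zero-length ``copy'' vertices marked as transparent) and then builds a logarithmic-depth hierarchy of clusters (a spine/heavy-path decomposition tree $\Gamma$ rather than rake-and-compress contraction), each cluster having at most two boundary vertices, and answers a query by walking the $O(\log n)$-length path in the hierarchy above the query point and spending $O(\log n)$ per hanging cluster. The difference that matters is what you store per cluster, and there your proposal has a real gap.

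Your per-fragment summary --- the number of original vertices in the fragment plus a sorted structure that locates the first heavy vertex when the fragment is entered from either endpoint --- is not enough to answer the counting query in $O(\log n)$ per fragment. A hanging fragment encountered along the query path is in general only \emph{partially} covered: coverage enters at its boundary vertex and penetrates to different depths in the many subtrees hanging off the fragment's internal path, so the frontier of $T_\lambda(x)$ can cut through a single fragment at $\Theta(\text{fragment size})$ places. Knowing the first heavy vertex along the fragment's path tells you where the path coverage stops, but not how many vertices of the fragment lie in the covered self-inclusive subtree; and ``partially expanding'' such a fragment (or recursing into its child clusters) destroys the $O(\log^2 n)$ counting bound, since the covered-but-not-reported portion inside one fragment can be large and spread over many branches. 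What is needed --- and what the paper precomputes as the arrays $F^t_u, F^b_u$ with the functions $f^t(u,\cdot), f^b(u,\cdot)$, the marked counts $g^t, g^b$, the sets $Q^t, Q^b$, and the thresholds $i^t_u, i^b_u$ --- is, for each cluster and each of its two boundary vertices, the entire monotone piecewise-constant function mapping the entry distance to the size of the largest boundary-inclusive covered subtree of that cluster (together with the distance threshold below which the cluster's whole spine is covered, so the query knows whether coverage continues past the cluster). These arrays are built bottom-up by merging the children's arrays, the sum of cluster sizes is $O(n\log n)$, and then each of the $O(\log n)$ fragments met at query time is disposed of by a single binary search. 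Your proposal correctly identifies that some sorted, binary-searchable per-fragment structure is ``the key technical ingredient,'' but the structure you describe answers the wrong question (location of the first heavy vertex rather than the covered-size profile), so the central counting step of the lemma is not established. (A minor additional point: $\lambda$ is fixed when the data structure is built --- it is a feasibility-test subroutine --- so summaries need not accommodate varying $\lambda$, which is fortunate because a profile valid for all $(x,\lambda)$ pairs would be harder to store within $O(n\log n)$.)
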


Regarding to our problem, the goal is to decide for any given point $x\in T$ whether $|V(T_\lambda(x))|\geq k$. By maintaining partial but enough information on $\calA_1$ and $\calA_2$, the counting query on $V(T_\lambda(x))$ can be answered in $O(\log n\log k)$ time. This improvement is demonstrated in Corollary~\ref{cor:fasterkquery}. Note that Corollary~\ref{cor:fasterkquery} does not support reporting $V(T_\lambda(x))$. 

\begin{corollary}\label{cor:fasterkquery}
For any given point $x$ on $T$, with $O(n\log n)$-time preprocessing work, we can decide in $O(\log n\log k)$ whether $T_\lambda(x)$ is of size at least $k$. 
\end{corollary}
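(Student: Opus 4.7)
The plan is to refine the data structures of Lemmas~\ref{lem:Bquerylargestsubtree} and~\ref{lem:querylargestsubtree} so that each auxiliary balanced binary search tree is truncated to hold at most $k$ entries, thereby reducing the per-level query cost from $O(\log n)$ to $O(\log k)$. First I would recall the structure of $\calA_1$ and $\calA_2$: both decompose $T$ into $O(\log n)$ levels, and at each level the number of vertices of $V(T_\lambda(x))$ contributed by that level is obtained via a range-count query on a balanced BST whose keys are distance-like quantities; summing the contributions across levels yields $|V(T_\lambda(x))|$ in $O(\log^2 n)$ total time.

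Next, I would modify the preprocessing as follows: for each BST maintained at a node of the decomposition, retain only the $k$ entries with smallest keys, together with the subtree-size fields needed for range counting. Because $\calA_1$ and $\calA_2$ already compute the sorted orders from which these entries are drawn, the extra trimming work is linear per BST, and the overall preprocessing bound remains $O(n \log n)$.

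At query time, given a point $x \in T$, I process the $O(\log n)$ levels in sequence while maintaining a running count $N$ initialized to zero. At each level I run the same range-count query on its truncated BST; since the BST now has at most $k$ nodes, this query takes $O(\log k)$ time. I add the returned value to $N$ and answer ``yes'' as soon as $N \geq k$; if I finish all $O(\log n)$ levels with $N < k$, I answer ``no''. The total query time is $O(\log n \log k)$, as claimed.

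The main obstacle will be verifying correctness of the truncation. Specifically, I must show that if the untruncated range-count at some level equals $c$, then the truncated BST returns $\min(c, k)$. This holds because the $k$ retained entries have the smallest keys, so any prefix of the sorted order that matches the range condition on the original BST is also captured by the truncated one up to size $k$. Since the algorithm only consults the threshold ``$N \geq k$'', reporting $\min(c, k)$ at each level produces the same final decision as reporting $c$. Formalizing this invariant, and in particular handling the boundary case where $\lambda$'s threshold falls between the $k$-th and $(k+1)$-th smallest keys at some level, is the crux of the argument.
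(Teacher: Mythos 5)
Your overall mechanism --- truncate each per-node structure to at most $k$ entries so that every binary search costs $O(\log k)$, keep a running counter, and stop as soon as it reaches $k$ --- is exactly the paper's approach, and your key invariant (reporting $\min(c,k)$ per node preserves the decision ``is the total at least $k$'') is the right one. One point of care on the truncation itself: the per-node structures are not plain rank structures. A single breakpoint of $f(v,x)$ (or of $g^t(u,x)$, $g^b(u,x)$) can add several vertices at once, and a vertex is counted when the query distance is \emph{at most} its critical value; so the correct truncation retains the breakpoints with the $k$ largest coverage distances, i.e., the first $k$ breakpoints as the count grows from zero --- the opposite end of the key order from ``the $k$ entries with smallest keys'' under the natural keying. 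With that orientation the scheme works: each retained breakpoint contributes at least one vertex, so a query landing beyond the retained range certifies a count of at least $k$, which is all the threshold test needs.

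The more substantive gap is that your model of the query as a fixed sum of $O(\log n)$ range counts does not match the general-tree structure $\calA_2$. There the traversal up $\pi(u_s,R(\calA_2))$ is conditional: at each node one must decide whether $x$ covers the \emph{entire} subspine $S(R_u)$ (resp.\ $S(L_u)$, $S(\mu_\tau)$) in order to know whether $T'_\lambda(x')$ extends further along $\pi(s,R(T))$, and hence which later nodes contribute at all. That test concerns the total number of covered vertices of the binary transformation --- including the unmarked auxiliary copies --- compared against the indices $i^t_u, i^b_u$, and it is not recoverable from a structure that only caps the \emph{marked} count at $k$. The paper resolves this by keeping the full arrays $F^t_u, F^b_u$ (and $i^t_u, i^b_u$) and storing in the truncated lists $G^t_u, G^b_u$ cross-references into them, so each coverage test still costs $O(1)$; your proposal needs an analogous provision (for instance, storing per node the single distance threshold at which the whole subspine becomes covered). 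Without it the traversal can stop too early or continue when it should not, and the returned decision can be wrong even though each individual range count is correctly capped at $k$.
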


The proofs of Lemma~\ref{lem:Bquerylargestsubtree}, Lemma~\ref{lem:querylargestsubtree}, and Corollary~\ref{cor:fasterkquery} are presented respectively in Section~\ref{sec:buildingA1}, Section~\ref{sec:buildingA2}, and Section~\ref{sec:improveA1A2}. By Corollary~\ref{cor:fasterkquery}, we have the following lemma for the feasibility test. 

\begin{lemma}\label{lem:treedecision}
    The feasibility test on trees can be solved in $O(n\log n\log k)$ time.
\end{lemma}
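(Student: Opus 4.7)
The plan is to assemble Observation~\ref{obs:treecriticalpoints}, which restricts the feasibility check to the set $Q_\lambda$ of $n$ critical points, together with the fast counting query supplied by Corollary~\ref{cor:fasterkquery}. The algorithm proceeds in three stages: construct $Q_\lambda$, perform the preprocessing required by Corollary~\ref{cor:fasterkquery}, and then iterate over $Q_\lambda$ testing whether any critical point admits $|V(T_\lambda(x))|\geq k$.

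For the first stage I would compute $Q_\lambda$ by a single post-order traversal of $T$ starting from $R(T)$, maintaining the current root-to-vertex path on a stack. For each visited vertex $v$, I perform a binary search along the stacked path to locate the unique point $x'\in \pi(v,R(T))$ with $D(v,x')=\lambda$, defaulting to $x'=R(T)$ when $D(v,R(T))<\lambda$. Because the $O(1)$-time distance query on $T$ is available from the lowest-common-ancestor preprocessing already performed in Section~\ref{sec:treeversion}, each binary search costs $O(\log n)$, so $Q_\lambda$ is obtained in $O(n\log n)$ time in total.

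In the second stage I invoke the $O(n\log n)$-time preprocessing of Corollary~\ref{cor:fasterkquery}, after which for any point $x\in T$ one can decide in $O(\log n\log k)$ time whether $|V(T_\lambda(x))|\geq k$. In the third stage I iterate through the $n$ points of $Q_\lambda$ and query each one; by Observation~\ref{obs:treecriticalpoints}, $\lambda$ is feasible if and only if at least one query returns true. Summing the three stages gives $O(n\log n)+O(n\log n)+n\cdot O(\log n\log k)=O(n\log n\log k)$ time, as required.

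Given that Observation~\ref{obs:treecriticalpoints} already narrows the search to $n$ candidates and Corollary~\ref{cor:fasterkquery} delivers the bottleneck query time, no substantive obstacle remains; the argument is essentially bookkeeping. The one point worth being careful about is the correct handling of the degenerate case in which the binary search on $\pi(v,R(T))$ fails to locate an interior point with $D(v,x')=\lambda$, in which case $x'$ must be set to $R(T)$ to stay consistent with the definition of critical point used in Observation~\ref{obs:treecriticalpoints}.
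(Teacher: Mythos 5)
Your proposal is correct and follows essentially the same route as the paper: compute the $n$ critical points $Q_\lambda$ in $O(n\log n)$ time via a post-order traversal with a stacked root path and binary search using the $O(1)$-time distance queries, then invoke the $O(n\log n)$-preprocessing, $O(\log n\log k)$-per-query decision procedure of Corollary~\ref{cor:fasterkquery} on each point of $Q_\lambda$, with correctness guaranteed by Observation~\ref{obs:treecriticalpoints}. Your handling of the degenerate case (setting $x'=R(T)$ when the whole path is covered under $\lambda$) matches the paper's intended definition of a critical point, so no gap remains.
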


\subsubsection{The Data Structure $\calA_1$}\label{sec:buildingA1}
% \begin{lemma}\label{lem:Bquerylargestsubtree}
%     For $T$ being a balanced binary tree, we can build a data structure $\calA_1$ in $O(n\log n)$ time that answers the counting query of $V(T_\lambda(x))$ in $O(\log^2 n)$ time, and reports $V(T_\lambda(x))$ in $O(\log^2 n + K)$ time, where $K=|V(T_\lambda(x))|$. 
% \end{lemma}
% \begin{proof}

For any point $x$ on an arbitrary edge $e(r,s)$ of $T$ where $r$ is the parent of $s$, on path $\pi(s,R(T))$, there is a longest (upward) subpath $\pi_x=\{x, v_1, \cdots, v_z\}$ covered by $x$ such that the parent of $v_z$ is the lowest vertex on $\pi(x, R(T))$ whose weighted distance to $x$ is larger than $\lambda$, that is, the lowest heavy vertex of $x$ on $\pi(x, R(T))$. For each $2\leq i\leq z$, a subtree hangs off $\pi_x$ at $v_i$ and it is the (whole) subtree of $T$ rooted at $v_i$'s child $u_i$ not on $\pi_x$ (while the other child of $v_i$ is $v_{i-1}$ on $\pi_{x}$). $v_1$ has both children not on $\pi_x/\{x\}$; let $u_0$ be its child adjacent to $x$, i.e., $s$, and let $u_1$ be the other. Let $T_v$ represent the (whole) subtree of $T$ rooted at a vertex $v$. As illustrated in Fig.~\ref{fig:binarytreequery}, $T_\lambda(x)$ is the union of 
$\pi_x$ and the largest $u_i$-inclusive subtree in $T_{u_i}$ covered by $x$ for each $0\leq i\leq z$. 

    \begin{figure}[h]
        \centering
        \includegraphics[scale = 0.65]{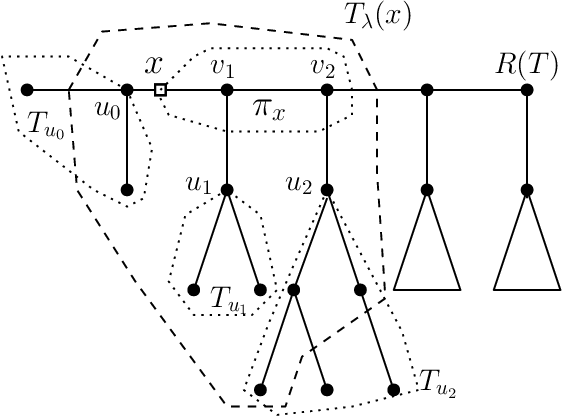}
        \caption{Illustrating the counting and reporting query of $T_\lambda(x)$ for $T$ being a balanced binary search tree.} \label{fig:binarytreequery}
    \end{figure}

    Clearly, $\pi_x$ can be found in $O(\log n)$ time during the traceback of searching for point $x$ on $T$. Simultaneously, for each $v_i$, we need to answer the counting and reporting queries about the largest $u_i$-inclusive subtree in $T_{u_i}$ covered by a given point $x\in \{u_i\}\cup T/T_{u_1}$. To this end, we construct the following functions for each $v\in V$ to support the two queries about the largest $v$-inclusive subtree on $T_v$. 
   
    Define $f(v,x)$ as the size of the largest $v$-inclusive (root-inclusive) subtree in $T_v$ covered by any point $x$ out of $T_v/\{v\}$ (any `outside' point). Clearly, $f(v,x)$ is a piecewise constant function w.r.t. distance $d(x,v)$; 
    as $x$ moves away from $v$, $f(v,x)$ monotonically decreases. For convenience, we use $x$ to denote $d(x,v)$ when we talk about $f(v,x)$. 
    % On any path out of $T_v$, $f(v,x)$ only at points on $T$ where the weighted distances of vertices in $T_v$ achieve $\lambda$.  

    In the preprocessing, we construct for each $v\in V$ function $f(v,x)$ by computing the ordered set of all its breakpoints so that the counting query on the largest $v$-inclusive subtree covered by any point out of $T_v/\{v\}$ can be answered in $O(\log n)$ time. 
    In addition, to support a linear-time reporting, for every breakpoint of $f(v,x)$, 
    we find the set $Q(v,x)$ that includes all additional vertices in the largest 
    covered $v$-inclusive subtree compared with the previous breakpoint 
    of a larger $x$-coordinate, i.e., $d(v,x)$. 

    We now present how to construct $f(v,x)$ and find sets $Q(v,x)$ for each $v\in V$ during the post-order on $T$. We use an array $F_v$ to store all breakpoints and their sets $Q(v,x)$ in the descending order by $x$-coordinates of breakpoints. 

    \begin{enumerate}
    
    \item The leaf case. Clearly, $f(v,x) = 1$ for $x\in [0,\frac{\lambda}{w_v}]$ and $f(v,x) = 0$ for $x\in (\frac{\lambda}{w_v},+\infty]$. To support a binary search on $f(v,x)$, we create array $F_v$ of size $3$ and store in order three tuples (breakpoints) $(x=+\infty, y=0, Q=\emptyset)$, $(x=\frac{\lambda}{w_v}, y=1, Q = \{v\})$, and $(x=0, y=1, Q=\emptyset)$, where variables $x$ store the $x$-coordinates of the breakpoints, variables $y$ are their values $f(v,x)$, and sets $Q$ are their sets $Q(v,x)$. Hence, for any leaf of $T$, $f(v,x)$ and $Q(v,x)$ can be constructed in $O(1)$ time. 
    
    \item The internal-node case. In general, $v$ has the two children $u_1$ and $u_2$. The breakpoints of function $f(u_1,x)$ (resp., $f(u_2,x)$) and their sets $Q(u_1,x)$ (resp., $Q(u_2,x)$) have been computed and are stored in array $F_{u_1}$ (resp., $F_{u_2}$). Clearly, for subtree $T_v$, $f(v,x) = 0$ at any $x\in (\frac{\lambda}{w_v}, +\infty)$ while at any $x\in [0, \frac{\lambda}{w_v}]$, $f(v,x) = 1 + f(u_1, x + d(u_1,v)) + f(u_2, x + d(u_2,v))$. 
    In addition, $f(v,x)$ breaks at $x = \frac{\lambda}{w_v}$ and all breakpoints of $f(u_1,x)$ and $f(u_2,x)$ whose $x$-values (in their tuples) are respectively in $[d(u_1,v), d(u_1,v) + \frac{\lambda}{w_v}]$ and $[d(u_2,v), d(u_2,v) + \frac{\lambda}{w_v}]$. 
    Hence, we can merge $F_{u_1}$ and $F_{u_2}$ as the following to determine every breakpoint of $f(v,x)$ and $Q(v,x)$ in order. 
    
    %%%%%%%%%%%%% computing $f_v(x_v)$
    First, we compute $f(v,\frac{\lambda}{w_v})$ by performing a binary search 
    respectively on $F_{u_1}$ and $F_{u_2}$ to compute $f(u_1,\frac{\lambda}{w_v}+ d(u_1,v))$ and 
    $f(u_2,\frac{\lambda}{w_v}+ d(u_2,v))$. Two indices $i'$ and $j'$ are obtained so that $F_{u_1}[i'-1].x\geq \frac{\lambda}{w_v}+ d(u_1,v) > F_{u_1}[i'].x$ and $F_{u_2}[j'-1].x \geq \frac{\lambda}{w_v}+ d(u_2,v)> F_{u_2}[j'].x$. Hence, $f(v,\frac{\lambda}{w_v})$ equals to $ (1+ F_{u_1}[i'-1].y + F_{u_2}[j'-1].y)$. Additionally, $Q(v,\frac{\lambda}{w_v})$ is the union of $ \{v\}$, $\cup_{l=1}^{i'-1}F_{u_1}[l].Q$, and $\cup_{l=1}^{j'-1}F_{u_2}[l].Q$, which can be obtained in $O(|T_{u_1}| + |T_{u_2}|)$ time. Then, we push into an auxiliary queue $S$ this tuple $(x = \frac{\lambda}{w_v}, y= 1 + F_{u_1}[i'-1].y + F_{u_2}[j'-1].y, Q=Q(v,\frac{\lambda}{w_v}))$. 
    
    Proceed with merging arrays $F_{u_1}$ and $F_{u_2}$ respectively 
    staring from $i'$ and $j'$ to determine all other breakpoints of $f(v,x)$ in order. 
    Every step we compare $F_{u_1}[i].x - d(u_1,v)$ and $F_{u_2}[j].x - d(u_2,v)$ 
    and insert a breakpoint (tuple) into $S$ for each of the following cases. 
    % Notice that when $\frac{\lambda}{w_v}+ d(u_1,v) =F_{u_1}[i'-1].x$ or 
    % $\frac{\lambda}{w_v}+ d(u_1,v) =F_{u_1}[j'-1].x$, 
    % the second breakpoint in $S$ will be of same $x$-coordinate 
    % $\frac{\lambda}{w_v}$.  

    \begin{enumerate}
        \item Case $F_{u_1}[i].x - d(u_1,v) < F_{u_2}[j].x -d(u_2,v)$: The next breakpoint of $f(v,x)$ in order is of $x$-coordinate $x'=F_{u_2}[j].x - d(u_2,v)$. On account of $f(u_1, x'+d(u_1,v)) = F_{u_1}[i-1].y$, $f(v,x') = F_{u_1}[i-1].y + F_{u_2}[j].y +1$. Since the breakpoint $F_{u_1}[i-1]$ leads a breakpoint of $f(v,x)$ preceding this current one in $S$, $Q(v,x')$ is $Q(u_2, F_{u_2}[j].x)$, which is $F_{u_2}[j].Q$. Thus, we insert a tuple 
        $(x = x', y = F_{u_1}[i-1].y + F_{u_2}[j].y +1, Q=F_{u_2}[j].Q)$ 
        into $S$ and then increment $j$. 
        \item Case $F_{u_1}[i].x - d(u_1,v) > F_{u_2}[j].x -d(u_2,v)$: The $x$-coordinate $x'$ of next breakpoint equals to $F_{u_1}[i].x - d(u_1,v)$. Similarly, $f(v,x')$ equals to $F_{u_1}[i].y + F_{u_2}[j-1].y + 1$, and $Q(v, x') $ is exactly $F_{u_1}[i].Q$. 
        For this case, we insert a tuple $(x = F_{u_1}[i].x - d(u_1,v), y = F_{u_1}[i].y + F_{u_2}[j-1].y +1, Q= F_{u_1}[i].Q )$ into $S$ and then increment $i$. 
        \item Case $F_{u_1}[i].x - d(u_1,v) = F_{u_2}[j].x -d(u_2,v)$: We insert into $S$ a tuple $(x = F_{u_1}[i].x - d(u_1,v), y = F_{u_1}[i].y + F_{u_2}[j].y +1, Q = F_{u_1}[i].Q\cup F_{u_2}[j].Q)$. Last, increment both $i$ and $j$. 
    \end{enumerate}

    We stop the merging once either $F_{u_1}[i].x - d(u_1,v)< 0$ or $F_{u_2}[j].x - d(u_2,v) <0$. 
    In the former case, for each remaining tuple of $F_{u_2}$ 
    with $F_{u_2}[j].x - d(u_2,v) \geq 0$, we insert 
    into $S$ a breakpoint $(x= F_{u_2}[j].x - d(v, u_2), y = 1 + F_{u_2}[j].y +F_{u_1}[i-1].y, Q = F_{u_2}[j].Q)$. In the later case, for each remaining tuple in $F_{u_1}$ satisfying 
    $F_{u_1}[i].x - d(u_1,v) \geq 0$, we insert into $S$ a breakpoint $(x= F_{u_1}[i].x - d(v, u_1), y = 1 + F_{u_1}[i].y +F_{u_2}[j-1].y, Q = F_{u_1}[i].Q)$. Subsequently, if the last breakpoint in $S$ is of $x$-coordinate larger than zero, supposing its $y$-coordinate is $y'$, 
    we insert into $S$ a tuple $(x=0, y=y', Q=\emptyset)$. 
    
    Last, we create an array $F_{v}$ of size $|S|+1$. Store a tuple $(x=+\infty, y = 0, Q=\emptyset)$ in the first position, and then copy all tuples of $S$ to $F_v$ following the popping order. The case where $v$ has only one child can be handled in a similar manner, so we omit the details. 

    Recall that at the beginning, we have preprocessed $T$ in $O(n)$ time so that the distance query for any two points on $T$ can be answered in $O(1)$ time. Regarding to the time complexity, except for the $O(\log |T_v|)$ time on computing the first breakpoint of $f(v,x)$, every other breakpoint can obtained $O(1)$ time. 
    Since $f(v,x)$ has at most $|T_v|$ breakpoints, the time complexity for constructing $f(v,x)$ is $O(|T_v|)$ time; the time of determining set $Q$ for all breakpoints of $f(v,x)$ 
    is $O(|T_v|)$ in total. 
    \end{enumerate}

    % It should be noticed that 
    % when $\frac{\lambda}{w_v}+ d(u_1,v) =F_{u_1}[i'-1].x$ or 
    % $\frac{\lambda}{w_v}+ d(u_1,v) =F_{u_1}[j'-1].x$, the second breakpoint in $S$ 
    % is of same $x$-coordinate as the first one. 
    % To delete the duplicates, 
    % we merge them into one breakpoint of same $x$- and $y$-coordinate 
    % but whose set $Q$ is the union of sets $Q$ of the first two breakpoints, 
    % and then store it in $F_v$. 
    
    Since $T$ is a balanced binary tree, based on the above analysis, it is easy to see that $\calA_1$ can be constructed in $O(n\log n)$ time. 
    
    %%%%%%%%%%%%%%%%answer the counting query and reporting query 
    For any point $x$ of $T$, the counting and reporting queries about $T_\lambda(x)$ can be answered by $\calA_1$ as follows. Start from $R(T)$, we search for $x$ on $T$ by always visiting the child with a smaller distance to $x$, which can be known in $O(1)$ time. Hence, it takes $O(\log n)$ time to find the edge on $T$ containing $x$. During the traceback, starting from the preceding vertex of $x$ on $\pi_x$, we perform the following routine for each vertex until a heavy vertex of $x$ is met (whose weighted distance to $x$ is larger than $\lambda$). 

    For each non-heavy vertex $v$ encountered, due to $v\in\pi_x$ and $D(v,x)\leq\lambda$, we increment the counter $C$ and report $v$; subsequently, for $v$'s each child $u$ not on $\pi_x/\{x\}$, we perform a binary search in $F_u$ to compute $f(u,d(x,u))$; an index $i_u$ is returned so that $F_u[i_u].x<d(x,u)\leq F_u[i_u-1].x$. Increase counter $C$ by $F_u[i_u-1].y$. Since $V(T_u)\cap V(T_\lambda(x))$ is $\cup_{i=1}^{i_u-1}Q(u,x)$, we report sets $Q$ of all tuples in $F_u$ before position $i_u$. Note that only the preceding vertex of $x$ on $\pi_x$ has all children not on $\pi_x/{x}$, so we need to perform the queries for its each child. Once a heavy vertex is found during the traceback, we immediate terminate the queries on $\calA_1$. Clearly, $|V(T_\lambda(x))| = C$, and all $V(T_\lambda(x))$ have been correctly reported. 
    % we first determine in $O(1)$ time whether $D(v,x)\leq\lambda$. If no, $v$ is not on $\pi_x$ and thereby, the queries ends. So, the counter equals to $|V(T_\lambda(x))|$ and $V(T_\lambda(x))$ have been reported. 
    
    % If yes, then $v$ is on $\pi_x$. We increment the counter and report $v$. Next, for $v$'s child not on the path $\pi(x,R(T))$, e.g., $u$, we perform a binary search with key $d(x,u)$ on its array $F_u$ to compute $f(u,d(x,u))$. An index $i_u$ is returned so that $f(u,d(x,u)) = F_u[i_u].y$. 
    % Then, we increase the counter by $F_u[i_u].y$. For the reporting operation, 
    % since $V(T_u)\cap V(T_\lambda(x))$ is exactly $\cup_{i=0}^{i_u}Q(u,x)$, 
    % we output the set $Q$ of every entry in $F_u$ before entry $i_u+1$. 
    Since $T$ is a balanced binary tree, the counting query can be answered in $O(\log^2n)$ time. Reporting $V(T_\lambda(x))$ needs an additional time $O(K)$ where $K=|V(T_\lambda(x))|$. Recall that the preprocessing time is $O(n\log n)$. Thus, Lemma~\ref{lem:Bquerylargestsubtree} is proved. 
    % \end{proof}

\subsubsection{The Data Structure $\calA_2$}\label{sec:buildingA2}

    \paragraph{\textbf{Constructing $\calA_2$}}
    %%%%%%%%%%%%%transform $T$ into a binary tree $T'$
    The construction of $\calA_2$ for a general tree $T$ consists of the following three phases: First, transform $T$ into a binary tree $T'$ of size $O(n)$. Second, apply the \textit{spine} tree decomposition~\cite{ref:BenkocziFa03} to $T'$ to build a balance binary tree $\Gamma$ for its decomposition. Last, we construct the functions for the subtree of $T'$ in each node of $\Gamma$, similar to the above $f(v,x)$ and $Q(v,x)$, to count and determine the largest root-inclusive and leaf-inclusive subtree covered by `outside' points. The details of every phase are presented below. 
    
    \paragraph*{Phase 1.} We adapt the algorithm~\cite{ref:BanikTh16,ref:TamirAn96} to transform in $O(n)$ time $T$ into a binary tree $T'$ whose root $R(T')$ is same as $R(T)$. For completeness and correctness, the transformation is introduced below. 

    Traverse $T$ from $R(T)$ and during the traversal, 
    we process every vertex of more than two children as follows. 
    Let $v$ be such a vertex with children $v_1, v_2, \cdots, v_t$. 
    For each $v_i$ with $2\leq i\leq t-1$, we create an auxiliary vertex $u_i$ 
    of weight $w_v$ and then replace edge $e(v,v_i)$ by edge $e(u_i,v_i)$ 
    by setting $v_i$ as a child of $u_i$ with an edge of length $l(e(v,v_i))$ 
    and letting $v_i$'s parent be $u_i$. Additionally, 
    set $u_i$ as a child of the auxiliary vertex $u_{i-1}$ by an edge of length zero 
    except that $u_2$ is set as a child of $v$ by an edge of length zero. 
    Last we let $v_t$ be the other child of $u_{t-1}$ and replace 
    $v_t$'s parent by $u_{t-1}$ with $l(e(v_t,u_{t-1}))=l(e(v_t, v))$. 

    Let $V'$ denote the set of all vertices in $T'$. 
    $V'$ includes $V$ and all created auxiliary vertices but 
    $|V'| = O(n)$. 
    For any $v\in V$, we say that every auxiliary vertex created for $v$ 
    is a copy of $v$ (in the sense that they have the same weight and 
    each is connected with $v$ by a path of length zero that 
    contains only $v$'s auxiliary vertices).  
    % Some properties of $T'$ are introduced to support lemmas 
    % that will be presented later. 
    Clearly, each $v\in V$ and all its copies $u_2, \cdots, u_{t-1}$ form a path on $T'$ so that for any $2\leq i\leq t$, $\pi(u_i,v)$ is in $\pi(u_i,R(T'))$. In addition, every edge $e$ of $T$ uniquely corresponds to an edge $e'$ on $T'$; the two incident vertices of $e'$ 
    are same as those of $e$'s, or are their copies. 

    After obtaining $T'$, we preprocess $T'$ in $O(n)$ time by computing the distance of every vertex to $R(T')$ 
    and applying the lowest common ancestor data structure~\cite{ref:BenderTh04} to $T'$ so that the distance between any two points of $T'$ can be known in constant time. 

    Consider the problem of finding the corresponding point $x'$ on $T'$ of any given point $x\in T$. Suppose $x$ is on edge $e(r,s)$. If $x$ is at $r$ or $s$, then we return that vertex containing $x$ for $x'$ due to $V\in V'$. Otherwise, we first compute the lowest common ancestor of $r$ and $s$ in constant time. If it is $r$ (resp., $s$), then $e(r,s)$ matches the edge on $T'$ that connects $s$ (resp., $r$) and its parent, which can be obtained in $O(1)$ time. Hence, $x'$ can be known in $O(1)$ time.

     %%%%%%%%%%%%%Prove the equivalency between finding $T'_\lambda(x')$ and $T_\lambda(x)$  
     Since $\calA_2$ is constructed for $T'$, it is necessary to show why one can work on $T'$ to answer the two queries about $T_\lambda(x)$ for any point $x\in T$. Denote by $T'_\lambda(x')$ the largest $x'$-inclusive subtree in $T'$ covered by the corresponding point $x'$ of $x$. We have the following observation. 
     
    \begin{observation}\label{obs:queryequivalence}
     Computing $T_\lambda(x)$ is equivalent to computing $T'_\lambda(x')$. 
    \end{observation}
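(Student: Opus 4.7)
The plan is to exhibit a natural correspondence between $x$-inclusive covered subtrees of $T$ and $x'$-inclusive covered subtrees of $T'$ that preserves the set of vertices of $V$, and then argue that this correspondence carries the largest one on each side to the largest one on the other.

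First I would record the relevant invariants of the transformation. Since every auxiliary vertex $u$ is connected to its parent by an edge of length zero (or more generally lies on a zero-length path among the copies of some $v \in V$), the weighted distance from $x'$ to $u$ equals $w_v \cdot d_{T'}(x', v) = w_v \cdot d_T(x, v)$. Hence a vertex $v \in V$ is covered by $x$ in $T$ if and only if $v$ together with all its copies are covered by $x'$ in $T'$. Moreover, every edge $e$ of $T$ corresponds to an edge $e'$ of $T'$ connecting two (possibly copied) endpoints, and each path in $T$ through a vertex $v$ of degree $t$ is expanded in $T'$ into a zero-length chain of copies $u_2,\dots,u_{t-1}$ followed by the corresponding edges, preserving total length.

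Next I would prove the forward direction. Given $T_\lambda(x)$, define a subgraph $T''$ of $T'$ obtained by taking $V(T_\lambda(x))$ together with every auxiliary vertex lying on the $T'$-paths realizing the edges of $T_\lambda(x)$, and also all auxiliary vertices on the (zero-length) $T'$-path from $x'$ to its nearest vertex in $V(T_\lambda(x))$ if $x$ is interior to an edge. By the invariant above, every vertex of $T''$ is covered by $x'$; by construction $T''$ is connected, contains $x'$, and its vertices and $x'$ induce a connected subtree of $T'$ because the $T'$-expansion of each edge of $T_\lambda(x)$ has been added in full. Therefore $T''$ is an $x'$-inclusive subtree of $T'$ covered by $x'$, which gives $|V(T''_\lambda(x'))| \ge |V(T'')|$ and $V(T_\lambda(x)) \subseteq V(T'_\lambda(x')) \cap V$.

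For the reverse direction I would take $T'_\lambda(x')$ and project it back to $T$ by replacing every maximal subpath of copies of some $v \in V$ inside $T'_\lambda(x')$ by the single vertex $v$, and every $T'$-edge corresponding to an edge $e$ of $T$ by $e$ itself. This yields a subgraph $\widehat T$ of $T$ containing $x$, connected (since identifying zero-length chains preserves connectivity of the surviving image), and whose every vertex $v \in V$ lies in $V(T'_\lambda(x')) \cap V$, hence is covered by $x$. Thus $\widehat T$ is an $x$-inclusive subtree of $T$ covered by $x$, so $|V(\widehat T)| \le |V(T_\lambda(x))|$, whence $V(T'_\lambda(x')) \cap V \subseteq V(T_\lambda(x))$. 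Combined with the forward inclusion, $V(T_\lambda(x)) = V(T'_\lambda(x')) \cap V$, so the two queries are interconvertible.

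The step I expect to be the main obstacle is the reverse direction, specifically verifying that $T'_\lambda(x')$ never contains an auxiliary copy of a vertex $v \notin V(T_\lambda(x))$ in a way that would cause the projection to be disconnected or to miss an original vertex of $V(T'_\lambda(x')) \cap V$; this requires ruling out the scenario where an auxiliary path in $T'$ leads out of the projected image. The key observation that resolves this is that auxiliary vertices are internal to zero-length chains of copies of a single $v$, so if any copy of $v$ is in $T'_\lambda(x')$ then $v$ itself is covered (by the distance invariant) and the connectivity of $T'_\lambda(x')$ forces $v$ to be included along with all intermediate copies on the unique $T'$-path to $x'$, making the projection well defined and matching $T_\lambda(x)$.
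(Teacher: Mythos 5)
Your argument is correct in substance, but it takes a partly different route from the paper's. The forward direction (lifting $T_\lambda(x)$ to $T'$ by adding the zero-length copies along the expanded edges and using the invariant $w_v d_{T'}(x',u)=w_v d_T(x,v)$ for every copy $u$ of $v$) matches the paper. For the reverse inclusion the paper does not project back: it observes that every vertex of $V(T/T_\lambda(x))$ adjacent to $T_\lambda(x)$ is heavy for $x$, hence it and its copies are heavy for $x'$, and these are the only vertices of $T'$ connecting the lifted subtree to the rest of $T'$; this separation argument yields directly that $V(T'_\lambda(x'))$ is exactly $V(T_\lambda(x))$ plus copies. Your quotient/projection argument (collapse each maximal chain of copies of $v$ to $v$, map expanded edges back to original edges) is a valid alternative and, as you note, it does not even require that $v$ itself lie in $T'_\lambda(x')$ when one of its copies does, since $v$ is covered by the distance invariant and is reintroduced by the projection. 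Two small points should be tightened, though both are easy: (i) in both directions you pass from ``$T''$ (resp.\ $\widehat T$) is a covered self-inclusive subtree of size at most that of the largest one'' to a vertex-set inclusion; this needs the remark that on a tree the largest covered self-inclusive subtree contains \emph{every} covered self-inclusive subtree (the union of two such subtrees is again one, so the largest is unique and is their union), and (ii) in your final paragraph it is maximality of $T'_\lambda(x')$, not connectivity alone, that forces $v$ into $T'_\lambda(x')$ when a copy of $v$ is present, since the unique path from that copy to $x'$ need not pass through $v$; as observed above, your projection argument in fact sidesteps this issue entirely. The paper's separation argument buys the exact characterization $V(T'_\lambda(x'))=V(T_\lambda(x))\cup\{\text{copies}\}$ needed for the marking scheme; your version reaches the same equality $V(T_\lambda(x))=V(T'_\lambda(x'))\cap V$ once the maximality-of-the-union remark is added.
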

    \begin{proof}
     % First, consider the problem of finding the corresponding point $x'$ on $T'$ of any given point $x\in T$. Suppose $x$ in on edge $e(r,s)$. 
     % If $x$ is at $r$ or $s$, then we return that vertex containing $x$ for $x'$ due to $V\in V'$. Otherwise, we first compute the lowest common ancestor of $r$ and $s$ in constant time. If it is $r$ (resp., $s$), then $e(r,s)$ matches the edge on $T'$ that connects $s$ (resp., $r$) and its parent, which can be obtained in $O(1)$ time. Hence, $x'$ can be known in $O(1)$ time.
     
     % Denote by $T'_\lambda(x')$ the largest $x'$-inclusive subtree of $T'$ covered by $x'$. 
     We claim that $T'_\lambda(x')$ is induced by all vertices in $V(T_\lambda(x))$ and their copies in $V'$. For each vertex $v\in V(T_\lambda(x))$, $x'$ must cover in $T'$ $v$ and its all copies, i.e., their path on $T'$. Additionally, the whole subtree induced by $V(T_\lambda(x))$ and their copies is connected on $T'$. Since $x\in T_\lambda(x)$ or $x$ is interior of an edge incident to a vertex in $T_\lambda(x)$, by the above mapping between $x$ and $x'$, $x'$ is incident to a vertex of $T'_\lambda(x')$. Consequently, $V'(T'_\lambda(x'))$ consists of $V(T_\lambda(x))$ and all its copies. 

    On the other hand, every vertex in $V(T/T_\lambda(x))$ that is adjacent to 
    any vertex of $V(T_\lambda(x))$ must be a heavy vertex of $x$. 
    So, these vertices and their copies are heavy vertices of $x'$ on $T'$. 
    More importantly, on $T'$, only these vertices and their copies connect the whole subtree of $T'$ induced by $V(T_\lambda(x))$ and its copies with every vertex in $V(T/T_\lambda(x))$ and its copies on $T'$. 
    % In other words, if considering $x'$ as the root of $T'$, all vertices in $V(T/T_\lambda(x))$ and their copies are descendants of any vertex in $V(T_\lambda(x))$ and its copies on $T'$. 
    Hence, $T'_\lambda(x')$ is induced only by all vertices in $V(T_\lambda(x))$ and their copies on $T'$. Thus, the claim is true. 
    
    % %%%%%%%%%%%%%Find the corresponding point of a point on $T$ For any point of $T$, 
    % After obtaining $T'$, we preprocess $T'$ in $O(n)$ time by computing the distance of every vertex to $R(T')$ 
    % and applying the lowest common ancestor data structure~\cite{ref:BenderTh04} to $T'$ 
    % so that the distance between any two points of $T'$ can be known in constant time. 
    
    % Consider the problem of finding the corresponding point $x'$ on $T'$ of any given point $x\in T$. Suppose $x$ in on edge $e(r,s)$. If $x$ is at $r$ or $s$, then we return that vertex containing $x$ for $x'$ due to $V\in V'$. Otherwise, we first 
    % compute the lowest common ancestor of $r$ and $s$ in constant time. 
    % If it is $r$ (resp., $s$), then $e(r,s)$ matches the edge on $T'$ that connects $s$ (resp., $r$) and its parent, which 
    % can be obtained in $O(1)$ time. Hence, $x'$ can be known in $O(1)$ time.

    %%%%%%%%%%%%% Remodel the counting and reporting problems into couting only marked vertices.  
    By marking every vertex of $V$ on $T'$, one can work on 
    $T'$ to find $V(T'_\lambda(x'))$ but count and report only marked 
    vertices of $V(T'_\lambda(x'))$ for querying about $T_\lambda(x)$ on $T$. Thus, the observation holds. \qed
    \end{proof}
    
    For convenience, in the following, let $T$ represent $T'$ and $V$ be $V'$. Suppose some vertices in $T$ are marked. 
    To distinguish with the original $T_\lambda(x)$, 
    we use $T'_\lambda(x)$ to represent the largest $x$-inclusive subtree 
    covered by $x$ on the current (binary) $T$ which contains marked vertices. Now, the goal is to count and report for any point $x\in T$ 
    only marked vertices of $T'_\lambda(x)$.

    %%%%%%%%%%%%%Construct the spine decomposition tree  
    \paragraph*{Phase 2.} In the second step of the preprocessing algorithm, 
    a decomposition tree $\Gamma$ for $T$ is constructed which is the base 
    of our data structure $\calA_2$. Specifically, we build $\Gamma$ by recursively 
    applying the spine decomposition~\cite{ref:BenkocziFa03} to (a rooted binary tree) $T$, 
    during which information on $T$ are collected for our needs. 
    Fig.~\ref{fig:spinedecomposition} illustrates the spine decomposition 
    of a rooted binary tree. We also present below the detailed step-by-step 
    construction of $\Gamma$. 

        \begin{figure}[h]
        \centering       
        \includegraphics[scale = 0.5]{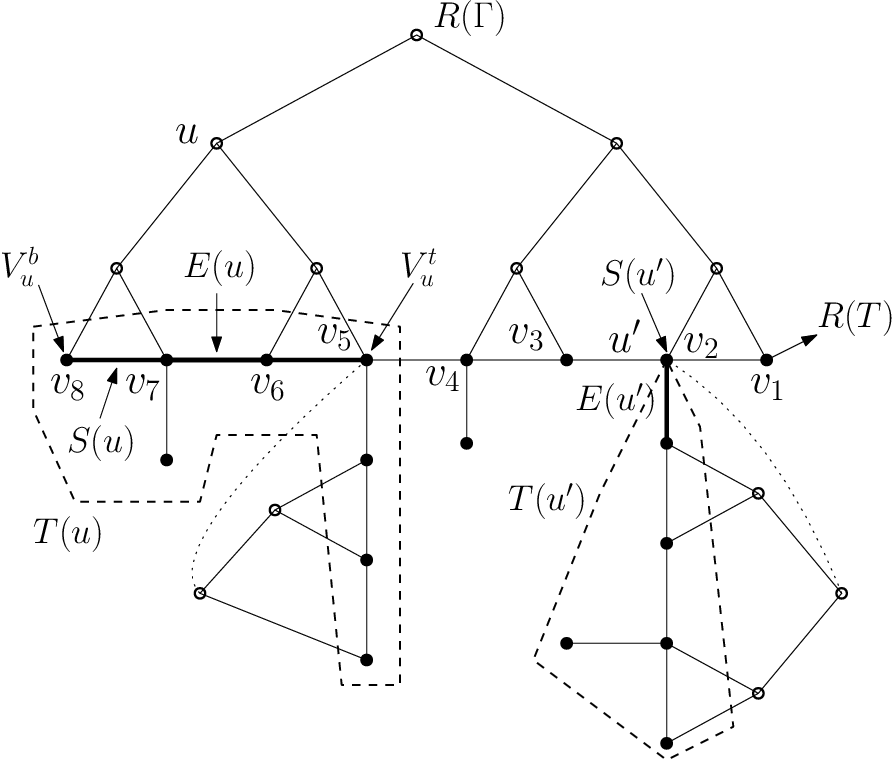}
        \caption{Illustrating the spine decomposition of a tree. On the binary search tree for the root spine from $v_1$ to $v_8$, at an internal node $u$, $S(u)$ is the subspine $\pi(v_5, v_8)$ so $V^t_u = u_5$ and $V^b_u = u_8$, $T(u)$ consists of $S(u)$ and all its hanging subtrees, $E(u) = e(v_6,v_7)$ connects the subtrees in $u$'s two children $L_u$ and $R_u$. At a leaf $u'$, $S(u')=v_2$, $T(u')$ is the left subtree rooted at $S(u')$, $E(u')$ connects $S(u')$ and its hanging subtree.}
        \label{fig:spinedecomposition}
    \end{figure}
    
    % \begin{figure}[h]
    %     \centering
    %     \begin{minipage}{0.48\linewidth}
    %     \includegraphics[scale = 0.45]{spinedecomposition.eps}
    %     \caption{Illustrating the spine decomposition of a tree. On the binary search tree for the root spine from $v_1$ to $v_8$, at an internal node $u$, $S(u)$ is the subspine $\pi(v_5, v_8)$ so $V^t_u = u_5$ and $V^b_u = u_8$, $T(u)$ consists of $S(u)$ and all its hanging subtrees, $E(u) = e(v_6,v_7)$ connects the subtrees in $u$'s two children $L_u$ and $R_u$. At a leaf $u'$, $S(u')=v_2$, $T(u')$ is the left subtree rooted at $S(u')$, $E(u')$ connects $S(u')$ and its hanging subtree.}
    %     \label{fig:spinedecomposition}
    %     \end{minipage}
    %     \begin{minipage}{0.48\linewidth}
    %     \includegraphics[scale = 0.45]{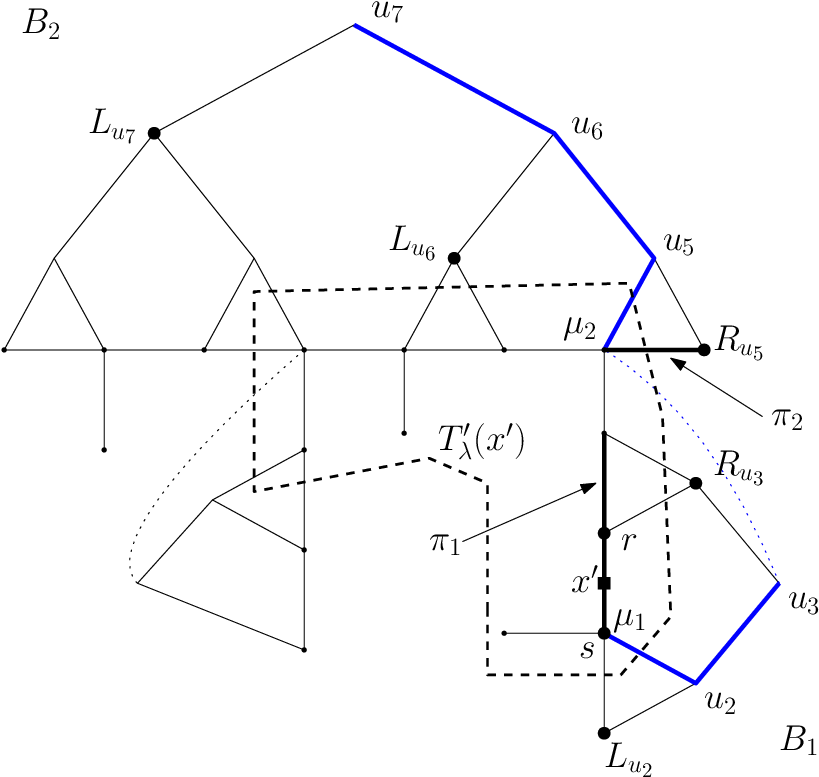}
    %     \caption{Illustrating the query about $T_\lambda(x)$ on $\calA_2$ for a point $x\in e(r,s)$: The search path for node $u_s$ is $\pi(u_1,u_7)$. During the traceback, we stop to visit the right child of each node above $u_5$ since $x$ cannot cover the whose subspine in its right child $R_{u_5}$ but $T_\lambda(x)$ intersects the right subtree of $S(u_4)$ so the left children $L_{u_6}$ and $L_{u_7}$ are visited.}
    %     \label{fig:spinequery}
    %     \end{minipage}
    % \end{figure}

    % Without lost of generality, we assume that for every vertex on $T$, the subtree rooted at it excluding its left child, called the \textit{right} 
    % subtree of this vertex, is of size at least same as its \textit{left} subtree, which is 
    % the subtree rooted at it excluding its right child. 
    Without lost of generality, we assume that for every vertex on $T$, the subtree rooted at its right child is of size at least as same as the subtree rooted at its left child. 
    To build $\Gamma$, we traverse $T$ from its root $R(T)$. 
    By always visiting the child with most descendants, i.e., the right child, 
    a \textit{leaf} path from $R(T)$ to a leaf is thus generated, 
    and such a leaf path is called a \textit{spine} of $T$. 
    At every vertex on the generated spine, a subtree rooted 
    at its left child hangs off from the spine, 
    and it is called a \textit{hanging} subtree of this vertex. 
    The hanging subtree of a spine vertex is empty 
    if this vertex has no left child. Additionally, for any vertex $v$, we refer to the subtree rooted at $v$ containing $v$ and the subtree rooted at its left (resp., right) child as $v$'s \textit{left} (resp., \textit{right}) subtree. 
    
    Next, a balanced binary search tree is constructed on this spine so that all 
    its leaves from the left to the right represent all spine vertices 
    in the bottom-up order. At each leaf node $u$, we store the edge $E(u)$ connecting 
    the spine vertex that $u$ represents and the hanging subtree at this vertex, 
    and the subtree $T(u)$ containing $E(u)$ and this hanging subtree, 
    which is exactly the left subtree of this spine vertex. 
    Also, attach with $u$ the subspine $S(u)$ in $T(u)$, as well as the top spine vertex $V^t_u$ and the bottom spine vertex $V^b_u$ on $S(u)$, which are all that spine vertex which leaf $u$ represents. 
    For every internal node $u'$, similarly, we maintain the edge $E(u')$ 
    connecting the two distinct subtrees in its two children, 
    their union (whole subtree) $T(u')$, the subspine $S(u')$ in $T(u')$ (below $u'$), 
    and the top and bottom vertices $V^t_{u'}, V^b_{u'}$ on $S(u')$. 
    Notice that $T(u')$ is indeed the subtree generated by removing the right subtree of $V^b_{u'}$ except for $V^b_{u'}$ from the whole subtree $T_{V^t_{u'}}$ 
    rooted at $V^t_{u'}$. 
  
    Proceed with recursively decomposing every hanging subtree of this spine 
    into spines. For each obtained spine, construct a balanced binary search tree 
    on it as the above way and then set the root of its binary search tree 
    as the left child of the leaf 
    in the binary search tree of its parent spine. 
    % where the subtree 
    % in that leaf contains this spine. 
    As proved in~\cite{ref:BenkocziFa03}, 
    the obtained tree $\Gamma$ is a balanced binary 
    tree of size $O(n)$, and can be constructed in $O(n\log n)$ time. 
    
    We introduce some notations that will be used later. 
    Let $R(\Gamma)$ be the root of $\Gamma$, i.e., the root $R(\calA_2)$ of $\calA_2$. 
    For any node $u\in\Gamma$, denote by $P_u$ the whole spine containing $S(u)$, 
    let $B(u)$ represent the binary search tree constructed on spine $P(u)$ 
    whose root is denoted by $R(B(u))$. 
    Additionally, let $\rho(u)$ be the parent of $u$ on $\Gamma$, 
    and denote by $L_u$ and $R_u$ the left and right child of $u$. 

    %%%%%%%%%%%%%% Constructing functions $f^L(u,x)$ and $f^R(u,x)$ for each $u$
    \paragraph*{Phase 3.} In the last step of our preprocessing algorithm, 
    we construct for $T(u)$ at each node $u\in\Gamma$ 
    the following functions $f^t(u,x)$, $f^b(u,x)$, $g^t(u,x)$, $g^b(u,x)$, $Q^t(u,x)$, and $Q^t(u,x)$, all of which are defined as follows. Removing $T(u)/\{V^t_u, V^b_u\}$ from $T$ generates at most two subtrees $T_1$ and $T_2$ so that $T_1$ contains the subspine on $P(u)$ from $V^t_u$ to the top vertex of $P(u)$, and $T_2$ contains its subspine between $V^b_u$ and the leaf of $P(u)$. So, $V^t_u\in V(T_1)$ and $V^b_u\in V(T_2)$. At any $x\in [0, +\infty)$, $f^t(u,x)$ (resp., $f^b(u,x)$) 
    is the size of the largest $V^t_u$-inclusive (resp., $V^b_u$-inclusive) subtree in $T(u)$ 
    that is covered by a point in $T_1$ (resp., $T_2$) at distance $x$ to $V^t_u$ (resp., $V^b_u$); $g^t(u,x)$ (resp., $g^b(u,x)$) is the number of marked vertices on that largest $V^t_u$-inclusive (resp., $V^b_u$-inclusive) subtree in $T(u)$ at $x$.  
    
    Function $f^t(u,x)$ (resp., $f^b(u,x)$) is piecewise constant, and as $x$ increases, it breaks (decreases) only at $x$-coordinates where some vertices in $T(u)$ have their weighted distances at a point in $T_1$ (resp., $T_2$) at distance $x$ to $V^t_u$ (resp., $V^b_u$) equal to $\lambda$. Clearly, $g^t(u,x)$ (resp., $g^b(u,x)$) monotonically decreases as $x$ increases and changes only at $x$-coordinates of breakpoints of $f^t(u,x)$ (resp., $f^b(u,x)$). By this property, for any $x\in [0, +\infty)$, we define functions $Q^t(u,x)$ (resp., $Q^b(u,x)$) as the set of all additional marked vertices in the largest $V^t_u$-inclusive (resp., $V^b_u$-inclusive) subtree of $T(u)$ covered by a point in $T_1$ (resp., $T_2$) at distance $x$ to $V^t_u$ (resp., $V^b_u$) compared with the previous breakpoint of $f^t(u,x)$ (resp., $f^b(u,x)$) that has the smallest $x$-coordinate among all of $x$-coordinates larger than $x$. 
    
    We utilize two arrays $F^t_u$ and $F^b_u$ to respectively store all breakpoints of $f^t(u,x)$ and $f^b(u,x)$, their values $g^t(u,x)$ and $g^b(u,x)$, and their sets $Q^t(u,x)$ and $Q^b(u,x)$ in the descending order by theirs $x$-coordinates. Specifically, every entry in $F^t_u$ representing a breakpoint of $f^t(u,x)$ stores a tuple $(x,y,z,Q)$ where $x$ is the $x$-coordinate of this breakpoint, 
    $y$ is its value $f^t(u,x)$, $z$ is value $g^t(u,x)$, and $Q$ is its set $Q^t(u,x)$. The similar definition applies to every tuple of $F^b_u$. 
    
    Last, we compute two indices $i^t_u$ and $i^b_u$: $i^t_u$ (resp., $i^b_u$) 
    is the index of the first tuple in $F^t_u$ (resp., $F^b_u$) 
    where the whole subspine $S(u)$ in node $u$ is covered. 
    % that is, where $V^b_u$ (resp., $V^t_u$) appears in the attached set $Q^t(u,x)$ (resp., $Q^b(u,x)$). 
    % The following lemma implies that our preprocessing algorithm 
    % takes $O(n\log n)$ time.  

    The following lemma shows the result of Phase $3$. 

    \begin{lemma}
    Phase $3$ of the preprocessing work can be done in $O(n\log n)$ time. 
    \end{lemma}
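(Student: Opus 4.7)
The plan is to process $\Gamma$ in post-order and, for each node $u$, construct $F^t_u$ and $F^b_u$ by merging the corresponding arrays from $u$'s children in $\Gamma$, in close analogy with the internal-node construction in the proof of Lemma~\ref{lem:Bquerylargestsubtree}.

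Three cases would arise. At a leaf of $\Gamma$ (a spine vertex with no hanging subtree), $T(u) = \{V^t_u\}$, and both arrays are built in $O(1)$ time. At a leaf $u$ of its enclosing $B$ whose spine vertex has a nonempty hanging subtree (so $u$'s single child in $\Gamma$ is $L_u = R(B(h))$ for the hanging spine $h$), we have $T(u) = \{V^t_u\} \cup \{E(u)\} \cup T(L_u)$, and I would build $F^t_u$ by shifting every tuple of $F^t_{L_u}$ rightward by $d(V^t_u, V^t_{L_u})$, truncating above $\lambda/w_{V^t_u}$, adding one to every $y$-value (for $V^t_u$), and adjusting $z$ and $Q$ accordingly; $F^b_u$ is built symmetrically. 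At an internal node $u$ of its $B$ with children $L_u, R_u$ on the same spine, we have $T(u) = T(L_u) \cup \{E(u)\} \cup T(R_u)$ where $E(u)$ joins $V^b_{L_u}$ and $V^t_{R_u}$, and $f^t(u,x)$ splits into two regimes at the threshold $F^t_{L_u}[i^t_{L_u}].x$: when $x$ exceeds this threshold the largest $V^t_{L_u}$-inclusive covered subtree stays within $T(L_u)$, so $f^t(u,x) = f^t(L_u,x)$ and I copy the corresponding prefix of $F^t_{L_u}$; when $x$ lies at or below the threshold and additionally $(x+\Delta) w_{V^t_{R_u}} \leq \lambda$ with $\Delta = d(V^t_u, V^t_{R_u})$, the covered subtree extends across $E(u)$, giving $f^t(u,x) = f^t(L_u,x) + f^t(R_u,x+\Delta)$; the remaining breakpoints are then obtained by a linear merge of the suffix of $F^t_{L_u}$ from index $i^t_{L_u}$ with the shifted prefix of $F^t_{R_u}$. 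The $Q$-sets at each new breakpoint are disjoint unions of contributions from the two sides, $F^b_u$ is built symmetrically, and $i^t_u, i^b_u$ are read off during the merge at the first appearance of $V^b_u$ (resp.\ $V^t_u$) in the covered subtree.

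The time analysis will rest on two observations. First, each node $u$ incurs an $O(\log n)$ binary search to locate $\lambda/w_{V^t_u}$ within a child array, plus work proportional to $|F^t_u| + |F^b_u|$ for the merge itself and for assembling the $Q$-sets (each marked vertex contributes to at most one $Q$-set per array). Second, since $f^t(u,\cdot)$ and $f^b(u,\cdot)$ each have at most one breakpoint per vertex of $T(u)$, we have $|F^t_u| + |F^b_u| = O(|T(u)|)$; and because $\Gamma$ is balanced of depth $O(\log n)$ while the set $\{u \in \Gamma : v \in T(u)\}$ is an ancestor chain in $\Gamma$ of length $O(\log n)$ for every $v \in T$, we obtain $\sum_{u \in \Gamma} |T(u)| = O(n\log n)$, giving $O(n\log n)$ overall. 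The main obstacle will be verifying the internal-node merge: I will need to confirm that the regime transition at $x = F^t_{L_u}[i^t_{L_u}].x$ is handled consistently, that the shifted $F^t_{R_u}$ interleaves correctly with the suffix of $F^t_{L_u}$, and that no marked vertex is double-counted across $E(u)$ when forming the $Q$-sets; these are exactly the spine-decomposition analogue of the checks performed in the proof of Lemma~\ref{lem:Bquerylargestsubtree}.
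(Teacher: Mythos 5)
Your proposal follows essentially the same route as the paper: a post-order pass over $\Gamma$ with the same three cases (leaf, single-child, two children on one spine), building $F^t_u$ and $F^b_u$ by distance-shifted merges of the children's arrays, with the regime change at the index where the near-side subspine becomes fully covered (the $i^t/i^b$ thresholds), folding the far side's prefix into the threshold breakpoint, and charging $O(|T(u)|)$ plus a binary search per node so that the balanced height of $\Gamma$ gives $O(n\log n)$ overall. The only discrepancies are cosmetic: your left/right orientation mirrors the paper's convention (the spine trees list spine vertices bottom-up from left to right, so $V^t_u = V^t_{R_u}$ and $E(u)=e(V^t_{L_u},V^b_{R_u})$), and in your single-child case the stored $x$-values should be decreased by $d(V^t_u,V^t_{L_u})$ (i.e., the child's function is evaluated at $x+d$, exactly as in your own internal-node formula $f^t(L_u,x)+f^t(R_u,x+\Delta)$), not shifted the other way.
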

    \begin{proof}
        Traverse $\Gamma$ in the post-order and during the traversal, in the bottom-up manner, we construct 
        for each node $u$ functions $f^t(u,x)$ and $f^b(u,x)$, 
        compute values $g^t(u,x)$ and $g^b(u,x)$ as well as sets $Q^t(u,x)$ and $Q^b(u,x)$ 
        at their breakpoints, and determine indices $i^t_u$ and $i^b_u$. 
        We discuss below how to handle each case to construct these functions.    
        % computing $f^t(u,x)$, $g^t(u,x)$, 
        % $Q^t(u,x)$, and $i^t_u$. $f^b(u,x)$, $g^b(u,x)$, $Q^b(u,x)$, and $i^b_u$ 
        % can be determined in a similar way. 

        \begin{enumerate}
            \item $u$ is a leaf of $\Gamma$. So, $T(u)$ is a leaf or has no left child. 
            Then, $T(u) = V^t_u = V^b_u = S(u)$. 
            Clearly, at any $x\in [0,+\infty)$, $f^t(u,x) = f^b(u,x)$, $g^t(u,x) = g^b(u,x)$, 
            $Q^t(u,x) = Q^b(u,x)$, and $i^t_u = i^b_u$. 
            Note that $T(u)$ must be a marked vertex in that every auxiliary vertex 
            on $T$ is of degree $3$.  
            
            We create arrays $F^t_u$ and $F^b_u$ of size $3$, and store in order the 
            three tuples (breakpoints) $(x =+\infty, y=0, z=0, Q = \emptyset)$, 
            $(x = \frac{\lambda}{w_{T(u)}}, y=1, z=1, Q = \{T(u)\})$, 
            and $(x=0, y=1, z=1, Q = \emptyset)$ in $F^t_u$ and $F^b_u$. 
            Last, set both index $i^t_u$ and $i^b_u$ as one. Because a point at distance $F^t_u[1].x$ or $F^b_u[1].x$ covers the whole subspine $S(u)$. The running time for handling a leaf is $O(1)$ time. 
               
            \item $u$ has only one child node. So, $S(u)$ is a vertex 
            and $V^t_u = V^b_u = S(u)$. The only child of $u$ is 
            its left child $L_u$. $T(u)$ is indeed the whole subtree $T_{S(u)}$ 
            on $T$ rooted at $S(u)$ excluding the subtree rooted at $S(u)$'s 
            right child on spine $P(u)$, i.e., the left subtree of $S(u)$ 
            on $T$ which is induced by $V(T(L_u))\cup\{S(u)\}$. 
            Hence, at any $x\in [0,+\infty)$, we have $f^t(u,x)= f^b(u,x)$, 
            $g^t(u,x) = g^b(u,x)$, $Q^t(u,x)= Q^b(u,x)$, and $i^t_u = i^b_u$. 
            
            We focus on determining $f^t(u,x)$, $g^t(u,x)$, $Q^t(u,x)$, and $i^t_u$ below. 
            Since $T(L_u)$ is the whole subtree rooted at $S(u)$'s left child in $T$, clearly, $f^t(u,x)$, $g^t(u,x)$, and $Q^t(u,x)$ can be constructed in linear time by merging $F^t_{L_u}$ and such tuple array for only subtree $\{S(u)\}$ as the internal-node case for constructing $\calA_1$: If $S(u)$ is a marked vertex, then that tuple array for $\{S(u)\}$ contains only $(x = +\infty, y = 0, z = 0, Q =\emptyset), (x = \frac{\lambda}{w_{S(u)}}, y = 1, z = 1, Q ={S(u)}), (x = 0, y = 1, z = 1, Q =\emptyset)$, and otherwise, it is $(x = +\infty, y = 0, z = 0, Q =\emptyset), (x = \frac{\lambda}{w_{S(u)}}, y = 1, z = 0, Q =\emptyset), (x = 0, y = 1, z = 0, Q =\emptyset)$.  

            Due to $V^t_u = V^b_u = S(u)$, we set $i^t_u = 1$. 
            Last, copy $F^b_{u}$ to $ F^t_{u}$ and set $i^b_u = i^t_u$. 
            Hence, this case can be handled in $O(|T(u)|)$ time. 
            
            \item $u$ has two child nodes on $\Gamma$. $P(u)$, $P(L_u)$, and $P(R_u)$ 
            are the same spine. Moreover, vertices on subspine $S(L_u)$ are descendants of every vertex on $S(R_u)$; 
            $S(u)$ is the concatenation of $S(L_u)$ and $S(R_u)$ connected 
            by edge $E(u)=e(V^t_{L_u}, V^b_{R_u})$. 

            Regarding to $f^t(u,x)$, every breakpoint of $f^t(R_u,x)$ defines a breakpoint for $f^t(u,x)$. 
            In other words, $f^t(u,x)$ breaks at $F^t_{R_u}[i].x$ for each $0<i\leq |F^t_{R_u}|$. 
            Additionally, $f^t(u,x)$ breaks at $x$-coordinate $F^t_{L_u}[j].x - d(V^t_{L_u}, V^t_{R_u})$ 
            with $1\leq j\leq |F^t_{L_u}|$ if a point in subtree $\{V^t_{R_u}\}\cup T/T_{V^t_{R_u}}$ at distance 
            $F^t_{L_u}[i].x - d(V^t_{L_u}, V^t_{R_u})$ to $V^t_{R_u}$ covers 
            the whole subspine $S(R_u)$. In other words, $f^t(u,x)$ also breaks at $x = F^t_{L_u}[j].x - d(V^t_{L_u}, V^t_{R_u})$ if this value falls in $[0, F^t_{R_u}[i^t_{R_u}].x]$. 
            
            Consequently, if $i^t_{R_u}>0$, $f^t(u,x) = f^t(L_u, x + d(V^t_{L_u}, V^t_{R_u})) + f^t(R_u,x)$ 
            for $x\in[0, F^t_{R_u}[i^t_{R_u}].x]$ and $f^t(u,x) = f^t(R_u,x)$ for $x\in (F^t_{R_u}[i^t_{R_u}].x, +\infty)$; otherwise, $f^t(u,x) = f^t(R_u,x)$ for $x\in [0, +\infty)$. 
            $g^t(u,x)$ and $Q^t(u,x)$ follow the similar relation except that at $x'= F^t_{R_u}[i^t_{R_u}].x$, 
            $g^t(u,x')$ is the sum of $g^t(R_u,x')$ and values $g^t(L_u,x)$ for all tuples in $F^t_{L_u}$ whose $x$-values fall in $[F^t_{R_u}[i^t_{R_u}].x + d(V^t_{L_u}, V^t_{R_u}), +\infty)$; $Q^t(u,x') = Q^t(R_u,x')\cup\{\cup_{x=F^t_{R_u}[i^t_{R_u}].x + d(V^t_{L_u}, V^t_{R_u})}^{+\infty} Q^t(L_u,x)\}$. 
            
            By the above analysis, we can determine $f^t(u,x)$, $g^t(u,x)$, $Q^t(u,x)$, 
            and $i^t_u$ as follows. For case $i^t_{R_u}\leq 0$, we copy $F^t_{R_u}$ to $F^t_{u}$ and 
            set $i^t_u = 0$ since no point in subtree $\{V^t_{R_u}\}\cup T/T_{V^t_{R_u}}$ could cover the whole subspine $S(R_u)$. 
            For case $i^t_{R_u}>0$, we compute every breakpoint of $f^t(u,x)$ 
            and find $g^t(u,x)$ and $Q^t(u,x)$ by merging $F^t_{L_u}$ and $F^t_{R_u}$ as follows. 
            
            Use index $i$ to loop $F^t_{R_u}$ and index $j$ to loop $F^t_{L_u}$. 
            Initialize $i$ as zero but set $j$ as the index $j'$ of the entry in $F^t_{L_u}$ 
            so that $F^t_{L_u}[j'].x\leq F^t_{R_u}[i^t_{R_u}].x+d(V^t_{L_u}, V^t_{R_u}) <F^t_{L_u}[j'-1].x$, 
            which can be found in $O(\log |T(L_u)|)$ time by performing a binary search on $F^t_{L_u}$. 
            We then merge the whole array $F^t_{R_u}$ and the subarray of $F^t_{L_u}$ 
            between entry $j'$ and the last entry whose value $F^t_{L_u}[j].x -d(V^t_{L_u}, V^t_{R_u})$ is not negative. 
            The merging procedure is similar to the way for constructing $\calA_1$ 
            except that the following: Following the above relation to compute $f^t(u,x)$, $g^t(u,x)$, and $Q^t(u,x)$; in case that a breakpoint with $x = F^t_{L_u}[i^t_{L_u}].x -d(V^t_{L_u}, V^t_{R_u})$ 
            is created, that is, when $j = i^t_{L_u}$, $i\geq i^t_{R_u}$, and 
            $F^t_{L_u}[j].x - d(V^t_{L_u}, V^t_{R_u}) \geq F^t_{R_u}[i].x$, 
            we set $i^t_u$ as the index of this new breakpoint in $F^t_u$ since any point in subtree $\{V^t_{R_u}\}\cup T/T_{V^t_{R_u}}$ within distance $F^t_{L_u}[i^t_{L_u}].x -d(V^t_{L_u}, V^t_{R_u})$ to $V^t_u$ covers the whole subspine $S(u)$ at $u$. 
            Clearly, the running time is $O(|T_u|)$. 
            
            Since $f^b(u,x)$, $g^b(u,x)$, $Q^b(u,x)$, 
            and $i^b_u$ can be determined in a similar way in $O(|T_u|)$ time, we omit the details.
        \end{enumerate}

        % Analogously, functions $f^b(u,x)$, $g^b(u,x)$, $Q^b(u,x)$, and $i^b_{u}$ can 
        % be constructed in $O(|T_u|)$ time by merging $F^b_{u_1}$ and $F^b_{u_2}$. 
        % The details are omitted. 
    
        In a sum, we spend $O(|T_u|)$ at each node $u\in \Gamma$ on constructing these functions. Hence, the total running time of phase $3$ is $O(n\log n)$. \qed
        \end{proof}

    Recall that the binary-tree transformation and building $\Gamma$ take $O(n\log n)$ time. 
    Combining all three phases, the following result follows. 
    
    \begin{lemma}\label{lem:constructA2}
    $\calA_2$ can be constructed in $O(n\log n)$ time. 
    \end{lemma}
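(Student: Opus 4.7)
The plan is to show that the total construction time of $\calA_2$ is $O(n\log n)$ by summing the costs of the three phases already described in the construction, and then briefly justifying that each bound is tight enough for the claimed total.

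First, I would handle Phase~1, the binary-tree transformation. The algorithm of~\cite{ref:BanikTh16,ref:TamirAn96} traverses $T$ once and, for every vertex with more than two children, inserts a constant number of auxiliary vertices per extra child. The total number of auxiliary vertices is therefore bounded by $O(n)$, yielding a binary tree $T'$ of size $O(n)$ constructed in $O(n)$ time. I would also account here for the distance preprocessing on $T'$ (computing distances to $R(T')$ and building the LCA structure of~\cite{ref:BenderTh04}), which is $O(n)$, so that subsequent distance queries during Phase~3 are $O(1)$.

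Next, I would invoke the spine decomposition of~\cite{ref:BenkocziFa03} for Phase~2. This produces the balanced binary tree $\Gamma$ of size $O(n)$ in $O(n\log n)$ time, together with all the auxiliary pointers each node $u$ stores ($S(u)$, $E(u)$, $T(u)$, $V^t_u$, $V^b_u$, and the child/parent pointers). No additional analysis is needed beyond citing this result; the key property I would record for use in Phase~3 is that $\Gamma$ is balanced, so that $\sum_{u\in\Gamma}|T(u)|=O(n\log n)$.

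Finally, Phase~3 is covered by the preceding lemma, which established that processing all nodes of $\Gamma$ in post-order to build the arrays $F^t_u,F^b_u$, the values $g^t(u,x),g^b(u,x)$, the sets $Q^t(u,x),Q^b(u,x)$, and the indices $i^t_u,i^b_u$ takes $O(n\log n)$ total time (since the work at each node $u$ is $O(|T(u)|)$ and $\sum_u |T(u)|=O(n\log n)$). Summing the three phases gives $O(n)+O(n\log n)+O(n\log n)=O(n\log n)$, which is the claim. The main obstacle here is essentially rhetorical rather than technical: I just need to make sure the reader sees that Phase~3's $O(n\log n)$ bound already absorbs the balancedness property of $\Gamma$, so that combining all three phases produces no hidden logarithmic factor. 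Once that is pointed out, the lemma follows immediately.
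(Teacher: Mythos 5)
Your proposal is correct and follows essentially the same route as the paper: it sums the $O(n)$ binary-tree transformation and preprocessing of Phase~1, the $O(n\log n)$ spine-decomposition construction of $\Gamma$ in Phase~2, and the $O(n\log n)$ bound for Phase~3 established in the preceding lemma (which itself rests on the balancedness of $\Gamma$ giving $\sum_{u\in\Gamma}|T(u)|=O(n\log n)$). Nothing is missing.
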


    \paragraph{\textbf{Querying on $\calA_2$}} Recall that our original goal is to count and report $V(T_\lambda(x))$ for a given point $x$ on the original tree; by Observation~\ref{obs:queryequivalence}, it is sufficient to count and report only marked vertices in the largest self-inclusive subtree $T'_\lambda(x')$ covered by the corresponding point $x'$ on the binary transformation of the original tree. In the following, we shall present how to query on $\calA_2$ about the cardinality of $V(T'_\lambda(x'))$ while counting and reporting only marked vertices. 

    %    let the node $u_x$ on $\Gamma$ with $x\in E_{u_x}$ or $V_{u_x}=x$ 
    % can be found in $O(\log n)$ time, and the spine containing $x$ is reported 
    % with an additional $O(K)$ time where $K$ is the output size. 

    First, we find in $O(1)$ time the corresponding point $x'$ on the binary transformation $T$ for the given point in the original tree. Suppose $x'$ is on an edge $e(r,s)$ of $T$. Without loss of generality, assume $r$ is $s$'s parent. On $T$, at each vertex on path $\pi(s,R(T))$, a subtree rooted at the vertex hangs off $\pi(s,R(T))$ and it is referred to as the hanging subtree of the vertex, which is different with the definition for the spine decomposition. The hanging subtree of $s$ is the whole subtree $T_s$ rooted at $s$ while the hanging subtree of every other vertex on $\pi(s,R(T))$ is either its left subtree or its right subtree. Generally, for any subpath $\pi'\in\pi(s,R(T))$, a subtree containing only $\pi'$ on $\pi(s,R(T))$ hangs off $\pi(s,R(T))$; we refer to it as the hanging subtree of $\pi'$. Clearly, $T'_\lambda(x')$ contains only a subpath of $\pi(s,R(T))$ that is the longest $x'$-inclusive subpath covered by $x'$, and denote it by $\pi_{x'}$; $T'_\lambda(x')$ is exactly the largest $\pi_{x'}$-inclusive subtree covered by $x'$ in the hanging subtree of $\pi_{x'}$. 

    Denote for any vertex $v\in V$, by $u_v$ the (unique) node of $\calA_2$ with $S(u_v) = v$. Recall that $\calA_2$ is obtained by linking the binary search trees constructed for spines (leaf paths) of $T$ generated by the spine decomposition. The path $\pi(u_s,R(\calA_2))$ on $\calA_2$ passes through $t\geq 1$ binary search trees $B_1, B_2, \cdots, B_t$ in the bottom-up order. Let $\mu_i$ be the leaf of $B_i$ on $\pi(u_s,R(\calA_2))$. So, $\pi(u_s,R(\calA_2))$ is the concatenation of (disjoint) subpaths $\pi(\mu_i,R(B_i))$ for all $1\leq i\leq t$ in order. 
    % where every two consecutive subpaths $\pi(\mu_i,R(B_i))$ and $\pi(\mu_i,R(B_{i+1}))$ are connected by the edge $E(\mu_{i+1})$ in $\mu_{i+1}$. 
    
    % $\mu_1 =u_s$ and $R(B_t)=R(\calA_2)$
    
    For each $1\leq i\leq t$, let $\pi_i$ represent the subpath on $B_i$'s spine from $S(\mu_i)$ to its top spine vertex, that is, the subspine induced by $S(\mu_i)$ and all spine vertices in $B_i$'s leaves on the right of $\mu_i$. As illustrated in Fig.~\ref{fig:spinequery}, path $\pi(s, R(T))$ is exactly the concatenation of (disjoint) subpaths $\pi_i$ for all $1\leq i\leq t$ in order, where every two consecutive subpaths $\pi_i$ and $\pi_{i+1}$ are connected by the edge $E(\mu_{i+1})$ in $\mu_{i+1}$. 

    % For any node $u$ of $\calA_2$, define $r(u)$ be $u$ if $u$ has no right child (i.e., $u$ is a leaf of a binary search tree in $\calA_2$), and otherwise, let $r(u)$ be $u$'s right child $R_u$. 
    
    It follows that each $\pi_i$ with $1\leq i\leq t$ is the concatenation of $S(\mu_i)$ and (disjoint) subspines $S(R_u)$ for all nodes $u\in\pi(\mu_i, R(B_i))$ whose right children $R_u$ are not on $\pi(\mu_i, R(B_i))$ in the bottom-up order. Moreover, if $\mu_i=s$, then the hanging subtree of $S(\mu_i)$ off $\pi(s, R(T))$ is the union of its left and right subtrees, and otherwise, the hanging subtree of $S(\mu_i)$ is only its right subtree; for every other node $u\in\pi(\mu_i, R(B_i))$ with its right child not on $\pi(\mu_i, R(B_i))$, the hanging subtree of subpath $S(R_u)$ off $\pi(s, R(T))$ is $T(R_u)$. 

    % which is the spine vertex in $B_i$'s rightmost leaf,

    % For any node $u$ of $\calA_2$, define $r(u)$ as $u$ if $u$ has no right 
    % child (i.e., $u$ is a leaf of a binary search tree in $\calA_2$), and otherwise, let $r(u)$ be $u$'s right child $R_u$. 
    % Hence, for each $1\leq i\leq t$, $\pi_i$ is the concatenation of (disjoint) subspines $S(r(u))$ for all nodes $u\in\pi(\mu_i, R(B_i))$ in the bottom-up order. 

    Let $\tau$ be the largest integer with $1\leq \tau\leq t$ such that that longest $x'$-inclusive subpath $\pi_{x'}$ of $\pi(s,R(T))$ covered by $x'$ contains subpaths $\pi_1/\{s\}, \pi_2, \cdots, \pi_{\tau-1}$ but a part of $\pi_{\tau}$ or none of it. 

\begin{figure}[t]
        \centering
        \includegraphics[scale = 0.5]{spinequery.eps}
        \caption{Illustrating the query about $T'_\lambda(x')$ on $\calA_2$ for a point $x'\in e(r,s)$: The search path for node $u_s$ is $\pi(\mu_1,u_7)$ which passes through binary search trees $B_1$ and $B_2$ on $\calA_2$ in the bottom-up order. During the traceback from $\mu_1$, we stop to visit the right child of each node above $u_5$ since $x'$ cannot cover the whose subspine in its right child $R_{u_5}$ but $T'_\lambda(x')$ contains vertices in the right subtree of $S(\mu_2)$, so the left children $L_{u_6}$ and $L_{u_7}$ are visited.}
        \label{fig:spinequery}
    \end{figure}
    
    Guided by the above analysis, $T'_\lambda(x')$ can be found as follows: For $u_s = \mu_1$, find the largest $s$-inclusive subtree covered by $x'$ in the left subtree $T(u_s)$ of $s$. For each $B_i$ encountered with $1\leq i<\tau$, we find the largest $S(\mu_i)$-inclusive subtree covered by $x'$ in $S(\mu_i)$'s right subtree, and find the largest $S(R_u)$-inclusive subtree covered by $x'$ in $T(R_u)$ for all nodes $u\in\pi(\mu_i,R(B_i))$ whose right children are not on $\pi(u_s, R(\calA_2))$. 
    For $B_\tau$, if $S(\mu_\tau)$ can be covered by $x'$, we perform the same operation as the above for $B_i$ with $i<\tau$ except that we stop querying any right child after encountering a node $u$ whose subspine $S(R_u)$ cannot be covered fully by $x'$; 
    otherwise, no vertices on $\pi_\tau$ are in $ T'_\lambda(x')$, so the procedure is terminated. 
    
    Furthermore, for any node $u\in\pi(u_s, R(\calA_2))$ with $R_u\notin\pi(u_s, R(\calA_2))$, finding the largest $S(R_u)$-inclusive subtree covered by $x'$ in $T(R_u)$ asks for finding the largest $V^b_{R_u}$-inclusive subtree containing $V^t_{R_u}$ in $T(R_u)$. This can be carried out by performing a binary search in array $F^b_{R_u}$ with key $d(x', V^b_{R_u})$. An index $i'$ is obtained so that $F^b_{R_u}[i'].y$ is the size of the largest $V^b_{R_u}$-inclusive subtree in $T(R_u)$ covered by $x'$, $F^b_{R_u}[i'].z$ is the number of all marked vertices of this subtree, and $\cup_{i=1}^{i'} F^b_{R_u}[i].Q$ is their set. 
    Additionally, if $i'\geq i^b_{R_u}$ then $x'$ covers $S(R_u)$ completely, and otherwise, it cannot. 
    
    To find the largest $S(\mu_i)$-inclusive subtree in $S(\mu_i)$'s right subtree covered by $x'$, the following lemma can be applied, which aims to find for any given vertex $v\in T$, the largest $v$-inclusive subtree in $v$'s right subtree covered by any given point $\alpha$ out of its right subtree except for $v$.

    \begin{lemma}\label{lem:rightsubtreequery}
        The counting query of the largest $v$-inclusive subtree in $v$'s right subtree covered by $\alpha$ can be answered in $O(\log h\log n)$ where $h$ is the height of the binary search tree constructed for the spine including $v$. 
    \end{lemma}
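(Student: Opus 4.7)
The plan is to decompose $v$'s right subtree using the BST $B$ for the spine $P$ containing $v$, and then reduce the query to a binary search over a certain chain of ancestors in $B$. Let $\mu_v$ be the BST leaf for $v$. Walking up from $\mu_v$ to $R(B)$, the ancestors $u$ with $\mu_v\in R_u$ are exactly those whose $L_u$ contributes a chunk $T(L_u)$ lying \emph{below} $v$ on the spine; call these the L-contributing ancestors, of which there are $m\leq h$, say $u_{j_1},u_{j_2},\ldots,u_{j_m}$ ordered from nearest $\mu_v$ outward. A short induction shows that $\mu_v$ is the leftmost leaf of each $R_{u_{j_l}}$, and that the chunks $T(L_{u_{j_1}}),\ldots,T(L_{u_{j_m}})$ partition $v$'s right subtree minus $\{v\}$, with $S(L_{u_{j_1}})$ immediately below $v$ and $S(L_{u_{j_m}})$ at the bottom of $P$ in spine order.

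The largest $v$-inclusive covered subtree extends $v$ downward through consecutive fully covered chunks and then absorbs the covered $V^t$-inclusive prefix of the first not-fully-covered chunk. I would express the predicate ``$S(L_{u_{j_l}})$ is fully covered by $\alpha$'' as $d(\alpha,v)\leq\tau_l$ with $\tau_l:=F^t_{L_{u_{j_l}}}[i^t_{L_{u_{j_l}}}].x - d(v,V^t_{L_{u_{j_l}}})$, a constant I precompute during the $\calA_2$ setup using the $O(1)$ distance oracle on $T$. Then at each BST leaf of every spine's BST I precompute the monotonically non-increasing cumulative minima $\mathrm{MT}_k:=\min_{l\leq k}\tau_l$ in $O(h)$ time per leaf, for $O(n\log n)$ overall, fitting the $\calA_2$ budget. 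At query time, a binary search on $\mathrm{MT}$ locates the cutoff $k^*:=\min\{k:d(\alpha,v)>\mathrm{MT}_k\}$ in $O(\log h)$ steps with $O(1)$ work per step.

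The answer then equals
\[
1 \;+\; \sum_{l=1}^{k^*-1} f^t\bigl(L_{u_{j_l}},\,d(\alpha,V^t_{L_{u_{j_l}}})\bigr) \;+\; f^t\bigl(L_{u_{j_{k^*}}},\,d(\alpha,V^t_{L_{u_{j_{k^*}}}})\bigr),
\]
omitting the last term if $k^*>m$. The partial term at $u_{j_{k^*}}$ is a single binary search in $F^t_{L_{u_{j_{k^*}}}}$ costing $O(\log n)$. The main obstacle is accumulating the sum of fully covered contributions in $O(\log h\log n)$ rather than the naive $O(h\log n)$. My plan is to fold the sum accumulation into the same binary search that locates $k^*$: alongside $\mathrm{MT}$ I precompute per-leaf auxiliary arrays indexed by the same $k$ that encode, up to an $O(\log n)$ residual lookup, the combined $f^t$-mass of the block of fully covered chunks traversed by each probe. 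With these auxiliaries, each of the $O(\log h)$ binary-search probes performs a single $O(\log n)$ query into an appropriate $F^t_{L_u}$ to extend the running total across the probed block, yielding $O(\log h\log n)$ total. The delicate technical point is arranging the auxiliary arrays so that the combined $f^t$ of a contiguous block of fully covered chunks is recoverable through one $F^t$-lookup rather than a per-chunk scan; this is where I expect most of the effort to go, and it relies crucially on the fact that within a fully covered block the identity $f^t(L_{u_{j_l}},x)=|S(L_{u_{j_l}})|+\text{(covered mass of hanging subtrees)}$ makes the per-chunk contribution a well-behaved step function of $d(\alpha,v)$ amenable to prefix aggregation.
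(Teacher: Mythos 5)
Your decomposition of $v$'s right subtree is the same one the paper uses: the chunks $T(L_u)$ of the ancestors of $v$'s leaf in $B(u_v)$ whose left children hang off the search path, ordered down the spine, plus $v$ itself, and you correctly observe that the answer is $1$ plus the full $f^t$-contributions of a maximal prefix of consecutively fully-covered chunks plus the partial contribution of the first chunk whose subspine is not fully covered. The paper, however, stops there: its proof simply walks the path $\pi(u_v,R(B(u_v)))$ bottom-up, performs \emph{one} binary search in $F^t_{\beta(u)}$ per off-path left child, adds $F^t_{\beta(u)}[i'].y$ to the counter, and terminates as soon as the test $i'\geq i^t_{\beta(u)}$ fails. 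No block aggregation or threshold binary search over the ancestor chain is attempted; the cost is one $O(\log n)$ search per visited path node, which is what is actually needed for the overall $O(\log^2 n)$ query bound of Lemma~\ref{lem:querylargestsubtree} (the path lengths inside the individual $B_i$'s sum to $O(\log n)$ along $\pi(u_s,R(\calA_2))$).

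The genuine gap in your write-up is exactly the step you flag as ``where most of the effort will go'': recovering the combined $f^t$-mass of a contiguous block of fully covered chunks through a single $O(\log n)$ lookup per probe. This is not a routine detail. For a fixed block of chunks, the quantity $\sum_{l\in\text{block}} f^t\bigl(L_{u_{j_l}},d(\alpha,V^t_{L_{u_{j_l}}})\bigr)$ is a step function of $d(\alpha,v)$ whose number of breakpoints is the total size of the chunks in the block; a binary search in any single existing array $F^t_{L_u}$ cannot return this sum, and no node of $\calA_2$ stores an array for the union of an arbitrary prefix of off-path chunks (the internal-node arrays $F^t_u$ aggregate $T(u)$, which is a different union). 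Precomputing merged prefix-aggregate arrays at every leaf is also not within the stated budget: for a single long spine the right subtrees of its vertices have total size $\Theta(n\,|\mathrm{spine}|)$, so per-leaf prefix structures of that complexity break the $O(n\log n)$ preprocessing you claim to fit. Your threshold array $\mathrm{MT}_k$ does locate the cutoff $k^*$ in $O(\log h)$ time, but locating $k^*$ is the easy half; without a concrete construction for the block sums, the counting query still degenerates to one search per fully covered chunk, i.e.\ the paper's $O(h\log n)$-style walk, and your proposal as written does not prove the stronger per-probe aggregation it relies on.
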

    \begin{proof}
    Recall that $B(u_v)$ is the binary search tree constructed for the spine $P(u_v)$ containing $v$, and $u_v$ is a leaf of $B(u_v)$. Consider the path $\pi(u_v, R(B(u_v)))$ from $u_v$ to $R(B(u_v))$, which is on $\pi(u_v, R(\calA_2))$. For each node $u\in\pi(u_v, R(B(u_v)))$, let $\beta(u)$ be $u$'s left child $L_u$ in $B(u_v)$ if $L_u\notin\pi(u_v, R(B(u_v)))$, and otherwise, let $\beta(u)$ be $u$.   
    
    % $\pi'=\{u_1 = u_v, u_2, \cdots, u_{h}\}$ be the subsequence of $\pi(u_v, R(B(u_v)))$ containing only nodes whose left children are not on $\pi(u_v, R(B(u_v)))$. 

    Let $\pi'$ be the subspine of $P(u_v)$ in $v$'s right subtree. $\pi'$ is induced by $S(u_v) = v$ and all spine vertices in $B(u_v)$'s leaves on the left of $u_v$. It follows that $\pi'$ is the concatenation of (disjoint) subspines $S(u_v)$ and $S(\beta(u))$ for all nodes $u$ on $\pi(u_v, R(B(u_v)))$ with $u\neq\beta(u)$ in the bottom-up order; additionally, $T(\beta(u))$ of every node $u\in\pi(u_v, R(B(u_v)))$ with $u\neq\beta(u)$ is the hanging subtree of subspine $S(\beta(u))$ off $P(u_v)$. Hence, $v$'s right subtree is the union of $S(u_v)$ and $T(\beta(u))$ of all those nodes $u\in\pi(u_v, R(B(u_v)))$ with $u\neq\beta(u)$. 
    
    Now it is clear to see that the size of the largest $v$-inclusive subtree in $v$'s right subtree covered by $\alpha$ can be known as follows: Traverse path $\pi(u_v, R(B(u_v)))$ on $B(u_v)$ from $u_v$ to $R(B(u_v))$ in the bottom-up order. For $u_v$, we determine whether $D(\alpha,u_v)>\lambda$. If yes then the largest $v$-inclusive subtree in $v$'s right subtree covered by $\alpha$ is empty. Otherwise, we continue the traversal to visit $u_v$'s parent after incrementing the counter.
    % and reporting $S(u_v)$. 
    
    For every other node $u$ with $u\neq\beta(u)$ encountered, we perform a binary search on its array $F^t_{\beta(u)}$ to find the largest $V^t_{\beta(u)}$-inclusive subtree in $T(\beta(u))$ covered by a point at distance $d(\alpha, V^t_{\beta(u)})$, and an index $i'$ is obtained. Subsequently, we increase the counter by $F^t_{\beta(u)}[i'].y$. Note that $F^t_{\beta(u)}[i'].z$ is the number of all marked vertices in that subtree, and $\cup_{i=1}^{i'} F^t_{\beta(u)}[i].Q$ is their set.

    Next, we check if $i'\geq i^t_{\beta(u)}$. If yes then $\alpha$ covers $S(\beta(u))$ entirely, and thereby, the subspine of $\pi'$ from $v$ to $V^b_{\beta(u)}$; so we continue the traversal on $\pi(u_v, R(B(u_v)))$. Otherwise, $\alpha$ cannot cover $S(\beta(u))$ entirely, so for any node $u'$ above $u$ on $\pi(u_v, R(B(u_v)))$ with $u'\neq\beta(u')$, $T(\beta(u'))$ is not relevant to that largest $v$-inclusive subtree covered by $\alpha$. Hence, the query algorithm terminates. 

    % Because the largest $v$-inclusive subtree covered by $y$ contain neither of $V^b_{L_u}$ and its descendants on $T$. 
    
    As a result, if $u_v\in\calA_2$ is known, the counting query of the largest $v$-inclusive subtree covered by $\alpha$ in $v$'s right subtree can be answered in $O(\log h\log n)$ time where $h$ is the height of $B(u_v)$. 

    It should be mentioned that counting and reporting only its marked vertices can be done in the same way except that for each node $u$ with $u\neq\beta(u)$, after obtaining the index $i'$, we increase the counter by $F^t_{\beta(u)}[i'].z$ rather than $F^t_{\beta(u)}[i'].y$, and additionally, report $F^t_{\beta(u)}[i].Q$ for every entry in array $F^t_{\beta(u)}$ with $i\leq i'$. The time complexities are respectively $O(\log h\log n)$ and $O(\log h\log n + K')$ where $K'$ is the number of all marked vertices. 
 
    % and reporting all its vertices takes an additional time $O(K')$ where $K'$ is its size. 
    
    It remains to find node $u_v$ on $\calA_2$. Starting from $R(\calA_2)$, we always visit the child $u$ whose associated subtree $T(u)$ contains vertex $v$ until a node of degree $2$ is encountered whose associated subspine is exactly $v$. To do so, 
    we need to address the problem that asks for any node $u$ of degree $3$ with $v\in V(T(u))$, which of the two subtrees respectively in its left and right children contains $v$. Suppose $E(u)$ is incident to vertex $s'$ and $r'$ on $T$, and $r'$ is the parent of $s'$. Because the two subtrees in $u$'s child nodes are obtained by breaking $T(u)$ at edge $E(u)$. If $s'$ is the lowest common ancestor of $v$ and $s'$, then $v$ belongs to the subtree in its left child, and otherwise, $v$ belongs to the subtree in its right child. Recall that the lowest common ancestor of any two vertices in $T$ can be known in $O(1)$ time. Hence, $u_v$ can be found in $O(\log n)$ time. 

    After locating $u_v$ in $\calA_2$ in $O(\log n)$ time, during the traceback, the above procedure can be performed to answer the counting query about the largest $v$-inclusive subtree in $v$'s right subtree covered by point $\alpha$ in $O(\log h\log n)$ time. Thus, the lemma holds. \qed
    \end{proof}

    Now we are ready to present our query algorithm on $\calA_2$ to count and report only marked vertices in $T'_\lambda(x')$. Initially, we set a flag $a$ so that $a = \text{True}$ means that the lowest binary search tree $B_\tau$ has been found, and set a flag $b$ so that $b = \text{True}$ indicates that we need to query on the left children of nodes in the binary search tree we are currently visiting. 
    
    Initialize $a$ and $b$ as False. Set counter $C =0$ and set $Q = \emptyset$ for counting and storing $V(T_\lambda(x))$, i.e., all marked vertices in $T'_\lambda(x')$. Next, find $u_s$ in $O(\log n)$ time on $\calA_2$ as the way described in Lemma~\ref{lem:rightsubtreequery}. During the traceback, starting from $u_s$, we process each node $u$ encountered on $\pi(u_s,R(\calA_2))$ as follows. 

    \begin{enumerate}
        \item $u$ is $u_s$. We compute the weighted distance $D(s,x')$ from vertex $s$ to $x'$ in constant time. If $D(s,x')\leq\lambda$, then we set $b =\text{True}$ and perform a binary search on array $F^t_u$ with key $ d(x',s)$, which returns an index $i'$. Next, we increase $C$ by $F^t_u[i'].z$, and then add $F^t_u[i].Q$ into $Q$ for all $1\leq i\leq i'$. 
        % Note that if $s$ is a marked vertex then we decrease $C$ by one to avoid double counting $s$ in the querying of the largest $s$-inclusive subtree 
        % all of which induce the largest $s$-inclusive subtree covered by $x'$ in the left subtree $T(u_s)$ of $s$. 
        Otherwise, $s$ cannot be covered by $x'$ and so we continue to visit the parent of $u_s$ on $\pi(u_s, R(\calA_2))$. 
        
        \item $u$ is of degree $2$ but $u\neq u_s$. 
        If $a$ is true, then the last binary search tree visited on $\calA_2$ is $B_{\tau}$. Hence, we terminate the query algorithm by returning $C$ for the counting query of $V(T_\lambda(x))$ and outputting $Q$ for reporting $V(T_\lambda(x))$. 
        Otherwise, we first decide whether $D(S(u),x')>\lambda$. For yes, $B(u) = B_\tau$ but no vertices of $\pi_\tau$ are in $T'_\lambda(x')$, so we return $C$ and output $Q$ to terminate the query algorithm. 
        Otherwise, $D(S(u),x')\leq\lambda$. We set $b$ as true to give the permission to visit the left children of the following nodes on $\pi(u, R(B(u)))$ for finding the largest $S(u)$-inclusive subtree in its right subtree covered by $x'$. Additionally, if $S(u)$ is a marked vertex, we increase $C$ by one and add $S(u)$ into $Q$.
        
        \item $u$ is of degree $3$. We first determine if $u$'s right child $R_u$ is on $\pi(u_s,R(\calA_2))$, which can be known in constant time by comparing $R_u$ and the last node visited. 
        
        On the one hand, $R_u$ is on $\pi(u_s,R(\calA_2))$. So, $L_u$ is not on $\pi(u_s,R(\calA_2))$. 
        Assume that node $u'$ is the last node of degree $2$ visited. If $b$ is false, which means $u' = s$ and $D(s,x')>\lambda$, then we continue our traversal to visit the parent of $u$. 
        Otherwise, $b$ is true, so we visit its left child $L_u$ for finding the largest $S(u')$-inclusive subtree covered by $x'$ in $S(u')$'s right subtree. As in Lemma~\ref{lem:rightsubtreequery}, we perform a binary search on $F^t_{L_u}$ with key $d(x', V^t_{L_u})$, which returns an index $i'$; then increase $C$ by $F^t_{L_u}[i'].z$ and add $F^t_{L_u}[i].Q$ for all $1\leq i\leq i'$ to $Q$. Next, we compare indices $i'$ and $i^t_{L_u}$. For case $i'<i^t_{L_u}$, we set $b$ as false since $T'_\lambda(x')$ does not contain the entire subspine $S(L_u)$. 
        % any more vertices in the right subtree of $S(u')$). 
        % since $x$ cannot cover the whole subspine $S(L_u)$ and so the 
        % largest covered $S(u')$-inclusive subtree in $S(u')$'s right subtree has been found. 

        On the other hand, $R_u$ is not on $\pi(u_s,R(\calA_2))$. If $a$ is true, then $b$ must be true. Because $B(u)$ is $B_\tau$ and we have found the node $u''$ on $\pi(\mu_\tau, R(B_\tau))$ whose subspine $S(R_{u''})$ cannot be fully covered by $x'$. 
        We thus continue our traversal on $\pi(u_s,R(\calA_2))$ in order to find the largest $S(\mu_\tau)$-inclusive subtree covered by $x'$ in $S(\mu_\tau)$'s right subtree. 
        % only left children of the following nodes on $\pi(\mu_\tau,B_\tau)$ will be processed further until we reach a node of degree $2$. 
        Otherwise, $a$ is false, so $B_\tau$ is above $B(u)$ on $\calA_2$. We perform a binary search on $F^b_{R_u}$ with key $d(x', V^b_{R_u})$. By the obtained index $i'$, we increase $C$ by $F^b_{R_u}[i'].z$ and insert $F^b_{R_u}[i].Q$ to $Q$ for every $i\leq i'$. Further, we check if $i'< i^b_{R_u}$. If yes, $x'$ cannot cover the entire subpath $S(R_u)$ on $\pi(s,R(T))$, which means $B(u)$ is $B_\tau$; hence, we set $a$ as true. 
    \end{enumerate}

    In worst case, our query algorithm spends $O(\log n)$ time on processing each node on $\pi(u_s, R(\calA_2))$ for counting all marked vertices in $T'_\lambda(x')$, i.e., counting $V(T_\lambda(x))$. Hence, the counting query of $V(T_\lambda(x))$ can be answered in $O(\log^2 n)$ time. While counting $V(T_\lambda(x))$, only marked vertices of $T'_\lambda(x')$ are inserted into $Q$. In total, additional $O(|V(T_\lambda(x))|)$ time is spent on reporting $V(T_\lambda(x))$. Thus, we have the following result. 

    \begin{lemma}\label{lem:countA2}
    The counting and reporting queries of $V(T_\lambda(x))$ can be answered respectively in $O(\log^2n)$ and $O(\log^2n +K)$ time where $K=|V(T_\lambda(x))|$.  
    \end{lemma}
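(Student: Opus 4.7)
The plan is to exhibit a query algorithm that walks up the search path $\pi(u_s, R(\calA_2))$ on the balanced decomposition tree $\calA_2$ and, at each node visited, performs an $O(\log n)$-time binary search on a precomputed $F^t$ or $F^b$ array to accumulate a partial answer. First I would invoke Observation~\ref{obs:queryequivalence} to reduce the original query about $V(T_\lambda(x))$ on the input tree to counting and reporting the marked vertices in $V(T'_\lambda(x'))$ on the binary transformation $T$, where $x'$ is the image of $x$ (computed in $O(1)$ time). Then I would locate $u_s$ in $\calA_2$ in $O(\log n)$ time by the lowest-common-ancestor-based descent described in the proof of Lemma~\ref{lem:rightsubtreequery}, so that the traversal can start at the correct leaf.

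Next I would formalize the structural decomposition already sketched above: the longest $x'$-inclusive subpath $\pi_{x'}$ of $\pi(s, R(T))$ splits into the subspines $\pi_1, \pi_2, \ldots, \pi_{\tau-1}$ contributed by binary search trees $B_1, \ldots, B_{\tau-1}$ lying below $B_\tau$, together with a possibly partial contribution from $\pi_\tau$. Correspondingly, $T'_\lambda(x')$ is the disjoint union of $\pi_{x'}$, the largest $V^b_{R_u}$-inclusive subtree of $T(R_u)$ covered by $x'$ for each node $u$ on $\pi(u_s, R(\calA_2))$ with $R_u$ off the search path, and the largest $S(\mu_\tau)$-inclusive subtree in $S(\mu_\tau)$'s right subtree covered by $x'$. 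The first kind of subtree can be retrieved by one binary search in $F^b_{R_u}$ at key $d(x', V^b_{R_u})$, the second kind by applying Lemma~\ref{lem:rightsubtreequery} inside $B_\tau$. I would maintain two Boolean flags: $a$ to record whether $B_\tau$ has been identified (triggered when an index $i' < i^b_{R_u}$ shows that $x'$ fails to cover the full subspine $S(R_u)$), and $b$ to record whether we are still allowed to descend into left children for the Lemma~\ref{lem:rightsubtreequery}-style subtree in $B_\tau$.

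For correctness I would argue by induction along the search path that after processing the node just below $u$, the accumulated counter $C$ equals the number of marked vertices in the portion of $T'_\lambda(x')$ lying in the union of $S(\mu_i)$'s and $T(R_u)$'s already handled, and that the set $Q$ contains exactly those vertices. The flag invariants guarantee that we stop descending into right children past the first $u$ with $i' < i^b_{R_u}$ (ensuring we do not add vertices that $x'$ fails to cover) and that once $B_\tau$ is found we restrict further left-child queries to realize the partial contribution of $\pi_\tau$. The crucial check that no relevant subtree is skipped is that if some $S(R_u)$ is only partially covered, then every higher $S(R_{u'})$ is automatically not covered by $x'$, since those subspines lie further from $x'$ along $\pi(s, R(T))$; this is precisely the monotonicity used to justify aborting after $B_\tau$.

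Finally I would bound the running time. The search path $\pi(u_s, R(\calA_2))$ has length $O(\log n)$ because $\calA_2$ is balanced (Lemma~\ref{lem:constructA2}); at each node the work is one binary search in an array of size at most $n$ plus $O(1)$ bookkeeping, giving $O(\log n)$ per node and $O(\log^2 n)$ overall for counting. For reporting, the extra cost is the total cardinality of the sets $F^t[i].Q$ and $F^b[i].Q$ appended to $Q$, which by construction partition the vertices contributed to $T'_\lambda(x')$ and therefore sum to $K$, yielding the $O(\log^2 n + K)$ bound. The main obstacle I anticipate is the bookkeeping argument that the flags $a$ and $b$ together with the comparisons against $i^t_{L_u}$ and $i^b_{R_u}$ correctly identify $B_\tau$ and the partial coverage within it on a single bottom-up pass; once that invariant is pinned down, the complexity bounds follow directly from balancedness of $\calA_2$ and the precomputed sorted arrays.
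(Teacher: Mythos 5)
Your overall strategy---walk up $\pi(u_s,R(\calA_2))$, spend one binary search on a precomputed $F^t$ or $F^b$ array per node, and use the flags $a,b$ to locate $B_\tau$ and stop---matches the paper's query algorithm, but the structural decomposition on which your correctness induction rests is incomplete, and this is a genuine gap. You claim $T'_\lambda(x')$ is the disjoint union of $\pi_{x'}$, the largest $V^b_{R_u}$-inclusive covered subtrees of $T(R_u)$ for off-path right children $R_u$, and the largest $S(\mu_\tau)$-inclusive covered subtree in $S(\mu_\tau)$'s right subtree. Two kinds of pieces are missing. First, the hanging subtree at $s$ itself: since $x'$ lies on $e(r,s)$, the covered portion of the whole subtree rooted at $s$ belongs to $T'_\lambda(x')$; the paper accounts for the left subtree $T(u_s)$ of $s$ by a binary search on $F^t_{u_s}$ at the very first node $u_s$, a query your scheme never performs. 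Second, for \emph{every} spine vertex $S(\mu_i)$ with $1\leq i\leq\tau-1$ (not only $S(\mu_\tau)$), its right subtree hangs off $\pi(s,R(T))$ at $S(\mu_i)$, is attached to the covered path through $S(\mu_i)$, and hence its covered portion is part of $T'_\lambda(x')$. In the paper these portions are retrieved by the Lemma~\ref{lem:rightsubtreequery}-style binary searches on $F^t_{L_u}$ at every node of the search path whose right child is on the path, inside every $B_i$---this is precisely what the flag $b$ governs---whereas you restrict this mechanism to $B_\tau$. An algorithm implementing your decomposition undercounts and fails to report all marked vertices lying below $x'$ through $s$'s children and below the intermediate $S(\mu_i)$'s, so the invariant you propose to prove along the search path cannot hold as stated.

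The repair is to restore the omitted terms: decompose the hanging subtrees off $\pi(s,R(T))$ as the paper does (the subtree rooted at $s$, split into $T(u_s)$ and $s$'s right subtree; the right subtree of each $S(\mu_i)$; and $T(R_u)$ for each off-path right child), and query $F^t_{u_s}$ at $u_s$, $F^t_{L_u}$ at nodes whose right child is on the path, and $F^b_{R_u}$ at nodes whose right child is off the path, with $b$ switched off once some subspine $S(L_u)$ is not fully covered and $a$ switched on once some $S(R_u)$ is not fully covered. With these additions the cost remains one $O(\log n)$ binary search per node of the $O(\log n)$-length search path, so the $O(\log^2 n)$ counting bound and the $O(\log^2 n + K)$ reporting bound follow as you argue.
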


    % It remains to search for node $u_s$ on $\Gamma$. 
    % Because $\Gamma$ is also a decomposition tree of $T$ that is 
    % generated by recursively breaking an edge on the tree. 
    % The following routine can find $u_s$ in $O(\log n)$ time. 
    % Starting from $R(\Gamma)$, we always visit the child 
    % whose attached tree $T(u)$ contains vertex $s$ until 
    % a node of degree $2$ is encountered so that its subspine 
    % is exactly the vertex $s$.  

    % To find $u_s$, we need to determine for any node $u$ of degree $3$ 
    % with $s\in V(T(u))$, which subtree of the two subtrees in $u$'s 
    % children contains $s$. Suppose $E(u)$ is incident to vertex $s'$ and $r'$ 
    % and $r'$ is the parent of $s'$. Because the two subtrees in $u$'s children are 
    % obtained by breaking $T(u)$ at edge $E(u)$. 
    % Hence, if $s'$ is the lowest common ancestor of $s$ and $s'$, 
    % then $s$ is in the subtree of the left child of $u$; 
    % otherwise, $s$ is in the subtree of its right child of $u$. 
    % The lowest common ancestor of any two vertices of $T$ 
    % can be known in $O(1)$ time. 
    % Thus, $u_s$ can be found in $O(\log n)$ time. 

    % Thus, the lemma holds. 
    % \end{proof}

    In a sum, with $O(n\log n)$-time preprocessing work on constructing $\calA_2$, the counting query about $V(T_\lambda(x))$ can be answered in $O(\log^2n)$ time and all its vertices can be found in $O(\log^2n +K)$ time. Hence, Lemma~\ref{lem:querylargestsubtree} is proved. 
    % Hence, Lemma~\ref{lem:Bquerylargestsubtree} is proved.
% \end{proof}

% Regarding to our problem, the goal is to decide whether the largest $x$-inclusive subtree of $T$ is of size at least $k$ or not. Hence, it is not necessary to count all vertices of $V(T_\lambda(x))$. Inspired by it, the following improvement is obtained.   

\subsubsection{The Proof of Corollary~\ref{cor:fasterkquery}}\label{sec:improveA1A2}
% \begin{corollary}\label{cor:fasterkquery}
% For any given point $x$ on $T$, with $O(n\log n)$-time preprocessing work, 
% we can decide in $O(\log n\log k)$ whether $T_\lambda(x)$ is of 
% size at least $k$. 
% \end{corollary}
% \begin{proof}
    Recall that to determine the feasibility of any given $\lambda$, it is sufficient to decide whether there exists a point in $T$ such that the largest self-inclusive subtree covered by it is of size no less than $k$. To this end, for any given point $x$ to test, we only need to decide if $T_\lambda(x)$ has at least $k$ vertices, which leads the following improvements on $\calA_1$ and $\calA_2$. 
    
    % the following improvements to $\calA_1$ and $\calA_2$ apply to only the counting query about $T_\lambda(x)$.  

    When $T$ is a balanced binary search tree, recall that for each vertex $v$ of $T$, we construct function $f(v,x)$ that determines the size of the largest $v$-inclusive subtree covered by $x$ in the subtree $T_v\in T$ rooted at $v$, where $x$ is an `outer' point in $\{v\}\cup T/T_v$ at distance $x$ to $v$. Regarding to the above goal, instead of computing all breakpoints of $f(v,x)$, we only need to find its $k$ smallest breakpoints larger than zero (that is, the breakpoints of the $k$ largest $x$-coordinates smaller than $+\infty$ since $f(v,x)$ monotonically increases as $x$ decreases). The time complexity of constructing $\calA_1$ is still $O(n\log n)$. While determining the feasibility of the given $\lambda$, for any given point $x$ to test, we perform the same algorithm for counting $T_\lambda(x)$ on $\calA_1$ but terminate the query algorithm once the counter becomes no less than $k$. Clearly, we can decide whether $|T_\lambda(x)|\geq k$ in $O(\log n\log k)$ time. As a consequence, for $T$ being a balanced binary search tree, the time complexity of the feasibility test is improved to $O(n\log n\log k)$. 

    When $T$ is a general tree, recall that for each node $u$ of $\calA_2$, array $F^t_u$ (resp., $F^b_u$) is computed that stores all breakpoints of function $f^t(u,x)$ (resp., $f^b(u,x)$), which computes the size of the largest $V^t_u$-inclusive (resp., $V^b_u$-inclusive) subtree covered by point $x$ in $u$'s associated subtree $T(u)\in T$, and values $g^t(u,x)$ (resp., $g^b(u,x)$) at (the $x$-coordinate of) each of these breakpoints, which computes the number of marked vertices in that largest $V^t_u$-inclusive (resp., $V^b_u$-inclusive) subtree covered by $x$, as well as set $Q^t(u,x)$ (resp., $Q^b(u,x)$) of additional marked vertices covered by $x$, where $x$ is an `outer' point in subtree $V^t_u\cup T/T_{V^t_u}$ (resp., subtree $T_{V^b_u}$) at distance $x$ to $V^t_u$ (resp., $V^b_u$). 
    
    Regarding to the feasibility test, in the preprocessing, for each node $u$ of $\calA_2$, we construct only functions $f^t(u,x)$ and $g^t(u,x)$ as well as functions $f^b(u,x)$ and $g^b(u,x)$ for $T(u)$ as in Lemma~\ref{lem:constructA2}. Because $g^t(u,x)$ (resp., $g^b(u,x)$) is a piecewise constant function that monotonically increases as $x$ decreases, and breaks only at a subset of $x$-values of breakpoints of $f^t(u,x)$ (resp., $f^b(u,x)$). 
    For each $u\in\calA_2$, we scan array $F^t_u$ (resp., $F^b_u$) to compute a sequence that contains indices of all entries in $F^t_u$ (resp., $F^b_u$) where $g^t(u,x)$ (resp., $g^b(u,x)$) breaks at their $x$-values and is larger than zero, but store only the first $k$ indices in another array $G^t_u$ (resp., $G^b_u$). Indeed, $G^t_u$ (resp., $G^b_u$) represents the ordered set of the $k$ smallest breakpoints of $g^t(u,x)$ (resp., $g^b(u,x)$) larger than zero. Clearly, the preprocessing time remains $O(n\log n)$. 
    
    % Regarding to the feasibility test, however, we need only the number of marked vertices in the largest subtrees, i.e., values $g^t(u,x)$ and $g^b(u,x)$. 

    Furthermore, while determining the feasibility of $\lambda$, for any given point $x$, we compute its corresponding point $x'$ on the binary transformation and then perform the algorithm for counting $T'_\lambda(x')$ as in Lemma~\ref{lem:countA2} except that for each node $u\in \calA_2$ to process, we perform the binary search on array $G^t_{u}$ or $G^b_{u}$, instead of array $F^t_{u}$ or $F^b_{u}$, 
    to decide how many marked vertices of $T(u)$ are in $T'_\lambda(x')$. Specifically, for each entry in $G^t_{u}$ or $G^b_{u}$ to access, we visit its corresponding entry in $F^t_{u}$ or $F^b_{u}$ and compare the $x$-value in the stored tuple with $d(x', V^t_u)$ or $d(x', V^b_u)$, accordingly.

    % If the $x$-value for the last entry in $G^t_{u}$ (resp., $G^b_{u}$) is larger than or equal to $d(x', G^t_u)$ (resp., $d(x', G^b_u)$), 
    % then $T'_\lambda(x')$ contains at least $k$ marked vertices in $T(u)$; if so, $\lambda$ is feasible and we immediately terminate 
    % the querying algorithm. 
    
    % Otherwise, 
    
    An index $i'$ with $i'\leq k$ is obtained at the end. By its corresponding entry in $F^t_{u}$ or $F^b_{u}$, 
    we increase the counter by the $z$-value of the stored tuple, and then check if the counter is no less than $k$. If yes, $\lambda$ is feasible and hence we terminate the query algorithm. 
    Otherwise, we find the entry $j'$ in $F^t_{u}$ or $F^b_{u}$ corresponding to the next entry $i'+1$ of $G^t_{u}$ 
    or $G^b_{u}$, which can be done in $O(1)$ time. Note that $i'<k$ since otherwise, the counter is at least $k$. Next, we compare $j'-1$ with $i^t_{u}$ or $i^b_{u}$ accordingly 
    to determine if $T'_\lambda(x')$ contains the whole subspine $S(u)$. This is because the total number of (marked and unmarked) vertices 
    in $T(u)$ that belong to $T'_\lambda(x')$ is $F^t_{u}[j'-1].y$ or $F^b_{u}[j'-1].y$, accordingly. 
    
    It is clear to see that during the traceback on path $\pi(u_s,R(\calA_2))$, we spend only $O(\log k)$ time on 
    processing each node. Hence, for any given $x$ of $T$, we can decide whether $|V(T_\lambda(x))|\geq k$ 
    in $O(\log n\log k)$ time. It follows that the feasibility of any given $\lambda$ 
    can be decided in $O(n\log n\log k)$ time. Thus, the corollary is proved. 
    % \end{proof}

% By this corollary, we have the following result.

% \begin{lemma}\label{lem:treedecision}
%     The feasibility test on trees can be solved in $O(n\log n\log k)$ time.
% \end{lemma}

\subsection{Computing $\lambda^*$}\label{sec:treecomputelambda}

Observation~\ref{obs:lambda_belongs_oneset} in Section~\ref{sec:preliminary} shows that 
$\lambda^*$ is in the set $\Lambda$ that consists of values generated by solving $D(v,x) = D(u,x)$ for every pair of vertices $u,v$ w.r.t. $x\in \pi(u,v)$. 
Unlike the graph version, we employ the centroid decomposition~\cite{ref:MegiddoAn81} to 
implicitly enumerate every pair of vertices such that only $O(n\log n)$ linear functions $y = D(v,x)$ are needed so that $\Lambda$ belongs to the set of the $y$-coordinates of all their intersections. 

The \textit{centroid} of a tree is a vertex at which the tree can be decomposed into three or fewer subtrees with only this common vertex such that each of them is of size at most half of the tree. The centroid of a tree and these subtrees can be found in linear time~\cite{ref:GoldmanMi72,ref:MegiddoAn86}. 
As shown in~\cite{ref:MegiddoAn81}, by recursively decomposing $T$ at its centroid, a decomposition tree of height $O(\log n)$ can be constructed in $O(n\log n)$ time: $T$ is stored in the root; for each internal node, the subtrees of $T$ that are stored in all its (at most three) children nodes are generated by decomposing the subtree in this node at the centroid; every leaf maintains a vertex of $T$ uniquely. 

Similarly, we find the centroid $c$ of $T$ and decompose it into three subtrees $T_1$, $T_2$, and $T_3$. 
Consider $c$ at the origin of the $x,y$-coordinate plane. 
For each vertex $v\in V$, let $v_l$ and $v_r$ be the two points
on the $x$-axis at distance $d(c,v)$ to $c$ (the origin) respectively on its left and right, 
and we construct the distance function $y = w_v\cdot(x-x(v_l))$ for $v_l$ 
and $y = w_v\cdot(x(v_r)-x)$ for $v_r$. The set of the $y$-coordinates 
of all intersections between the obtained $2n$ lines contains all values in $\Lambda$ 
caused by every pair of vertices from different subtrees of $T_1$, $T_2$, and $T_3$.

We recursively decompose $T_1$, $T_2$, and $T_3$ respectively at their centroids, 
and construct distance functions for vertices w.r.t. each centroid. Thus, in $O(n\log n)$ time, $O(n\log n)$ linear functions are derived so that $\Lambda$ belongs to the set of the $y$-coordinates of their intersections. 

At this point, we can adapt Lemma~\ref{lem:linearrangementtechnique} 
with the assistance of Lemma~\ref{lem:treedecision} to find $\lambda^*$ 
among the $y$-coordinates of all intersections between the $O(n\log n)$ lines, which runs totally in $O(n\log^2 n\log k)$ time. 
The following result is thus obtained. 

\begin{theorem}\label{the:weightedtree}
    The weighted connected $k$-vertex one-center problem on trees can be solved in $O(n\log^2 n\log k)$ time. 
\end{theorem}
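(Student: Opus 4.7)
The plan is to combine three ingredients already built up in the paper: Observation~\ref{obs:lambda_belongs_oneset}, which pins $\lambda^*$ down to the finite set $\Lambda$ of (weighted) one-center values of pairs of vertices on their joining path; the centroid decomposition of $T$, which will let us represent $\Lambda$ implicitly as $y$-coordinates of intersections of only $O(n\log n)$ lines rather than $\Theta(n^2)$ lines; and the parametric/line-arrangement search of Lemma~\ref{lem:linearrangementtechnique} driven by the feasibility test of Lemma~\ref{lem:treedecision}. Putting these together, $\lambda^*$ will be identified as the lowest vertex of the line arrangement whose $y$-coordinate is feasible.

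First I would build the centroid decomposition of $T$ in $O(n\log n)$ time~\cite{ref:MegiddoAn81}, recording at each recursive level the centroid $c$ and the (at most three) subtrees that share only $c$. At every centroid $c$, I place $c$ at the origin of an auxiliary $x,y$-plane and, for each vertex $v$ of the subtree currently being decomposed, emit the two lines $y=w_v(x-x(v_l))$ and $y=w_v(x(v_r)-x)$ described in Section~\ref{sec:treecomputelambda}, where $v_l,v_r$ are the two points on the $x$-axis at distance $d(c,v)$ from the origin. A standard argument shows that every pair of vertices $u,v\in V$ belongs to exactly one centroid level at which $u$ and $v$ lie in different subtrees of the current centroid; at that level the unique intersection of the lines associated with $u$ and $v$ (with slopes of opposite signs) has $y$-coordinate equal to the value in $\Lambda$ generated by the pair $(u,v)$ on $\pi(u,v)$. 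Because each vertex contributes two lines per level and the decomposition has $O(\log n)$ levels, the resulting line set $L$ has $N=O(n\log n)$ lines, computable in $O(n\log n)$ time, and $\Lambda$ is contained in the $y$-coordinates of the vertices of the arrangement $\calA(L)$.

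Next I invoke Lemma~\ref{lem:linearrangementtechnique} on $L$ to locate the lowest vertex $v_1(L)$ of $\calA(L)$ whose $y$-coordinate is feasible, using Lemma~\ref{lem:treedecision} as the feasibility oracle of running time $\tau=O(n\log n\log k)$. By Observation~\ref{obs:lambda_belongs_oneset}, $\lambda^*\in\Lambda$ and every element of $\Lambda$ is a $y$-coordinate of some vertex of $\calA(L)$, so $y(v_1(L))=\lambda^*$. The total time is
\[
O\bigl((N+\tau)\log N\bigr)=O\bigl((n\log n+n\log n\log k)\log(n\log n)\bigr)=O(n\log^2 n\log k),
\]
which matches the claim. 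Once $\lambda^*$ is known, an additional application of the feasibility test recovers $x^*$ and $T^*$ within the same time bound.

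The main obstacle is the reduction from $\Theta(n^2)$ candidate pair-values to an $O(n\log n)$-size implicit representation while still guaranteeing that the parametric search locates $\lambda^*$ correctly. The key technical point to verify is that for every pair $u,v$, the unique $y$-value in $\Lambda$ that they contribute coincides with the $y$-coordinate of the intersection of the two lines emitted at the centroid that separates $u$ from $v$: this follows because on the $x$-axis the distance between $u$'s representative point on $c$'s side of $u$ and $v$'s representative point on $c$'s side of $v$ equals $d(u,v)$ exactly when $u$ and $v$ lie in distinct centroid-subtrees, so the intersection of the two slope-opposite lines occurs at the center of $\{u,v\}$ on $\pi(u,v)$. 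The remaining pieces — computing $L$, running the parametric search, and plugging in the tree feasibility test — are then routine given the earlier lemmas.
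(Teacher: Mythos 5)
Your proposal is correct and follows essentially the same route as the paper: pin $\lambda^*$ to $\Lambda$ via Observation~\ref{obs:lambda_belongs_oneset}, use the centroid decomposition to represent $\Lambda$ implicitly by $O(n\log n)$ lines, and run the line-arrangement search of Lemma~\ref{lem:linearrangementtechnique} with the $O(n\log n\log k)$-time feasibility test of Lemma~\ref{lem:treedecision}, giving $O(n\log^2 n\log k)$ total time. The only cosmetic difference is that you argue explicitly that the separating-centroid intersection realizes each pair's one-center value, a point the paper leaves implicit.
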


When the vertices of $T$ are all weights one, the following theorem computes $\lambda^*$ and $x^*$ in $O(n\log^2 n)$ time. 
   
\begin{theorem}\label{the:unweightedtree}
    The unweighted connected $k$-vertex one-center problem on trees can be solved in $O(n\log^2 n)$ time.  
\end{theorem}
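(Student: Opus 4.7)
The plan is to follow the three-step strategy of Theorem~\ref{the:weightedtree} and to save the $\log k$ factor by exploiting unit vertex weights. By Observation~\ref{obs:lambda_belongs_oneset} the candidate set reduces to $\Lambda\subseteq\{\tfrac12 d(u,v):u,v\in V\}$, and by Observation~\ref{obs:smallestdiameterequivalency} the optimum coincides with a minimum-diameter $k$-subtree. As in Section~\ref{sec:treecomputelambda} I would first apply Megiddo's centroid decomposition of $T$~\cite{ref:MegiddoAn81} to obtain a decomposition tree of depth $O(\log n)$ that charges each pair $(u,v)$ to the unique separating centroid $c$ with $d(u,v)=d(u,c)+d(c,v)$; precomputing at every $c$ the sorted list of distances from $c$ to vertices in its associated subtree then produces, in total $O(n\log n)$ time, a family of $O(n\log n)$ implicitly sorted arrays whose union contains $\Lambda$.

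I would then search for $\lambda^*$ by a global parallel binary search over this implicit collection rather than the line-arrangement search used in the weighted case. In each of $O(\log n)$ rounds, a weighted median of the currently live middles of the $O(n\log n)$ arrays is computed in $O(n\log n)$ time, a single feasibility test decides whether this median is $\ge\lambda^*$, and each array's active range is halved accordingly. After the rounds complete, $O(n\log n)$ candidates survive and a final scan returns $\lambda^*$, giving $O(n\log^2 n)$ overall provided each feasibility call costs $O(n\log n)$.

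The main obstacle is thus to strengthen the feasibility test from the $O(n\log n\log k)$ bound of Lemma~\ref{lem:treedecision} to $O(n\log n)$ under unit weights. The saving rests on the fact that when $w_v\equiv 1$, for any critical point $x\in Q_\lambda$ the set $V(T_\lambda(x))$ is exactly the vertices within tree-distance $\lambda$ of $x$, so the early-termination $\log k$ factor used in Corollary~\ref{cor:fasterkquery} becomes unnecessary when one instead batch-counts distance balls. I would reuse the centroid decomposition from the first step as a distance-ball-counting structure and process the $O(n)$ critical points of $Q_\lambda$ in a single DFS-ordered sweep rather than querying them one by one: at each centroid $c$, the contributions of vertices of $T_c$ to the balls centred at all critical points lying in $T_c$ are accumulated by a single sorted merge of the precomputed distance list against the sorted list of residual radii $\lambda-d(c,x)$, costing $O((|T_c|+|Q_\lambda\cap T_c|)\log n)$; summing over the decomposition gives $O(n\log n)$ per feasibility call.

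Combining these three ingredients yields the $O(n\log^2 n)$ time bound for computing $\lambda^*$, after which $x^*$ and $T^*$ are recovered by one additional feasibility-test-like sweep at $\lambda=\lambda^*$, proving the theorem.
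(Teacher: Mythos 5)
Your overall plan---restrict the candidates to pairwise half-distances via the centroid decomposition, then run a Frederickson--Johnson-style parallel binary search driven by a feasibility test---is coherent, and your key observation is correct: for unit weights on a tree, $V(T_\lambda(x))$ is exactly the set of vertices within distance $\lambda$ of $x$, so testing feasibility amounts to asking whether some critical point has a ball of size at least $k$. (For contrast, the paper takes a different route for this theorem: it performs no feasibility tests at all, but selects for every vertex its $k$-th smallest intervertex distance directly from the implicitly formed sorted arrays and takes the minimum, recovering $x^*$ from the pair realizing it.) The theorem's bound in your scheme, however, stands or falls with the claimed $O(n\log n)$ feasibility test, and that claim is not substantiated.

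Concretely: at each centroid $c$ you need, for the current $\lambda$, the critical points of $Q_\lambda$ lying in $T_c$ sorted by their distance to $c$ in order to do a single merge against the precomputed sorted distance list of $T_c$; otherwise you must binary-search each residual radius $\lambda-d(c,x)$ individually. Since $Q_\lambda$ changes with $\lambda$, these sorted orders cannot be precomputed once and reused; sorting them (or binary-searching point by point) costs $\sum_c O\bigl(|Q_\lambda\cap T_c|\log n\bigr)=O(n\log^2 n)$ per call. Indeed, your own per-node bound $O\bigl((|T_c|+|Q_\lambda\cap T_c|)\log n\bigr)$ already sums to $\Theta(n\log^2 n)$ per feasibility call, because $\sum_c\bigl(|T_c|+|Q_\lambda\cap T_c|\bigr)$ is itself $\Theta(n\log n)$; it does not sum to $O(n\log n)$ as you assert. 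With the $\Theta(\log n)$ feasibility calls required by the parametric search, the algorithm as described runs in $O(n\log^3 n)$, one logarithmic factor above the claimed bound. To rescue the approach you would need either a genuinely $O(n\log n)$ unweighted feasibility test (which requires an additional idea, e.g., a sweep that avoids re-sorting the $\lambda$-dependent critical points at every centroid for every tested value) or a search strategy that invokes the test only $O(1)$ times; neither is supplied in the proposal.
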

\begin{proof}
Observation~\ref{obs:lambda_belongs_oneset} and~\ref{obs:smallestdiameterequivalency} imply that the unweighted tree version aims to compute a $k$-subtree of smallest diameter so that its diameter, which is its longest path length, is exactly $2\lambda^*$, and $x^*$ is the center of this longest path. Hence, computing $\lambda^*$ is equivalent to solving the problem of finding the vertex so that the distance from it to its $k$-th closest vertex is smallest.

Our algorithm is simple and consists of three steps: 
First, we implicitly form the intervertex distance subsets for each vertex by employing the centroid decomposition~\cite{ref:MegiddoAn81}. Second, we find the length of that $k$-th shortest path for every vertex. Last, we compute the smallest value of them, which is exactly $2\lambda^*$, and find the center $x^*$. 

The intervertex distance subsets for each vertex of $V$ are 
implicitly formed in a similar way as in~\cite{ref:MegiddoAn81}, 
which is for computing the $k$-th longest path on a tree. 
More specifically, $T$ is decomposed at its centroid $c$ into three or 
fewer subtrees, e.g., $T_1$, $T_2$, and $T_3$. Then, three sorted subsets 
$L_1, L_2$ and $L_3$ are explicitly formed so that 
for each $1\leq i\leq 3$, $L_i$ is the ordered set of the distances from $c$ to every other vertex of $T_i$. 
Next, for each vertex $v$ of $T_1/\{c\}$, a sorted subset is implicitly formed 
for $v$'s distance to every vertex in $T_2/\{c\}$ by adding $d(v,c)$ to $L_2$, 
and another sorted subset is implicitly formed for $v$'s distances 
to every vertex of $T_3/\{c\}$ by adding $d(v,c)$ to $L_3$. 
Additionally, for each vertex of $T_2/\{c\}$, two sorted subsets are implicitly 
generated for its distances to vertices respectively in $T_1/\{c\}$ 
and $T_3/\{c\}$. Similarly, two sorted subsets are implicitly formed 
for each vertex in $T_3/\{c\}$. Clearly, at most $3n$ sorted subsets are generated but the total storage space is $O(n)$. Due to the sorting work on $L_1$, $L_2$, and $L_3$, the time complexity is $O(n\log n)$. 

We proceed to recursively decompose each of the three subtrees $T_1$, $T_2$, and $T_3$ at their own centroids, and form sorted subsets as the above for each subtree. To the end, $O(n\log n)$ sorted subsets are generated, which implicitly enumerate the intervertex distances for every vertex of $V$ but take $O(n\log n)$ space in total. Note that the value of every entry in any subset can be known in constant time. 

Further, we compute for every vertex the length of its $k$-th shortest path. More specifically, for each vertex $v$, let $n_v$ represent the number of all intervertex distance subsets of $v$. Since every subset of $v$ is sorted and each entry can be accessed in constant time, 
its $k$-th shortest path length can be found in $O(n_v\log n)$ time by the algorithm~\cite{ref:FredericksonTh82}, 
which is for finding the $k$-th smallest value of multiple sorted arrays. Among all obtained values, we set $\lambda^*$ as the smallest one. The total running time is $O(\sum_{v\in V} n_v\log n)$, which is $O(n\log^2 n)$. 

By the corresponding entry of $\lambda^*$ in the subsets, the two vertices and their path that decide $\lambda^*$ can be obtained in $O(n)$ time. Consequently, $x^*$, which is exactly the center of this path, can be found in $O(n)$ time. It follows that the optimal $k$-subtree $T^*$, induced by the $k$ closest vertices of $x^*$, can be obtained in $O(n)$ time by reporting $k$ vertices within distance $\lambda^*$ to $x^*$ during the pre-order traversal on $T$ from $x^*$. 

In a sum, we can find the partial center $x^*$ of $T$ and $T^*$ in $O(n\log^2 n)$ time. The theorem thus follows.\qed
\end{proof}

% \section*{Acknowledgements} 
% This research was supported in part by U.S. National Science Foundation under Grant CCF-2339371.

% \section*{Declarations}

% Conflict of interest: The authors declare that they have no conflict of interest.

\section{Conclusion}

\begin{figure}
    \centering
\includegraphics[width=0.26\textwidth]{}
        \caption{Illustrating Corollary~\ref{cor:proveTsmallestdiameter} does not apply to the general weighted case by this example: $k=3$, $w_{v_1} =100$, $w_{v_2}=w_{v_3} =50$, 
        $w_{v_4}=w_{v_5} =1$, and all edge lengths are as labeled.  Clearly, $x^* = v_1$, $\lambda^*=50$, and $T^*$ is induced by $\{v_1, v_2, v_3\}$, so $W(T^*) = 50$; however, the subgraph induced by $\{v_1, v_2, v_3\}$ is a $3$-subtree of smallest (weighted) diameter which is $3$.}
        \label{fig:counterexample}
\end{figure}
% In this paper, we address the connected $k$-vertex one-center problem that aims to compute a point, i.e., the partial center $x^*$, on an undirected graph $G$ so that the maximum (weighted) distance between it and $k$ vertices connected in a shortest path tree of this point is minimized. Provided that the distance matrix is given, the weighted case can be solved in $O(mn\log n \log mn + m^2\log n\log mn)$ time, and the unweighted case can be solved in $O(mn\log n)$ time. When $G$ is a tree graph, we can solve the problem in $O(n\log^2n\log k)$ time, and solve the unweighted case in $O(n\log^2n)$ time. 

Recall that in Section~\ref{sec:preliminary}, we proved in Corollary~\ref{cor:proveTsmallestdiameter} that for the unweighted version or $G$ being a tree, solving the problem is equivalent to finding the $k$-subtree of smallest (weighted) diameter in $G$. However, it does not apply to the weighted version on a general graph. See Fig.~\ref{fig:counterexample} for an example where the optimal $k$-subtree $T^*$ covered by $x^*$ is not of smallest weighted diameter among all $k$-subtrees. To the best of our knowledge, no previous work has been found for finding a $k$-subtree in a general graph with smallest weighted diameter. So, it is quite interesting to explore this graph problem in future.

\bibliographystyle{splncs04}

\end{document}